\renewcommand\paragraph{\@startsection{paragraph}{4}{\z@}%
  {2.25ex \@plus 1ex \@minus .2ex}%
  {-0.75em}%
  {\normalfont\normalsize\bfseries}}
\newlist{inlinelist}{enumerate*}{1}
\setlist*[inlinelist,1]{%
  label=(\roman*),
}
\definecolor{listingBG}{HTML}{FFFFCB}%
\definecolor{listingFrame}{HTML}{BBBB98}%
\definecolor{listingLineno}{rgb}{0.5,0.5,1.0}%
\definecolor{LightGrey}{rgb}{0.975,0.975,0.975}
\lstdefinelanguage{tins}{
	commentstyle=\color{Gray},
	morecomment=[l]{//},
	morecomment=[s]{/*}{*/},
	classoffset=0,
	morekeywords={skip,throw,if,then,else,while,do},
	keywordstyle=\color{Black}\bfseries,
	classoffset=1,
	morekeywords={sender,value,balance,undef},
	keywordstyle=\valColor{}\bfseries\itshape,
	classoffset=2,
	morekeywords={contract},
	keywordstyle=\funColor{}\bfseries,
        extendedchars=true,
        literate={\$}{{\textbf{{\dollar}}}}1
}
\definecolor{LightGrey}{rgb}{0.975,0.975,0.975}
\lstdefinelanguage{solidity}{
	commentstyle=\color{Gray},
	morecomment=[l]{//},
	morecomment=[s]{/*}{*/},
	classoffset=0,
        escapechar=\$,
	morekeywords={struct,mapping,function,this,public,private,static,final,class,extends,switch,case,break,finally,try,catch,return,if,else,new},
	keywordstyle=\color{Blue}\bfseries,
	classoffset=1,
	morekeywords={unit,int,string,bool,address,uint,uint256},
	keywordstyle=\color{TealBlue},
	classoffset=2,
	morekeywords={ether,wei,finney,contract,send,throw,msg,sender,value},
	keywordstyle=\color{Plum}\bfseries,
}
\newcommand{\ifempty}[3]{%
  \ifthenelse{\isempty{#1}}{#2}{#3}%
}
\newcommand{\ifzero}[3]{%
  \ifthenelse{\equal{#1}{0}}{#2}{#3}%
}
\newcommand{\ifdots}[3]{%
  \ifthenelse{\equal{#1}{...}}{#2}{#3}%
}
\newcommand{\hidden}[1]{}
\newcommand{\eqdef}{\triangleq}
\renewcommand{\vec}[1]{\boldsymbol{#1}}
\newcommand{\relR}{\mathcal{R}}
\newcommand{\Real}[1]{\mathrm{Real}}
\newcommand{\codefont}{\fontsize{9}{9}\selectfont}
\newcommand{\code}[1]{{\tt\codefont {#1}}}
\newcommand{\codeAddr}[1]{\ensuremath{\code{\aColor{#1}}}}
\newcommand{\codeVal}[1]{\ensuremath{\code{\valColor{#1}}}}
\newcommand{\codeFun}[1]{\ensuremath{\code{\funColor{#1}}}}
\newcommand{\dollar}{{\textup{\texttt{\symbol{`\$}}}}}
\newcommand{\codeand}{\textup{\texttt{\symbol{`\&}\symbol{`\&}}}}
\newcommand{\eg}{e.g.\@\xspace}
\newcommand{\ie}{i.e.\@\xspace}
\newcommand{\wrt}{w.r.t.\@\xspace}
\newcommand{\BTC}{\textup{%
  \leavevmode
  \vtop{\offinterlineskip 
    \setbox0=\hbox{B}%
    \setbox2=\hbox to\wd0{\hfil\hskip-.03em
    \vrule height .3ex width .15ex\hskip .08em
    \vrule height .3ex width .15ex\hfil}
    \vbox{\copy2\box0}\box2}}\xspace}
\newcommand{\true}{\code{true}\xspace}
\def\aColor{\color{ForestGreen}}
\def\pColor{\color{ForestGreen}}
\def\cColor{\color{ForestGreen}}
\newcommand{\Addr}{{\aColor{\textup{\textbf{Addr}}}}}
\newcommand{\Val}{{\valColor{\textup{\textbf{Val}}}}}
\newcommand{\Const}{{\valColor{\textup{\textbf{Const}}}}}
\newcommand{\Tx}{{\valColor{\textup{\textbf{Tx}}}}}
\newcommand{\addrToContr}[1]{{\Gamma}\ifempty{#1}{}{({#1})}}
\newcommand{\aFmt}[1]{{\aColor{\mathcal{#1}}}}
\newcommand{\pFmt}[1]{{\pColor{\mathcal{#1}}}}
\newcommand{\cFmt}[1]{{\cColor{\mathcal{#1}}}}
\newcommand{\amv}[2][]{\aFmt{#2}_{\aColor{#1}}\xspace}
\newcommand{\amvA}[1][]{\amv[{#1}]{X}}
\newcommand{\amvB}[1][]{\amv[{#1}]{Y}}
\newcommand{\pmv}[2][]{\pFmt{#2}_{\pColor{#1}}\xspace}
\newcommand{\pmvA}[1][]{\pmv[{#1}]{A}}
\newcommand{\pmvB}[1][]{\pmv[{#1}]{B}}
\newcommand{\cmv}[2][]{\cFmt{#2}_{\cColor{#1}}\xspace}
\newcommand{\cmvA}[1][]{\cmv[{#1}]{C}}
\newcommand{\cmvB}[1][]{\cmv[{#1}]{D}}
\newcommand{\qmv}[2][]{{#2}_{#1}}
\newcommand{\qmvA}[1][]{\qmv[{#1}]{p}}
\newcommand{\qmvB}[1][]{\qmv[{#1}]{q}}
\newcommand{\QmvA}[1][]{P_{#1}}
\newcommand{\QmvAi}[1][]{P'_{#1}}
\newcommand{\QmvB}[1][]{Q_{#1}}
\newcommand{\QmvC}[1][]{R_{#1}}
\newcommand{\QmvU}[1][]{\mathbb{P}}
\definecolor{BlueViolet}{rgb}{0, 0, 0.55}
\definecolor{RubineRed}{rgb}{0.88, 0.07, 0.37}
\definecolor{ForestGreen}{rgb}{0.13, 0.55, 0.13}
\definecolor{NavyBlue}{rgb}{0.0, 0.0, 0.5}
\definecolor{Black}{rgb}{0.02, 0.02, 0.02}
\definecolor{MidnightBlue}{rgb}{0.0, 0.2, 0.4}
\definecolor{Gray}{rgb}{0.41, 0.41, 0.41}
\def\txColor{\color{MidnightBlue}}
\newcommand{\txFmt}[1]{{\txColor{\sf #1}}}
\newcommand{\tx}[2][]{\txFmt{#2}_{\txColor{#1}}}
\newcommand{\txT}[1][]{\tx[#1]{T}}
\newcommand{\txTi}[1][]{\txFmt{T'_{\txColor{{#1}}}}}
\DeclareMathAlphabet{\mathbfsf}{\encodingdefault}{\sfdefault}{bx}{n}
\newcommand{\bcEmpty}{{\mathbfsf{\txColor{\epsilon}}}}
\newcommand{\bcB}[1][]{{\boldsymbol{\mathsf{\txColor{B}}}}_{\txColor{#1}}}
\newcommand{\bcBi}[1][]{{\boldsymbol{\txColor{\sf B'_{\txColor{\mathrm{\textup{#1}}}}}}}}
\newcommand{\bcSt}[1][]{\sigma_{#1}}
\newcommand{\bcSti}[1][]{\sigma_{#1}'}
\newcommand{\bcStii}[1][]{\sigma_{#1}''}
\newcommand{\bcStInit}{\sigma^{\star}}
\newcommand{\bcEnv}[1][]{\rho_{#1}}
\newcommand{\bcKV}[1][]{\delta_{#1}}
\newcommand{\balanceInit}[1]{\valN^{\star}_{#1}}
\newcommand{\mrg}{\oplus}
\newcommand{\mSubst}[1][]{\pi_{#1}}
\newcommand{\mSubsti}[1][]{\pi'_{#1}}
\newcommand{\WR}[1]{\ifempty{#1}{\Pi}{\Pi(#1)}}
\newcommand{\WRi}[1]{\ifempty{#1}{\Pi'}{\Pi'(#1)}}
\newcommand{\WRmin}[1]{\ifempty{#1}{\Pi^{\star}}{\Pi^{\star}(#1)}}
\newcommand{\trans}[1]{\xrightarrow{#1}}
\newcommand{\nottrans}[1]{\centernot{\xrightarrow{#1}}}
\newcommand{\pre}[1]{{}^{\bullet}{#1}}
\newcommand{\post}[1]{{#1}{{}^{\bullet}}}
\newcommand{\irule}[2]{\dfrac{#1}{#2}}
\newcommand{\mapstopart}{\rightharpoonup}
\newcommand{\nrule}[1]{{\scriptsize \textsc{#1}}}
\newcommand{\smallnrule}[1]{{\tiny \textsc{#1}}}
\newcommand{\sem}[2][]{\mbox{\ensuremath{\llbracket{#2}\rrbracket_{#1}}}}
\newcommand{\msem}[3]{\mbox{\ensuremath{\llbracket{#3}\rrbracket^{#1}_{#2}}}}
\newcommand{\semCmd}[3]{\mbox{\ensuremath{\llbracket{#1}\rrbracket_{#2}^{#3}}}}
\newcommand{\semTx}[2]{\sem[#2]{{#1}}}
\newcommand{\semBc}[2]{\sem[#2]{{#1}}}
\newcommand{\equivStSeq}[1][]{\simeq_{#1}}
\newcommand{\equivSeq}{\simeq}
\newcommand{\dom}[1]{\operatorname{dom} {#1}}
\newcommand{\keys}[1]{\operatorname{keys}({#1})}
\newcommand{\Nat}{\mathbb{N}}
\newcommand{\bind}[2]{\nicefrac{#2}{#1}}
\newcommand{\setenum}[1]{\{#1\}}
\newcommand{\setcomp}[2]{\left\{{#1} \,\middle|\, {#2}\right\}}
\newcommand{\emptymset}{[]}
\newcommand{\msetenum}[1]{\lbrack{#1}\rbrack}
\newcommand{\card}[1]{|#1|}
\newcommand{\qedef}{\ensuremath{\diamond}}
\newcommand{\qedhere}{\ensuremath{\diamond}}
\crefname{appendix}{appendix}{appendices}
\Crefname{appendix}{Appendix}{Appendices}
\crefname{notation}{notation}{notations}
\Crefname{notation}{Notation}{Notations}
\definecolor{LightGrey}{rgb}{0.95,0.95,0.95}
\definecolor{keyword}{HTML}{7F0055}
\newlength\replength
\newcommand\repfrac{.1}
\newcommand\rulewidth{.6pt}
\newcommand\tdashfill[1][\repfrac]{\cleaders\hbox to \replength{%
  \smash{\rule[\arraystretch\ht\strutbox]{\repfrac\replength}{\rulewidth}}}\hfill}
\newcommand\tdotfill[1][\repfrac]{\cleaders\hbox to \replength{%
  \smash{\raisebox{\arraystretch\dimexpr\ht\strutbox-.1ex\relax}{.}}}\hfill}
\newcommand{\var}[2][]{#2_{#1}} 
\newcommand{\varX}[1][]{\var[#1]{x}} 
\newcommand{\varY}[1][]{\var[#1]{y}}
\def\funColor{\color{RubineRed}}
\newcommand{\funFmt}[1]{{\funColor{\mathtt{#1}}}}
\newcommand{\funF}[1][]{\funFmt{f}_{\funColor{#1}}} 
\newcommand{\funG}[1][]{\funFmt{g}_{\funColor{#1}}} 
\newcommand{\funH}[1][]{\funFmt{h}_{\funColor{#1}}}
\def\valColor{\color{BlueViolet}}
\newcommand{\val}[2][]{{\valColor{#2_{#1}}}} 
\newcommand{\valV}[1][]{\val[#1]{v}}
\newcommand{\valVi}[1][]{\val[#1]{v'}}
\newcommand{\valN}[1][]{\val[#1]{n}}
\newcommand{\valNi}[1][]{\val[#1]{n'}}
\newcommand{\valK}[1][]{\val[#1]{k}}
\newcommand{\valKi}[1][]{\val[#1]{k'}}
\newcommand{\valX}[1][]{\val[#1]{x}}
\newcommand{\valY}[1][]{\val[#1]{y}}
\newcommand{\valZ}[1][]{\val[#1]{z}}
\def\cmdColor{\color{RubineRed}}
\newcommand{\cmdFmt}[1]{{\cmdColor{\mathit{#1}}}}
\newcommand{\cmdC}[1][]{\mathord{\cmdFmt{S}_{\cmdColor{#1}}}}
\newcommand{\cmdSkip}{\ensuremath{\code{skip}}}
\newcommand{\cmdAss}[2]{{#1} \code{:=} {#2}}
\newcommand{\cmdIfTE}[3]{\code{if}\, {#1}\, \code{then} \, {#2} \, \code{else} \, {#3}}
\newcommand{\cmdIfT}[2]{\code{if} \, {#1} \, \code{then} \, {#2}}
\newcommand{\cmdCall}[4][]{{#2}\ifempty{#3}{}{:{#3}({#4})}\ifempty{#1}{}{\dollar {#1}}}
\newcommand{\cmdSend}[2]{{#2}.\texttt{transfer}({#1}){}{}}
\newcommand{\cmdThrow}{\code{throw}\xspace}
\newcommand{\expFmt}[1]{{\cmdColor{\mathit{#1}}}}
\newcommand{\expEi}[1][]{\mathord{\expFmt{E'}}}
\newcommand{\expEii}[1][]{\mathord{\expFmt{E''}}}
\newcommand{\expGet}[2]{{#1}.{#2}} 
\newcommand{\expLookup}[1]{{#1}}
\definecolor{LightGrey}{rgb}{0.95,0.95,0.95}
\definecolor{keyword}{HTML}{7F0055}
\newmdenv[linewidth=0pt]{mdNoFramed}
\newcommand*{\tabminted@finalstrut}[1]{%
  \ifdim\prevdepth>0pt
    \ifdim\dp#1>\prevdepth
      \vskip\dimexpr(\dp#1)-\prevdepth\relax
    \fi
  \else
    \vskip\dimexpr(\dp#1)\relax
  \fi
}
\newcommand*{\@tabmintedend}{%
  \let\@finalstrut\tabminted@finalstrut
}
\newcommand{\netN}{\sf{N}}
\newcommand{\Places}{{\sf P}}
\newcommand{\Transitions}{{\sf Tr}}
\newcommand{\Arcs}{{\sf{F}}}
\newcommand{\markM}[1][]{{\sf{m}_{#1}}}
\newcommand{\markMi}[1][]{{\sf{m}'_{#1}}}
\newcommand{\markMii}[1][]{{\sf{m}''_{#1}}}
\newcommand{\markMiii}[1][]{{\sf{m}}'''_{#1}}
\newcommand{\trT}[1][]{\mathsf{t}_{#1}}
\newcommand{\trTi}[1][]{{{\sf{t}}_{#1}'}}
\newcommand{\trTii}[1][]{{{\sf{t}}_{#1}''}}
\newcommand{\plP}[1][]{{\sf{p_{#1}}}}
\newcommand{\placeP}[1][]{{\sf{p_{#1}}}}
\newcommand{\placePi}[1][]{{\sf{p'_{#1}}}}
\newcommand{\PNet}[2]{\ifempty{#2}{{\sf N}_{{#1}}}{{\sf N}_{#1}(#2)}}
\newcommand{\trSU}[1][]{{\sf U}_{#1}}
\newcommand{\trSUi}[1][]{{\sf U'}_{#1}}
\newcommand{\trSUii}[1][]{{\sf U''}_{#1}}
\newcommand{\varBalance}{\ensuremath{\codeVal{balance}}\xspace}
\newcommand{\varSender}{\ensuremath{\codeVal{sender}}\xspace}
\newcommand{\varValue}{\ensuremath{\codeVal{value}}\xspace}
\newcommand{\ethtx}[5]{{#2}\xrightarrow{#1} {#3}:{#4}({#5})}
\newcommand{\contrFun}[3]{{#1}({#2}) \{ {#3} \}}
\newcommand{\contrFunSig}[2]{{#1}({#2})}
\newcommand{\swap}{\rightleftarrows}
\newcommand{\nswap}{\mbox{\ensuremath{\,\not\rightleftarrows\,}}}
\newcommand{\safeapprox}[3][]{{#2} \models^{#1} {#3}}
\newcommand{\wapprox}[2]{\safeapprox[w]{#1}{#2}}
\newcommand{\rapprox}[2]{\safeapprox[r]{#1}{#2}}
\newcommand{\pswap}[2]{\#^{#1}_{#2}}
\newcommand{\pswapWR}{\pswap{W}{R}}
\newcommand{\rset}[1]{R\ifempty{#1}{}{({#1})}}
\newcommand{\wset}[1]{W\ifempty{#1}{}{({#1})}}
\newcommand{\txsA}[1][]{{\txColor{\mathbb{T}}_{\txColor{#1}}}}
\newcommand{\txsAi}[1][]{{\txColor{\mathbb{T}'}_{\txColor{#1}}}}
\newcommand{\bcsA}[1][]{{\txColor{\mathbb{B}}_{\txColor{#1}}}}
\newcommand{\seqn}{\vartriangleleft}
\newcommand{\independent}{\;\mathrm{I}\;}
\newcommand{\txOfTr}[1]{\alpha\ifempty{#1}{}{({#1})}}
\newcommand{\trOfTrSU}[1]{{\it tr}\ifempty{#1}{}{({#1})}}
\newcommand{\mytitle}{A true concurrent model of \\ smart contracts executions}
\begin{document}

\title{\mytitle}

\author{Massimo Bartoletti\inst{1} \and
Letterio Galletta\inst{2} \and
Maurizio Murgia\inst{3}
}
\institute{University of Cagliari, Italy \and
IMT School for Advanced Studies, Lucca, Italy \and
University of Trento, Italy}

\maketitle

\begin{abstract}
  The development of blockchain technologies has enabled the trustless execution 
  of so-called \emph{smart contracts}, \ie programs that regulate 
  the exchange of assets (\eg, cryptocurrency) between users.
  In a decentralized blockchain,
  the state of smart contracts is collaboratively maintained
  by a peer-to-peer network of mutually untrusted nodes,
  which collect from users a set of \emph{transactions}
  (representing the required actions on contracts),
  and execute them in some order.
  Once this sequence of transactions is appended to the blockchain, 
  the other nodes validate it, 
  re-executing the transactions in the same order.
  The serial execution of transactions does not take advantage of 
  the multi-core architecture of modern processors, 
  so contributing to limit the throughput.
  In this paper 
  we propose a true concurrent model of smart contracts execution. 
  Based on this, we show how static analysis of smart contracts 
  can be exploited to parallelize the execution of transactions.
\end{abstract}

\section{Introduction}
\label{sec:intro}

Smart contracts~\cite{Szabo97firstmonday} are computer programs 
that transfer digital assets between users without a trusted authority.
Currently, smart contracts are supported by several blockchains,
the first and most widespread one being Ethereum~\cite{ethereum}.
Users interact with a smart contract by sending \emph{transactions},
which trigger state updates, and may possibly
involve transfers of crypto-assets between the called contract and the users.
The sequence of transactions on the blockchain determines the
state of each contract, and the balance of each user.

The blockchain is maintained by a peer-to-peer network of nodes,
which follow a consensus protocol to determine, at each turn, 
a new block of transactions to be added to the blockchain.
This protocol guarantees the correct execution of contracts
also in the presence of (a minority of) adversaries in the network,
and ensures that all the nodes have the same view of their state.
Nodes play the role of \emph{miner} or that of \emph{validator}.
Miners gather from the network sets of transactions sent by users, 
and execute them \emph{serially} to determine the new state.
Once a block is appended to the blockchain, 
validators re-execute all its transactions, 
to update their local view of the contracts state and of the users' balance.
To do this, validators process the transactions exactly in the same order 
in which they occur in the block, since choosing a different order could potentially
result in inconsistencies between the nodes
(note that miners also act as validators, since they validate all the blocks received from the network).

Although executing transactions in a purely sequential fashion
is quite effective to ensure the consistency of the blockchain state, 
in the age of multi-core processors 
it fails to properly exploit the computational capabilities of nodes.
By enabling miners and validators to concurrently execute transactions, 
it would be  possible to improve the efficiency and the throughput of the blockchain.

This paper exploits techniques from concurrency theory to 
provide a formal backbone for parallel executions of transactions.
More specifically, our main contributions can be summarised as follows:
\begin{itemize}

\item As a first step, we formalise blockchains,
   giving their semantics as a function which maps each contract to its state,
   and each user to her balance.
   This semantics reflects the standard implementation of nodes,
   where transactions are evaluated in sequence, without any concurrency.

\item We introduce two notions of \emph{swappability} of transactions.
  The first is purely semantic: two adjacent transactions can be swapped
  if doing so preserves the semantics of the blockchain.
  The second notion, called \emph{strong} swappability, is more syntactical:
  it checks a simple condition 
  (inspired by Bernstein's conditions~\cite{bernstein})
  on static approximations of the variables read/written by the transactions.
  \Cref{th:pswap-implies-swap} shows that strong swappability
  is strictly included in the semantic relation.
  Further, if we transform a blockchain
  by repeatedly exchanging adjacent strongly swappable transactions,
  the resulting blockchain is observationally equivalent to the original one
  (\Cref{th:pswapWR:mazurkiewicz}).

\item Building upon strong swappability, we devise
  a true concurrent model of transactions execution.
  To this purpose, we transform a block of transactions into an
  \emph{occurrence net}, 
  describing exactly the partial order induced by the swappability relation.
  We model the concurrent executions of a blockchain
  in terms of the \emph{step firing sequences} 
  (\ie finite sequences of \emph{sets} of transitions)
  of the associated occurrence net.
  \Cref{th:bc-to-pnet} establishes that 
  the concurrent executions and the serial one
  are semantically equivalent.

\item We describe how miners and validators can use our results to
  concurrently execute transactions, exploiting the multi-core architecture
  available on their nodes. 
  Remarkably, our technique is compatible with 
  the current implementation of the Ethereum blockchain,
  while the other existing approaches to parallelize transactions execution
  would require a soft-fork.

\item We apply our technique to ERC-721 tokens, 
  one of the most common kinds of contracts in Ethereum, 
  showing them to be suitable for parallelization.

\end{itemize}

\noindent
\iftoggle{arxiv}
{The proofs of our results are in~\Cref{sec:proofs}.}
{Because of space constraints, all the proofs of our results are in~\cite{BGM20arxiv}.}

\section{Transactions and blockchains}
\label{sec:transactions}

In this~\namecref{sec:transactions} we introduce a general model
of transactions and blockchains, abstracting from the actual
smart contracts language.

A smart contract is a finite set of functions, \ie terms of the form
$\contrFun{\funF}{\vec{\varX}}{\cmdC}$,
where $\funF$ is a function name,
$\vec{\varX}$ is the sequence of formal parameters (omitted when empty),
and $\cmdC$ is the function body.
We postulate that the functions in a contract have distinct names.
We abstract from the actual syntax of $\cmdC$,
and we just assume that the semantics of function bodies is defined
(see \eg \cite{BGM19cbt} for a concrete instance of syntax and semantics
of function bodies).

Let $\Val$ be a set of \emph{values},
ranged over by $\valV, \valVi, \ldots$,
let $\Const$ be a set  of \emph{constant names} $\varX, \varY, \ldots$,
and let $\Addr$ be a set of \emph{addresses} $\amvA,\amvB,\ldots$,
partitioned into \emph{account addresses} $\pmvA, \pmvB, \ldots$
and \emph{contract addresses} $\cmvA, \cmvB, \ldots$.
We assume a mapping $\addrToContr{}$ from addresses to contracts.

We assume that each contract has a key-value store, 
which we render as a partial function $\Val \mapstopart \Val$
from keys $\valK \in \Val$ to values.
The state of the blockchain is a function
$\bcSt : \Addr \rightarrow (\Val \mapstopart \Val)$
from addresses to key-value stores.
We postulate that 
$\varBalance \in \dom{\bcSt \amvA}$ for all $\amvA$.
A \emph{qualified key} is a term of the form $\amvA.\valK$.
We write $\bcSt (\amvA.\valK)$ for $\bcSt \amvA \valK$;
when $\valK \not\in \dom{\bcSt \, \amvA}$,
we write $\bcSt \, (\amvA . \valK) = \bot$.
We use $\qmvA,\qmvB,\hdots$ to range over qualified keys,
$\QmvA,\QmvB,\hdots$ to range over sets of them,
and $\QmvU$ to denote the set of all qualified keys.

To have a uniform treatment of accounts and contracts,
we assume that for all account addresses $\pmvA$,
$\dom{\bcSt \pmvA} = \setenum{\varBalance}$,
and that the contract $\addrToContr{\pmvA}$ has exactly one function, 
which just skips.
In this way, the statement $\cmdSend{n}{\pmvA}$,
which transfers $n$ currency units to $\pmvA$,
can be rendered as a call to this function.

\emph{State updates} define how values associated with qualified keys are modified.

\begin{definition}[\textbf{State update}]
  \label{def:state-update}
A \emph{state update} 
$\mSubst : \Addr \mapstopart (\Val \mapstopart \Val)$
is a function from qualified keys to values;
we denote with $\setenum{\bind{\amvA.\valK}{\valV}}$
the state update which maps $\amvA.\valK$ to~$\valV$.
We define $\keys{\mSubst}$ as the set of qualified keys $\amvA.\valK$ such that
$\amvA \in \dom{\mSubst}$ and $\valK \in \dom{\mSubst \amvA}$.
We apply updates to states as follows:
\[
(\bcSt \mSubst) \amvA = \bcKV[\amvA]
\quad \text{where} \quad
\bcKV[\amvA]  \valK =
\begin{cases}
  \mSubst \amvA \valK & \text{if $\amvA.\valK \in \keys{\mSubst}$} \\
  \bcSt \amvA \valK & \text{otherwise}
\end{cases}
\tag*{$\qedef$}
\]
\end{definition}

We denote with $\semCmd{\cmdC}{\bcSt,\bcEnv}{\amvA}$ the semantics of 
the statement $\cmdC$.
This semantics is either a blockchain state $\bcSti$, 
or it is undefined (denoted by $\bot$).
The semantics is parameterised over a state $\bcSt$, an address $\amvA$
(the contract wherein $\cmdC$ is evaluated),
and an \emph{environment} $\bcEnv : \Const \mapstopart \Val$,
used to evaluate the formal parameters
and the special names $\varSender$ and $\varValue$.
These names represent, respectively, the caller of the function,
and the amount of currency transferred along with the call.
We postulate that \varSender and \varValue 
are not used as formal parameters.

We define the auxiliary operators $+$ and $-$ on states as follows:
\[
\bcSt \circ (\amvA:\valN)
\; = \;
\bcSt\setenum{\bind{\amvA.\varBalance}{(\bcSt \amvA \varBalance) \,\circ\, \valN}}
\tag*{($\circ \in \setenum{+,-}$)}
\]
\ie, $\bcSt + \amvA:\valN$ updates $\bcSt$ 
by increasing the $\varBalance$ of $\amvA$ of $\valN$ currency units.

A \emph{transaction} $\txT$ is a term of the form:
\[
\ethtx{\valN}{\pmvA}{\cmvA}{\funF}{\vec{\valV}}
\]
Intuitively, $\pmvA$ is the address of the caller,
$\cmvA$ is the address of the called contract,
$\funF$ is the called function,
$n$ is the value transferred from $\pmvA$ to $\cmvA$, and
$\vec{\valV}$ is the sequence of actual parameters.
We denote the semantics of $\txT$ in $\bcSt$ as $\semTx{\txT}{\bcSt}$,
where the function $\semTx{\cdot}{\bcSt}$
is defined in~\Cref{fig:tins:tx-semantics},
which we briefly comment.

\begin{figure}[t]
  \[
  \hspace{-10pt}
  \begin{array}{c}
    \irule
    {\begin{array}{l}
       \contrFun{\funF}{\vec{\varX}}{\cmdC} \in \addrToContr{\cmvA}
       \\[4pt]
       \bcSt \, \pmvA \, \varBalance \geq \valN
       \\[4pt]
       \semCmd{\cmdC}{\bcSt-\pmvA:\valN+\cmvA:\valN,\, \setenum{\bind{\varSender}{\pmvA},\bind{\varValue}{\valN},\bind{\vec{\varX}}{\vec{\valV}}}}{\cmvA} = \bcSti
     \end{array}
    }
    {\semTx{\ethtx{\valN}{\pmvA}{\cmvA}{\funF}{\vec{\valV}}}{\bcSt} = \bcSti}
    \smallnrule{[Tx1]}
    \quad
    \irule
    {
    \begin{array}{l}
      \contrFun{\funF}{\vec{\varX}}{\cmdC} \in \addrToContr{\cmvA}
      \\[4pt]
      \bigg(
      \begin{array}{l}
        \!\bcSt \, \pmvA \, \varBalance < \valN \qquad \text{or}
        \\[4pt]
        \!\semCmd{\cmdC}{\bcSt-\pmvA:\valN+\cmvA:\valN,\, \setenum{\bind{\varSender}{\pmvA},\bind{\varValue}{\valN},\bind{\vec{\varX}}{\vec{\valV}}} }{\cmvA} = \bot \!\!
      \end{array}
      \bigg)
    \end{array}
    }
    {\semTx{\ethtx{\valN}{\pmvA}{\cmvA}{\funF}{\vec{\valV}}}{\bcSt} = \bcSt}
    \smallnrule{[Tx2]}
  \end{array}
  \]
  \vspace{-10pt}
  \caption{Semantics of transactions.}
  \label{fig:tins:tx-semantics}
\end{figure}

The semantics of a transaction $\txT = \ethtx{\valN}{\pmvA}{\cmvA}{\funF}{\vec{\valV}}$,
in a given blockchain state $\bcSt$, is a new state $\bcSti$.
Rule~\nrule{[Tx1]} handles the case where the transaction is successful:
this happens when $\pmvA$'s balance is at least $\valN$,
and the function call terminates in a non-error state.
Note that $\valN$ units of currency are transferred to $\cmvA$
\emph{before} starting to execute $\funF$,
and that the names $\varSender$ and $\varValue$ are bound, respectively,
to $\pmvA$ and $\valN$.
Rule~\nrule{[Tx2]} applies either when $\pmvA$'s balance is not enough,
or the execution of $\funF$ fails.
In these cases, $\txT$ does not alter the state.

A \emph{blockchain} $\bcB$ is a finite sequence of transactions;
we denote with $\bcEmpty$ the empty blockchain.
The semantics of a blockchain is obtained by folding 
the semantics of its transactions, starting from a given state $\sigma$:
\[
\semBc{\bcEmpty}{\bcSt} = \bcSt
\qquad
\semBc{\txT \bcB}{\bcSt} = \semBc{\bcB}{\scriptsize \semTx{\txT}{\bcSt}}
\]
Note that erroneous transactions can occur within a blockchain,
but they have no effect on its semantics (as rule \nrule{[Tx2]} makes them identities \wrt the append operation).
We assume that in the initial state of the blockchain,
denoted by $\bcStInit$,
each address $\amvA$ has a balance $\balanceInit{\amvA} \geq 0$,
while all the other keys are unbound.

We write $\semBc{\bcB}{}$ for $\semBc{\bcB}{\bcStInit}$,
where $\bcStInit \amvA = \setenum{\bind{\varBalance}{\balanceInit{\amvA}}}$.
We say that a state $\bcSt$ is \emph{reachable}
if $\bcSt = \semBc{\bcB}{}$ for some $\bcB$.

\begin{example}
  \label{ex:transactions}
  Consider the following functions of a contract at address $\cmvA$:
  \[
  \contrFun{\funF[0]}{}{\cmdAss{\valX}{1}}
  \qquad
  \contrFun{\funF[1]}{}{\cmdIfT{\valX=0}{\cmdSend{1}{\pmvB}}}
  \qquad
  \contrFun{\funF[2]}{}{\cmdSend{1}{\pmvB}}
  \]
  Let $\bcSt$ be a state such that
  $\bcSt \pmvA \varBalance \geq 2$,
  and let $\bcB = \txT[0]\txT[1]\txT[2]$, where:
  \[
  \txT[0] = \ethtx{0}{\pmvA}{\cmvA}{\funF[0]}{}
  \hspace{30pt}
  \txT[1] = \ethtx{1}{\pmvA}{\cmvA}{\funF[1]}{}
  \hspace{30pt}
  \txT[2] = \ethtx{1}{\pmvA}{\cmvA}{\funF[2]}{}
  \]
  By applying rule \nrule{[Tx1]} three times, we have that:
  \begin{align*}
    \semTx{\txT[0]}{\bcSt} 
    & = \semCmd{\cmdAss{\valX}{1}}{\bcSt,\,
      \setenum{\bind{\varSender}{\pmvA},\bind{\varValue}{0}}}{\cmvA} = \bcSt
      \setenum{\bind{\cmvA.\varX}{1}} = \bcSti
    \\
    \semTx{\txT[1]}{\bcSti} 
    & = \semCmd{\cmdIfT{\valX=0}{\cmdSend{1}{\pmvB}}}{\bcSti - \pmvA:1 + \cmvA:1,\,
      \setenum{\bind{\varSender}{\pmvA},\bind{\varValue}{1}}}{\cmvA} 
    \\
    & = \bcSti - \pmvA:1 + \cmvA:1 = \bcStii
    \\
    \semTx{\txT[2]}{\bcStii} 
    & = \semCmd{\cmdSend{1}{\pmvB}}{\bcStii - \pmvA:1 + \cmvA:1,\,
      \setenum{\bind{\varSender}{\pmvA},\bind{\varValue}{1}}}{\cmvA} = \bcStii - \pmvA:1 + \pmvB:1
  \end{align*}
  Summing up, 
  $\semBc{\bcB}{\bcSt} = \bcSt\setenum{\bind{\cmvA.\varX}{1}} - \pmvA:2 + \pmvB:1 + \cmvA:1$.
  \hfill\qedef
\end{example}

\section{Swapping transactions}
\label{sec:txswap}

We define two blockchain states to be \emph{observationally equivalent}
when they agree on the values associated to all the qualified keys.
Our formalisation is parameterised on a set of qualified keys $\QmvA$
over which we require the agreement.

\begin{definition}[\textbf{Observational equivalence}]
  \label{def:sim}
  For all $\QmvA \subseteq \QmvU$,
  we define 
  \mbox{$\bcSt \sim_{\QmvA} \bcSti$}
  iff 
  $\forall \qmvA \in \QmvA: \bcSt \qmvA = \bcSti \qmvA$.
  We say that $\bcSt$ and $\bcSti$ are \emph{observationally equivalent},
  in symbols $\bcSt \sim \bcSti$,
  when $\bcSt \sim_{\QmvA} \bcSti$ holds for all $\QmvA$.
  \hfill\qedef
\end{definition}

\begin{restatable}{lemma}{lemsimequiv}
  \label{lem:sim:equiv}
  For all $\QmvA,\QmvB \subseteq \QmvU$:
  \begin{inlinelist}
  \item \label{lem:sim:equiv:equiv}
    $\sim_{\QmvA}$ is an equivalence relation;
  \item \label{lem:sim:equiv:subseteq}
    if $\bcSt \sim_{\QmvA} \bcSti$ and $\QmvB \subseteq \QmvA$,
  then $\bcSt \sim_{\QmvB} \bcSti$;
  \item \label{lem:sim:equiv:sim}
    $\sim \, = \, \sim_{\QmvU}$.
  \end{inlinelist}
  \hfill\qedef
\end{restatable}

We extend the equivalence relations above to blockchains, 
by passing through their semantics.
For all $\QmvA$, we define 
$\bcB \sim_{\QmvA} \bcBi$
iff
$\semBc{\bcB}{\bcSt} \sim_{\QmvA} \semBc{\bcBi}{\bcSt}$
holds for all reachable $\bcSt$
(note that all the definitions and results in this paper apply to reachable states, 
since the unreachable ones do not represent actual contract executions).
We write $\bcB \sim \bcBi$ when $\bcB \sim_{\QmvA} \bcBi$ holds for all $\QmvA$.
The relation $\sim$ is a \emph{congruence} with respect to the append operation,
\ie if $\bcB \sim \bcBi$ then we can replace $\bcB$ with $\bcBi$ in a larger blockchain,
preserving its semantics.

\begin{restatable}{lemma}{lemsimappend}
  \label{lem:sim:append}
  $\bcB \sim \bcBi \implies \forall \bcB[0],\bcB[1] \, : \, \bcB[0] \bcB \bcB[1] \sim \bcB[0] \bcBi \bcB[1]$.
  \hfill\qedef
\end{restatable}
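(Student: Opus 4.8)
The plan is to prove the statement by induction on the length of the prefix $\bcB[0]$, after first establishing the key one-step fact that if $\bcB \sim \bcBi$ then $\txT \bcB \sim \txT \bcBi$ for any single transaction $\txT$. The intuition is that $\sim$ is a congruence, and the append of a blockchain $\bcB[0]$ on the left amounts to iterating the one-transaction case, while appending $\bcB[1]$ on the right is immediate from the way blockchain semantics is defined by folding.

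First I would unfold the definition of $\bcB \sim \bcBi$: it means that for every reachable state $\bcSt$ and every set of qualified keys $\QmvA$, we have $\semBc{\bcB}{\bcSt} \sim_{\QmvA} \semBc{\bcBi}{\bcSt}$, which by \Cref{lem:sim:equiv}\ref{lem:sim:equiv:sim} is the same as $\semBc{\bcB}{\bcSt} \sim \semBc{\bcBi}{\bcSt}$, \ie $\semBc{\bcB}{\bcSt}$ and $\semBc{\bcBi}{\bcSt}$ agree on \emph{all} qualified keys; so they are in fact the same function on the set $\QmvU$ of all qualified keys. The right-append is then easy: since $\semBc{\bcB \bcB[1]}{\bcSt} = \semBc{\bcB[1]}{\semBc{\bcB}{\bcSt}}$, and $\semBc{\bcB}{\bcSt}$, $\semBc{\bcBi}{\bcSt}$ coincide as states, continuing the fold with $\bcB[1]$ from either state gives literally the same result, hence $\bcB \bcB[1] \sim \bcBi \bcB[1]$. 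For the left-append I proceed by induction on $\card{\bcB[0]}$: the base case $\bcB[0] = \bcEmpty$ is exactly the right-append case just treated; for the inductive step write $\bcB[0] = \txT \bcB[0]'$, and observe $\semBc{\txT \bcB[0]' \bcB \bcB[1]}{\bcSt} = \semBc{\bcB[0]' \bcB \bcB[1]}{\semTx{\txT}{\bcSt}}$. Here there is a subtlety: the inductive hypothesis applies only over reachable states, and I need to know that $\semTx{\txT}{\bcSt}$ is reachable whenever $\bcSt$ is. This follows because if $\bcSt = \semBc{\bcB''}{}$ for some $\bcB''$, then $\semTx{\txT}{\bcSt} = \semBc{\bcB'' \txT}{}$, so it is reachable by construction. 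With that in hand, the inductive hypothesis gives $\bcB[0]' \bcB \bcB[1] \sim \bcB[0]' \bcBi \bcB[1]$, evaluated from the reachable state $\semTx{\txT}{\bcSt}$, which folds back up to the desired equivalence from $\bcSt$.

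The main obstacle, such as it is, is purely bookkeeping: making sure the quantifier ``for all reachable $\bcSt$'' is threaded correctly through the fold, in particular the closure-under-transaction-semantics observation above, and keeping straight that $\bcB \sim \bcBi$ is strong enough (agreement on \emph{all} keys, via \Cref{lem:sim:equiv}\ref{lem:sim:equiv:sim}) that the intermediate states produced by $\bcB[0]$ genuinely coincide, so that everything downstream — both the remaining suffix $\bcB[1]$ and the outer context — behaves identically. No properties of individual transaction semantics beyond determinism and the fold equations $\semBc{\bcEmpty}{\bcSt} = \bcSt$, $\semBc{\txT \bcB}{\bcSt} = \semBc{\bcB}{\semTx{\txT}{\bcSt}}$ are needed.
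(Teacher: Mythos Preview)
Your proof is correct and essentially matches the paper's (very terse) argument: both rest on the observation that $\bcB \sim \bcBi$ forces $\semBc{\bcB}{\bcSt} = \semBc{\bcBi}{\bcSt}$ as states, after which determinism of the fold semantics yields the congruence. The induction on $\card{\bcB[0]}$ is correct but unnecessary --- one can directly note that if $\bcSt = \semBc{\bcB''}{}$ is reachable then $\semBc{\bcB[0]}{\bcSt} = \semBc{\bcB'' \bcB[0]}{}$ is reachable as well, and apply the hypothesis there in one shot rather than transaction-by-transaction.
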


Two transactions are \emph{swappable} when
exchanging their order preserves observational equivalence.

\begin{definition}[\textbf{Swappability}]
  \label{def:swap}
  Two transactions $\txT \neq \txTi$ are \emph{swappable},
  in symbols $\txT \swap \txTi$, when $\txT \txTi \sim \txTi \txT$.
  \hfill\qedef
\end{definition}

\begin{example}
  \label{ex:swap}
  Recall the transactions in~\Cref{ex:transactions}.
  We have that $\txT[0] \swap \txT[2]$ and $\txT[1] \swap \txT[2]$,
  but $\txT[0] \nswap \txT[1]$
  \iftoggle{arxiv}
  {(see~\Cref{fig:swap:nocongruence} in~\Cref{sec:proofs}).}
  {(see Figure 5 in Appendix A of~\cite{BGM20arxiv}).}
  \hfill\qedef
\end{example}

We shall use the theory of trace languages originated from Mazurkiewicz's works~\cite{Mazurkiewicz88rex}
to study observational equivalence under various swapping relations.
Below, we fix the alphabet of trace languages
as the set $\Tx$ of all transactions.

\begin{definition}[\textbf{Mazurkiewicz equivalence}]
  \label{def:mazurkiewicz}
  Let $I$ be a symmetric and irreflexive relation on $\Tx$.
  The \emph{Mazurkiewicz equivalence} $\equivStSeq[I]$
  is the least congruence in the free monoid $\Tx^*$ such that:
  $\forall \txT, \txTi \in \Tx$:
  \(\;
  \txT \, I \, \txTi \implies \txT \txTi \equivStSeq[I] \txTi \txT
  \).
\end{definition}

\Cref{th:swap:mazurkiewicz} below states that the Mazurkiewicz equivalence
constructed on the swappability relation $\swap$ is an observational equivalence.
Therefore, we can transform a blockchain into an observationally equivalent one
by a finite number of exchanges of adjacent swappable transactions.

\begin{restatable}{theorem}{thswapmazurkiewicz}
  \label{th:swap:mazurkiewicz}
  $\equivStSeq[\swap] \;\; \subseteq \;\; \sim$.
  \hfill\qedef
\end{restatable}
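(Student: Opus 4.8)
The claim is $\equivStSeq[\swap] \subseteq \sim$, i.e., any two blockchains related by a finite sequence of adjacent-swappable transpositions are observationally equivalent. Since $\equivStSeq[\swap]$ is defined as the least congruence on the free monoid $\Tx^*$ satisfying $\txT \swap \txTi \implies \txT\txTi \equivStSeq[\swap] \txTi\txT$, the natural route is to show that $\sim$ (restricted to blockchains, as defined via semantics on all reachable states) is itself a congruence on $\Tx^*$ containing all the generating pairs; then minimality of $\equivStSeq[\swap]$ gives the inclusion.

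**Plan.** The plan is to verify the three properties that make $\sim$ a congruence containing the generators. First I would check that $\sim$ on blockchains is an equivalence relation: reflexivity and symmetry are immediate from \Cref{lem:sim:equiv}\ref{lem:sim:equiv:equiv} applied pointwise over reachable states, and transitivity likewise. Second, I would invoke \Cref{lem:sim:append}, which is exactly the statement that $\sim$ is a congruence with respect to the append (concatenation) operation on blockchains: $\bcB \sim \bcBi$ implies $\bcB[0]\bcB\bcB[1] \sim \bcB[0]\bcBi\bcB[1]$ for all contexts $\bcB[0],\bcB[1]$. Third, the generating pairs are in $\sim$ essentially by definition: if $\txT \swap \txTi$, then by \Cref{def:swap} we have $\txT\txTi \sim \txTi\txT$ (here one must note that \Cref{def:swap} states $\txT\txTi \sim \txTi\txT$ unconditionally for swappable transactions, which is the blockchain-level $\sim$, i.e.\ agreement of $\semBc{\txT\txTi}{\bcSt}$ and $\semBc{\txTi\txT}{\bcSt}$ on all qualified keys for all reachable $\bcSt$). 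The degenerate case $\txT = \txTi$ never arises since $I = {\swap}$ is irreflexive.

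**Putting it together.** Having established that $\sim$ is an equivalence relation on $\Tx^*$, that it is preserved by both left and right concatenation (hence a monoid congruence), and that it contains every pair $(\txT\txTi, \txTi\txT)$ with $\txT \swap \txTi$, the relation $\sim$ is a congruence of the required kind. Since $\equivStSeq[\swap]$ is by \Cref{def:mazurkiewicz} the \emph{least} such congruence, we conclude $\equivStSeq[\swap] \subseteq {\sim}$. Observational equivalence of blockchains related by $\equivStSeq[\swap]$ then follows, and by unfolding the semantics this yields agreement of the induced blockchain states on all qualified keys, which is the intended reading of the theorem.

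**Main obstacle.** None of the steps requires heavy computation, so the only real subtlety is bookkeeping about \emph{which} $\sim$ is meant at each point — the state-level relation of \Cref{def:sim} versus its lift to blockchains — and making sure that the congruence-with-respect-to-concatenation property (\Cref{lem:sim:append}) is genuinely the monoid-congruence condition demanded by \Cref{def:mazurkiewicz}, including the left-context case, not merely right-append. The mild care needed is that $\equivStSeq[\swap]$ quantifies over contexts on both sides, which is exactly what \Cref{lem:sim:append} supplies; once that alignment is spelled out, the argument is a direct appeal to minimality.
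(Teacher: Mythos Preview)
Your proposal is correct and follows essentially the same approach as the paper: both arguments use \Cref{lem:sim:equiv} to get that $\sim$ is an equivalence, \Cref{lem:sim:append} to get the congruence property, and \Cref{def:swap} for the generating pairs. The only cosmetic difference is that you invoke minimality of $\equivStSeq[\swap]$ directly, whereas the paper phrases the same argument as a rule induction on an inductive presentation of $\equivStSeq[\swap]$.
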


\begin{example}
  \label{ex:swap:mazurkiewicz}
  We can rearrange the transactions in \Cref{ex:transactions} as
  $\txT[0] \txT[1] \txT[2] \sim \txT[0] \txT[2] \txT[1] \sim \txT[2] \txT[0] \txT[1]$.
  Instead, $\txT[1] \txT[0] \txT[2] \not\sim \txT[2] \txT[0] \txT[1]$
  (\eg, starting from a state $\bcSt$ such that $\bcSt\pmvA\varBalance = 2$ and $\bcSt\cmvA \valX = 0$, 
  \iftoggle{arxiv}
  {see~\Cref{fig:swap:bc-noncongruence} in~\Cref{sec:proofs}).}
  {see Figure 6 in Appendix A of~\cite{BGM20arxiv}).}
  \hfill\qedef
\end{example}

Note that the converse of~\Cref{th:swap:mazurkiewicz} does not hold:
indeed, $\bcB \equivStSeq[\swap] \bcBi$ requires that $\bcB$ and $\bcBi$
have the same length, while $\bcB \sim \bcBi$ may also hold for blockchains
of different length 
(\eg, $\bcBi = \bcB \txT$, where $\txT$ does not alter the state).

\paragraph*{Safe approximations of read/written keys}

Note that the relation $\swap$ is undecidable
whenever the contract language is Turing-equivalent.
So, to detect swappable transactions we follow a static approach,
consisting of two steps.
First, we over-approximate the set of keys read and written 
by transactions, by statically analysing the code of the called functions.
We then check a simple condition on these approximations (\Cref{def:pswapWR}),
to detect if two transactions can be swapped.
Since static analyses to over-approximate read and written variables
are quite standard~\cite{NNH99},
here we just rely on such approximations, by only assuming their safety.
In~\Cref{def:safeapprox} we state that a set $\QmvA$ safely
approximates the keys \emph{written} by $\txT$, 
when $\txT$ does not alter the state of the keys not in $\QmvA$.
Defining set of \emph{read} keys is a bit trickier: 
intuitively, we require that if we execute the transaction starting from two states that agree 
on the values of the keys in the read set, then these executions should be equivalent, 
in the sense that they do not introduce new differences between the resulting states 
(with respect to the difference already existing before).

\begin{definition}[\textbf{Safe approximation of read/written keys}]
  \label{def:safeapprox}
  Given a set of qualified keys $\QmvA$ and a transaction $\txT$, we define:
  \begin{align*}
    & \wapprox{\QmvA}{\txT}
    && \text{iff}
    && \forall \QmvB: \QmvB \cap \QmvA = \emptyset \implies \txT \sim_{\QmvB} \bcEmpty
    \\
    & \rapprox{\QmvA}{\txT}
    && \text{iff}
    && \forall \bcB,\bcBi,\QmvB: \bcB \sim_{\QmvA} \bcBi \land
       \bcB \sim_{\QmvB} \bcBi
       \implies \bcB\txT \sim_{\QmvB} \bcBi\txT
       \tag*{\qedef}
  \end{align*}
\end{definition}

\begin{example}
  \label{ex:safeapprox}
  Let \mbox{$\txT = \ethtx{1}{\pmvA}{\cmvA}{\funF}{}$},
  where $\contrFun{\funF}{}{\cmdSend{1}{\pmvB}}$ is a function of $\cmvA$.
  The execution of $\txT$ affects the $\varBalance$ of $\pmvA$, $\pmvB$ and $\cmvA$;
  however, $\cmvA.\varBalance$ is first incremented
  and then decremented, and so its value remains unchanged.
  Then,
  $\wapprox{\setenum{\expGet{\pmvA}{\varBalance},\expGet{\pmvB}{\varBalance}}}{\txT}$,
  and it is the smallest safe approximation of the keys written by $\txT$.
  To prove that $\rapprox{\QmvA = \setenum{\pmvA.\varBalance}}{\txT}$,
  assume two blockchains $\bcB$ and $\bcBi$ and a set of keys $Q$ such that
  $\bcB \sim_{\QmvA} \bcBi$ and $\bcB \sim_{Q} \bcBi$.
  If $\semBc{\bcB}{} \pmvA \varBalance < 1$,
  then by \nrule{[Tx2]}
  we have $\semBc{\bcB\txT}{} = \semBc{\bcB}{}$.
  Since $\bcB \sim_{\QmvA} \bcBi$, 
  then also $\semBc{\bcBi}{} \pmvA \varBalance < 1$,
  and so by \nrule{[Tx2]} we have
  $\semBc{\bcBi\txT}{} = \semBc{\bcBi}{}$.
  Then, 
  $\bcB\txT \sim_{Q} \bcBi \txT$.
  Otherwise, if $\semBc{\bcB}{} \pmvA \varBalance = n \geq 1$,
  then by \nrule{[Tx1]} the execution of $\txT$
  transfers one unit of currency from $\pmvA$ to $\pmvB$,
  so the execution of $\txT$ affects exactly 
  $\pmvA.\varBalance$ and $\pmvB.\varBalance$.
  So, it is enough to show that 
  $\bcB \sim_{\setenum{\qmvB}} \bcBi$ implies $\bcB \txT \sim_{\setenum{\qmvB}} \bcBi \txT$
  for $\qmvB \in \setenum{\pmvA.\varBalance, \pmvB.\varBalance}$.
  For $\qmvB = \pmvA.\varBalance$,
  we have that
  $\semBc{\bcBi\txT}{} \pmvA \varBalance = n - 1 = \semBc{\bcB\txT}{} \pmvA \varBalance$.
  For $\qmvB = \pmvB.\varBalance$,
  we have that
  $\semBc{\bcBi\txT}{} \pmvB \varBalance = \semBc{\bcBi}{} \pmvB \varBalance + 1 =
  \semBc{\bcB}{} \pmvB \varBalance + 1 = \semBc{\bcB\txT}{} \pmvB \varBalance$.
  Therefore, we conclude that $\rapprox{\QmvA}{\txT}$.
  \hfill\qedef
\end{example}

Widening a safe approximation (either of read or written keys)
preserves its safety;
further, the intersection
of two safe write approximations is still safe
\iftoggle{arxiv}
{(see \Cref{lem:safeapprox} in \Cref{sec:proofs}).}
{(see Lemma 6 in Appendix A of~\cite{BGM20arxiv}).}
From this, it follows that there exists a \emph{least} safe approximation
of the keys written by a transaction.

\paragraph*{Strong swappability}

We use safe approximations of the read/written keys 
to detect when two transactions are swappable.
To achieve that, we check whether two transactions 
$\txT$ and $\txTi$ operate on disjoint portions of the blockchain state.
More specifically, 
we recast in our setting Bernstein's conditions~\cite{bernstein}
for the parallel execution of processes:
it suffices to check that the set of keys written by $\txT$
is disjoint from those written or read by $\txTi$, and vice versa.
When this happens we say that the
two transactions are \emph{strongly swappable}.

\begin{definition}[\textbf{Strong swappability}]
  \label{def:pswap}
  We say that two transactions $\txT \neq \txTi$ are \emph{strongly swappable},
  in symbols $\txT \pswap{}{} \txTi$, when
  there exist $W,W',R,R' \subseteq \QmvU$ such that
  $\wapprox{W}{\txT}$,
  $\wapprox{W'}{\txTi}$,
  $\rapprox{R}{\txT}$,
  $\rapprox{R'}{\txTi}$,
  and:
  \[
    \big( R \cup W \big) \cap W'
    \; = \;
    \emptyset
    \; = \;
    \big( R' \cup W' \big) \cap W
    \tag*{\qedef}
  \]
\end{definition}

\begin{example}
  \label{ex:txswap:strong-swap}
  Let $\contrFun{\funF[1]}{}{\cmdSkip}$
  and $\contrFun{\funF[2]}{x}{\cmdSend{\varValue}{x}}$
  be functions of the contracts $\cmvA[1]$ and $\cmvA[2]$,
  respectively, 
  and consider the following transactions:
  \begin{align*}
    & \txT[1] = \ethtx{1}{\pmvA}{\cmvA[1]}{\funF[1]}{}
    && \txT[2] = \ethtx{1}{\pmvB}{\cmvA[2]}{\funF[2]}{\cmv{F}}
  \end{align*}
  where $\pmvA$, $\pmvB$, and $\cmv{F}$ are account addresses.
  To prove that $\txT[1] \pswap{}{} \txT[2]$, consider the following
  safe approximations of the written/read keys of $\txT[1]$ and
  $\txT[2]$, respectively:
  \begin{align*}
    & \wapprox{W_1 = \setenum{\pmvA.\varBalance, \cmvA[1].\varBalance}}{\txT[1]}
    && \rapprox{R_1 = \setenum{\pmvA.\varBalance}}{\txT[1]}
    \\
    & \wapprox{W_2 = \setenum{\pmvB.\varBalance, \cmv{F}.\varBalance}}{\txT[2]}
    && \rapprox{R_2 = \setenum{\pmvB.\varBalance}}{\txT[2]}
  \end{align*}
  Since $(W_1 \cup R_1) \cap W_2 = \emptyset = (W_2 \cup R_2) \cap W_1$, the two transactions are
  strongly swappable.
  Now, let:
  \[
  \txT[3] = \ethtx{1}{\pmvB}{\cmvA[2]}{\funF[2]}{\pmvA}
  \]
  and consider the following safe approximations $W_3$ and $R_3$:
  \begin{align*}
    & \wapprox{W_3 = \setenum{\pmvB.\varBalance, \pmvA.\varBalance}}{\txT[3]}
    \qquad
    \rapprox{R_3 = \setenum{\pmvB.\varBalance}}{\txT[3]}
  \end{align*}
  Since $W_1 \cap W_3 \ne \emptyset \ne W_2 \cap W_3$,
  then $\neg (\txT[1] \pswap{}{} \txT[3])$ and
  $\neg(\txT[2] \pswap{}{} \txT[3])$.
  \hfill\qedef
\end{example}

The following~\namecref{th:pswap-implies-swap} 
ensures the soundness of our approximation, \ie that
if two transactions are strongly swappable, then they are also swappable.
The converse implication does not hold,
as witnessed by~\Cref{ex:txswap:swap-notimplies-pswap}.

\begin{restatable}{theorem}{thpswapimpliesswap}
  \label{th:pswap-implies-swap}
  $\txT \pswap{}{} \txTi \implies \txT \swap \txTi$.
  \hfill\qedef
\end{restatable}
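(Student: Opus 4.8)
The plan is to show that if $\txT \pswap{}{} \txTi$ via safe approximations $W, W', R, R'$ satisfying Bernstein's disjointness condition, then $\txT\txTi \sim \txTi\txT$, i.e. $\semBc{\txT\txTi}{\bcSt} \sim \semBc{\txTi\txT}{\bcSt}$ for every reachable $\bcSt$. By \Cref{lem:sim:equiv:sim} it suffices to prove agreement on an arbitrary qualified key $\qmvA$, or equivalently on an arbitrary set $\QmvB$; I would case-split on where $\qmvA$ lies relative to $W$ and $W'$. Since $(R \cup W) \cap W' = \emptyset = (R' \cup W') \cap W$, in particular $W \cap W' = \emptyset$, so the three cases $\qmvA \notin W \cup W'$, $\qmvA \in W$ (hence $\qmvA \notin W' \cup R'$), and $\qmvA \in W'$ (hence $\qmvA \notin W \cup R$) are exhaustive, and the last two are symmetric.

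First, for $\qmvA \notin W \cup W'$: using $\wapprox{W}{\txT}$ with the singleton $\setenum{\qmvA}$ (which is disjoint from $W$) gives $\txT \sim_{\setenum{\qmvA}} \bcEmpty$, meaning $\txT$ leaves $\qmvA$ untouched; likewise $\txTi$ leaves $\qmvA$ untouched. A small auxiliary observation — that $\bcB \sim_{\QmvB} \bcBi$ and $\txT \sim_{\QmvB} \bcEmpty$ together with the read-safety of $\txT$ let one propagate $\sim_{\QmvB}$ through an appended $\txT$ — will be needed; this follows by instantiating $\rapprox{\cdot}{\cdot}$ appropriately. Chaining these, both $\semBc{\txT\txTi}{\bcSt}$ and $\semBc{\txTi\txT}{\bcSt}$ agree with $\bcSt$ on $\qmvA$, hence with each other.

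For the case $\qmvA \in W$ (so $\qmvA \notin W' \cup R'$): here the key point is that $\txTi$ neither reads nor writes $\qmvA$. Writing $\bcSt' = \semTx{\txT}{\bcSt}$ and $\bcSt'' = \semTx{\txTi}{\bcSt}$, I want $\semTx{\txTi}{\bcSt'}\,\qmvA = \bcSt'\,\qmvA$ (because $\qmvA \notin W'$) and $\semTx{\txT}{\bcSt''}\,\qmvA = \bcSt'\,\qmvA$. The latter is the crux: I know $\bcSt \sim_{R} \bcSt''$ because $\txTi$ does not write anything in $R$ (as $R \cap W' = \emptyset$, using write-safety of $\txTi$ on the set $R$), and trivially $\bcSt \sim_{W} \bcSt$; then read-safety $\rapprox{R}{\txT}$ — applied with the two "blockchains" being prefixes whose semantics are $\bcSt$ and $\bcSt''$ — gives $\semBc{\cdots\txT}{} \sim_{W} \semBc{\cdots\txT}{}$ on the relevant difference set, which combined with write-safety of $\txT$ outside $W$ pins down the value on $\qmvA$. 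This is essentially the argument already rehearsed informally in \Cref{ex:safeapprox}, now run in the abstract.

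I expect the main obstacle to be bookkeeping the definition of $\rapprox{\cdot}{\cdot}$: it is phrased in terms of blockchains $\bcB, \bcBi$ and difference sets $\QmvB$, not directly in terms of states, so I must carefully choose $\bcB, \bcBi$ realizing the reachable states $\bcSt, \bcSt''$ (reachability is where I use that we only quantify over reachable states) and the right $\QmvB$ to extract exactly the agreement I need — and I must be careful that "does not introduce new differences" is about the symmetric difference, so I need to start from states that already agree on the keys of interest. A secondary subtlety is that $\sim$ and $\sim_{\QmvB}$ interact with single-transaction semantics rather than full-blockchain semantics, so I will first record the trivial fact that $\semBc{\txT\bcB}{\bcSt} = \semBc{\bcB}{\semTx{\txT}{\bcSt}}$ and that $\semTx{\txT}{\bcSt}$ is itself reachable when $\bcSt$ is, letting me freely move between the two viewpoints. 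Once these plumbing lemmas are in place, each of the three cases is a short chain of $\sim_{\setenum{\qmvA}}$ steps; finally \Cref{th:swap:mazurkiewicz} is not needed here, but \Cref{ex:txswap:swap-notimplies-pswap} (to be given) will witness strictness of the inclusion.
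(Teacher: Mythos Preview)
Your proposal is correct and follows essentially the same three-way case split as the paper (keys outside $W \cup W'$, keys in $W$, keys in $W'$), with the same core use of write-safety to drop transactions on untouched keys and read-safety to commute past a transaction that does not write the read set. The paper packages the argument through auxiliary lemmas (notably $\txT \sim_{W} \txTi\txT$ under the Bernstein conditions) and works entirely at the blockchain-equivalence level—applying $\rapprox{R}{\txT}$ directly with $\bcB = \txTi$ and $\bcBi = \bcEmpty$—which sidesteps the reachability/state-to-blockchain plumbing you anticipate as an obstacle.
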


\begin{example}[Swappable transactions, not strongly]
  \label{ex:txswap:swap-notimplies-pswap}
  Consider the following functions and transactions 
  of a contract at address $\cmvA$:
  \begin{align*}
    & \contrFun{\funF[1]}{}{\cmdIfTE{\varSender = \pmvA \;\codeand\; \expLookup{\valK[1]} = 0}{\cmdAss{\valK[1]}{1}}{\cmdThrow}} 
    && \txT[1] = \ethtx{1}{\pmvA}{\cmvA}{\funF[1]}{}
    \\
    & \contrFun{\funF[2]}{}{\cmdIfTE{\varSender = \pmvB \;\codeand\; \expLookup{\valK[2]} = 0}{\cmdAss{\valK[2]}{1}}{\cmdThrow}}
    && \txT[2] = \ethtx{1}{\pmvB}{\cmvA}{\funF[2]}{}
  \end{align*}
  We prove that $\txT[1] \swap \txT[2]$.
  First, consider a state $\bcSt$ such that $\bcSt\pmvA\varBalance > 1$,
  $\bcSt\pmvB\varBalance > 1$,  $\bcSt\cmvA\varBalance = n$, $\bcSt\cmvA \valK[1] = 0$ and $\bcSt\cmvA \valK[2] = 0$.
  We have that:
  \[
  \semTx{\txT[1] \txT[2]}{\bcSt} = \bcSt\setenum{\bind{\cmvA.\valK[1]}{1},
    \bind{\cmvA.\valK[2]}{1},\bind{\cmvA.\varBalance}{n + 2}} = \semTx{\txT[2] \txT[1]}{\bcSt}
  \]
  In the second case, let $\bcSt$ be such that $\bcSt\pmvA\varBalance < 1$,
  or $\bcSt\pmvB\varBalance < 1$,
  or $\bcSt\cmvA \valK[1] \neq 0$,
  or $\bcSt\cmvA \valK[2] \neq 0$.
  It is not possible that the guards in $\funF[1]$ and $\funF[2]$
  are both true, so $\txT[1]$ or $\txT[2]$ raise an exception,
  leaving the state unaffected.
  Then, also in this case we have that
  \(
  \semTx{\txT[1] \txT[2]}{\bcSt} = \semTx{\txT[2] \txT[1]}{\bcSt}
  \),
  and so $\txT[1]$ and $\txT[2]$ are swappable.
  However, they are \emph{not} strongly swappable if
  there exist reachable states $\bcSt,\bcSti$
  such that $\bcSt \cmvA\valK[1] = 0 = \bcSti \cmvA\valK[2]$.
  To see why, let
  $W_1 = \setenum{\expGet{\pmvA}{\varBalance}, \expGet{\cmvA}{\varBalance},
    \expGet{\cmvA}{\valK[1]}}$.
  From the code of $\funF[0]$ we see that $W_1$ is the least
  safe over-approximation of the written
  keys of $\txT[1]$ ($\wapprox{W_1}{\txT[1]}$).
  This means that every safe approximation of $\txT[1]$ must include the keys
  of $W_1$.
  Similarly,
  $W_2 = \setenum{\expGet{\pmvB}{\varBalance},
    \expGet{\cmvA}{\varBalance}, \expGet{\cmvA}{\valK[2]}}$
  is the least safe over-approximation of the written keys of
  $\txT[2]$ ($\wapprox{W_2}{\txT[2]}$).
  Since the least safe approximations
  of the keys written by $\txT[1]$ and $\txT[2]$ are not disjoint,
  $\txT[1] \pswap{}{} \txT[2]$ does not hold.
  \hfill\qedef
\end{example}

\Cref{th:pswap:mazurkiewicz} states that
the Mazurkiewicz equivalence $\equivStSeq[\pswap{}{}]$
is stricter than $\equivStSeq[\swap]$.
Together with~\Cref{th:swap:mazurkiewicz},
if $\bcB$ is transformed into $\bcBi$
by exchanging adjacent strongly swappable transactions,
then $\bcB$ and $\bcBi$ are observationally equivalent.

\begin{restatable}{theorem}{thpswapmazurkiewicz}
  \label{th:pswap:mazurkiewicz}
  $\equivStSeq[\pswap{}{}] \; \subseteq \; \equivStSeq[\swap]$.
  \hfill\qedef
\end{restatable}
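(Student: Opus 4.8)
The plan is to derive the statement from \Cref{th:pswap-implies-swap} together with the monotonicity of the Mazurkiewicz construction in its independence relation. First I would check that both $\pswap{}{}$ and $\swap$ meet the hypotheses of \Cref{def:mazurkiewicz}, i.e.\ that they are symmetric and irreflexive relations on $\Tx$. Irreflexivity is immediate, since by definition both relations only relate pairs of \emph{distinct} transactions. Symmetry of $\swap$ follows from the symmetry of $\sim$, since $\txT \swap \txTi$ means $\txT\txTi \sim \txTi\txT$, which is equivalent to $\txTi\txT \sim \txT\txTi$. Symmetry of $\pswap{}{}$ holds because the Bernstein-style condition $(R \cup W) \cap W' = \emptyset = (R' \cup W') \cap W$ is invariant under exchanging the roles of $(\txT,W,R)$ and $(\txTi,W',R')$.

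Next I would invoke \Cref{th:pswap-implies-swap} to obtain the pointwise inclusion $\pswap{}{} \,\subseteq\, \swap$. The core of the argument is then an abstract fact about least congruences: if $I_1 \subseteq I_2$ are symmetric and irreflexive relations on $\Tx$, then $\equivStSeq[I_1] \,\subseteq\, \equivStSeq[I_2]$. To see this, observe that $\equivStSeq[I_2]$ is by definition a congruence on the free monoid $\Tx^*$, and whenever $\txT \mathrel{I_1} \txTi$ we also have $\txT \mathrel{I_2} \txTi$, hence $\txT\txTi \equivStSeq[I_2] \txTi\txT$; therefore $\equivStSeq[I_2]$ is one of the congruences of which $\equivStSeq[I_1]$ is, by \Cref{def:mazurkiewicz}, the least one, so $\equivStSeq[I_1] \subseteq \equivStSeq[I_2]$. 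Instantiating $I_1 = \pswap{}{}$ and $I_2 = \swap$ yields the claim.

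I do not expect any real obstacle here: modulo \Cref{th:pswap-implies-swap}, which may be assumed, the proof is pure reasoning about least congruences and does not touch the semantics of transactions at all. The only point deserving (minor) care is the verification that $\pswap{}{}$ is a legitimate independence relation in the sense of \Cref{def:mazurkiewicz}, in particular its symmetry, which as noted above is built into the symmetric shape of \Cref{def:pswap}.
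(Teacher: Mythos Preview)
Your proposal is correct and takes essentially the same approach as the paper, which simply notes that the result is ``straightforward by \Cref{th:pswap-implies-swap}''. You have merely spelled out the monotonicity-of-least-congruences argument that the paper leaves implicit.
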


Note that if the contract language is Turing-equivalent, 
then finding approximations which satisfy 
the disjointness condition in~\Cref{def:pswap} is not computable, 
and so the relation $\pswap{}{}$ is undecidable.

\paragraph*{Parameterised strong swappability}

Strongly swappability abstracts from the actual static analysis used to obtain
the safe approximations:
it is sufficient that such an analysis exists.
\Cref{def:pswapWR} below parameterises strong swappability over 
a static analysis, which we represent as
a function from transactions to sets of qualified keys,
just requiring it to be a safe approximation.
Formally, we say that $\wset{}$ is a \emph{static analysis of written keys}
when $\wapprox{\wset{\txT}}{\txT}$, for all $\txT$;
similarly, $\rset{}$ is a \emph{static analysis of read keys}
when $\rapprox{\rset{\txT}}{\txT}$, for all $\txT$.

\begin{definition}[\textbf{Parameterised strong swappability}]
  \label{def:pswapWR}
  Let $\wset{}$ and $\rset{}$ be static analyses of written/read keys.
  We say that $\txT$, $\txTi$ are
  \emph{strongly swappable \wrt $\wset{}$ and $\rset{}$},
  in symbols $\txT \pswapWR \txTi$, if:
  \[
  \big( \rset{\txT} \cup \wset{\txT} \big) \cap \wset{\txTi}
  \; = \;
  \emptyset
  \; = \;
  \big( \rset{\txTi} \cup \wset{\txTi} \big) \cap \wset{\txT}
  \tag*{\qedef}
  \]
\end{definition}

Note that an effective procedure for computing $\wset{}$ and $\rset{}$ gives
an effective procedure to determine whether two transactions are (strongly) swappable.

\begin{restatable}{lemma}{lempswapWRimpliespswap}
  \label{lem:pswapWR-implies-pswap}
  For all static analyses $\wset{}$ and $\rset{}$:
  \begin{inlinelist}
  \item \label{lem:pswapWR-implies-pswap:subseteq}
    $\pswapWR \subseteq \pswap{}{}$;
  \item \label{lem:pswapWR-implies-pswap:decidable}
    if $\wset{}$ and $\rset{}$ are computable, then $\pswapWR$ is decidable.
  \end{inlinelist}
  \hfill\qedef
\end{restatable}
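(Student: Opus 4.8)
The plan is to unfold the two relevant definitions; both items then follow almost immediately, so the real content here is just making the bookkeeping precise.

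For item~\ref{lem:pswapWR-implies-pswap:subseteq}, suppose $\txT \pswapWR \txTi$. By assumption $\wset{}$ and $\rset{}$ are static analyses of written and read keys, so $\wapprox{\wset{\txT}}{\txT}$, $\wapprox{\wset{\txTi}}{\txTi}$, $\rapprox{\rset{\txT}}{\txT}$, and $\rapprox{\rset{\txTi}}{\txTi}$ all hold. I would then instantiate the existential in~\Cref{def:pswap} by taking $W := \wset{\txT}$, $W' := \wset{\txTi}$, $R := \rset{\txT}$, $R' := \rset{\txTi}$ as witnesses: the safety requirements of~\Cref{def:pswap} are exactly the four statements just recalled, and the disjointness requirement $(R \cup W) \cap W' = \emptyset = (R' \cup W') \cap W$ is verbatim the condition granted by $\txT \pswapWR \txTi$ via~\Cref{def:pswapWR}. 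Hence $\txT \pswap{}{} \txTi$, which gives the inclusion.

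For item~\ref{lem:pswapWR-implies-pswap:decidable}, I would exhibit a decision procedure: on input $\txT, \txTi$, compute the four sets $\wset{\txT}$, $\wset{\txTi}$, $\rset{\txT}$, $\rset{\txTi}$ (effective since $\wset{}$ and $\rset{}$ are computable), form the unions $\rset{\txT} \cup \wset{\txT}$ and $\rset{\txTi} \cup \wset{\txTi}$, and test whether the first is disjoint from $\wset{\txTi}$ and the second from $\wset{\txT}$. By~\Cref{def:pswapWR} this test accepts exactly when $\txT \pswapWR \txTi$, so $\pswapWR$ is decidable provided the disjointness tests terminate.

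The only point requiring care — and the closest thing to an obstacle — is precisely that last proviso: ``computable'' must be understood as producing a \emph{finite} set of qualified keys (with decidable membership), as is standard for the read/written-variable analyses the paper relies on; under that reading disjointness of the computed sets is decidable and the procedure terminates. I would state this explicitly. It is also worth contrasting this with the undecidability of $\pswap{}{}$ itself noted just above: there the difficulty is the existential quantification over \emph{all} safe approximations, which is eliminated here by fixing the analysis through $\wset{}$ and $\rset{}$.
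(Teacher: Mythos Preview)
Your proposal is correct and follows the same line as the paper, which simply records the proof as ``Trivial.'' Your explicit unfolding of \Cref{def:pswap} and \Cref{def:pswapWR} is exactly the intended argument, and your caveat that ``computable'' should yield finite sets (so that the disjointness tests terminate) is a reasonable point of precision that the paper leaves implicit.
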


From the inclusion in item~\ref{lem:pswapWR-implies-pswap:subseteq}
of \Cref{lem:pswapWR-implies-pswap} and from~\Cref{th:pswap:mazurkiewicz}
we obtain:

\begin{restatable}{theorem}{thpswapWRmazurkiewicz}
  \label{th:pswapWR:mazurkiewicz}
  $\equivStSeq[\pswapWR] \; \subseteq \; \equivStSeq[\pswap{}{}] \; \subseteq \; \equivStSeq[\swap]$.
  \hfill\qedef
\end{restatable}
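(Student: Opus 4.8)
The plan is to reduce the statement to~\Cref{th:pswap:mazurkiewicz}, which already supplies the second inclusion $\equivStSeq[\pswap{}{}] \subseteq \equivStSeq[\swap]$. Hence it only remains to establish the first inclusion $\equivStSeq[\pswapWR] \subseteq \equivStSeq[\pswap{}{}]$, and then to chain the two.

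First I would record an elementary monotonicity property of Mazurkiewicz equivalence in its independence relation: if $I$ and $I'$ are symmetric and irreflexive relations on $\Tx$ with $I \subseteq I'$, then $\equivStSeq[I] \subseteq \equivStSeq[I']$. This follows directly from~\Cref{def:mazurkiewicz}. The relation $\equivStSeq[I']$ is by definition a congruence on the free monoid $\Tx^*$, and since $I \subseteq I'$ it in particular satisfies $\txT \, I \, \txTi \implies \txT\txTi \equivStSeq[I'] \txTi\txT$ for all $\txT, \txTi \in \Tx$. As $\equivStSeq[I]$ is the \emph{least} congruence with this property, we conclude $\equivStSeq[I] \subseteq \equivStSeq[I']$.

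Second, I would instantiate this with $I = \pswapWR$ and $I' = \pswap{}{}$. The hypothesis $I \subseteq I'$ is exactly item~\ref{lem:pswapWR-implies-pswap:subseteq} of~\Cref{lem:pswapWR-implies-pswap}. Both relations are symmetric (the disjointness conditions in~\Cref{def:pswap} and~\Cref{def:pswapWR} are invariant under exchanging $\txT$ and $\txTi$) and irreflexive (both relate only pairs of distinct transactions; more generally $\equivStSeq[\cdot]$ depends only on the irreflexive part of the relation, so this causes no loss). This yields $\equivStSeq[\pswapWR] \subseteq \equivStSeq[\pswap{}{}]$, and combining it with~\Cref{th:pswap:mazurkiewicz} gives $\equivStSeq[\pswapWR] \subseteq \equivStSeq[\pswap{}{}] \subseteq \equivStSeq[\swap]$, as required.

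I do not expect a genuine obstacle here. The only point needing a little care is the monotonicity lemma, and specifically the observation that $\equivStSeq[\pswap{}{}]$ itself satisfies the generating condition for the smaller relation $\pswapWR$ — which is immediate from $\pswapWR \subseteq \pswap{}{}$. A secondary, purely formal, subtlety is checking that the symmetry and irreflexivity hypotheses of~\Cref{def:mazurkiewicz} hold, so that $\equivStSeq[\pswapWR]$ is well defined; this is routine given the symmetric shape of the disjointness condition and the restriction to distinct transactions.
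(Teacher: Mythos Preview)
Your proposal is correct and matches the paper's own argument, which simply invokes item~\ref{lem:pswapWR-implies-pswap:subseteq} of \Cref{lem:pswapWR-implies-pswap} together with \Cref{th:pswap:mazurkiewicz}. Your explicit monotonicity lemma for $\equivStSeq[\cdot]$ and the checks of symmetry and irreflexivity just spell out what the paper leaves implicit.
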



\section{True concurrency for blockchains}
\label{sec:txpar}

Given a swappability relation $\relR$, 
we transform a sequence of transactions $\bcB$ into an 
\emph{occurrence net} $\PNet{\relR}{\bcB}$,
which describes the partial order induced by $\relR$. 
Any concurrent execution of the transactions in $\bcB$
which respects this partial order is equivalent 
to the serial execution of $\bcB$ (\Cref{th:bc-to-pnet}).

\paragraph{From blockchains to occurrence nets}

We start by recapping the notion of Petri net~\cite{Reisig85book}.
A \emph{Petri net} is a tuple
$\netN = (\Places, \Transitions, \Arcs, \markM[0])$,
where $\Places$ is a set of \emph{places},
$\Transitions$ is a set of \emph{transitions}
(with $\Places \cap \Transitions = \emptyset$),
and
$\Arcs : (\Places \times \Transitions) \cup (\Transitions \times \Places) \rightarrow \Nat$
is a \emph{weight function}.
The state of a net is a \emph{marking}, \ie a multiset
$\markM: \Places \rightarrow \Nat$ defining how many \emph{tokens}
are contained in each place;
we denote with $\markM[0]$ the initial marking.
The behaviour of a Petri net is specified as 
a transition relation between markings:
intuitively, a transition $\trT$ is enabled at $\markM$ 
when each place $\placeP$ 
has at least $\Arcs(\placeP,\trT)$ tokens in $\markM$.
When an enabled transition $\trT$ is fired, it consumes 
$\Arcs(\placeP,\trT)$ tokens from each $\placeP$,
and produces $\Arcs(\trT,\placePi)$ tokens in each $\placePi$.
Formally, given $x \in \Places \cup \Transitions$, 
we define the \emph{preset} $\pre{x}$ and the \emph{postset} $\post{x}$ 
as multisets:
$\pre{x}(y) = \Arcs(y,x)$, and 
$\post{x}(y) = \Arcs(x,y)$.
A transition $\trT$ is \emph{enabled} at $\markM$
when $\pre{\trT} \subseteq \markM$.
The transition relation between markings is defined as
$\markM \trans{\trT} \markM - \pre{\trT} + \post{\trT}$, 
where $\trT$ is enabled.
We say that $\trT[1] \cdots \trT[n]$ is a 
\emph{firing sequence from $\markM$ to $\markMi$} when
$\markM \trans{\trT[1]} \cdots \trans{\trT[n]} \markMi$,
and in this case we say that
$\markMi$ is \emph{reachable from} $\markM$.
We say that $\markMi$ is \emph{reachable} when it is reachable from $\markM[0]$.

An \emph{occurrence net}~\cite{Best87tcs} is a Petri net such that:
\begin{inlinelist}
\item \label{item:petri:onet:p}
$\card{\post{\plP}} \leq 1$ for all $\plP$;
\item \label{item:petri:onet:mzero}
$\card{\pre{\plP}} = 1$ if $\plP \not\in \markM[0]$, and
$\card{\pre{\plP}} = 0$ if $\plP\in \markM[0]$;
\item \label{item:petri:onet:relation}
$\Arcs$ is a relation, \ie $\Arcs(x,y) \leq 1$ for all $x,y$;
\item \label{item:petri:onet:po}
$\Arcs^*$ is a acyclic, \ie
$\forall x,y \in \Places \cup \Transitions : (x,y) \in \Arcs^* \land (y,x) \in \Arcs^* \implies x=y$
(where $\Arcs^*$ is the reflexive and transitive closure of $\Arcs$).
\end{inlinelist}

In~\Cref{def:bc-to-pnet} we transform 
a blockchain $\bcB = \txT[1] \cdots \txT[n]$ 
into a Petri net $\PNet{\relR}{\bcB}$, 
where $\relR$ is an arbitrary relation between transactions.
Although any relation $\relR$ ensures that $\PNet{\relR}{\bcB}$ 
is an occurrence net (\Cref{lem:bc-to-pnet:onet} below),
our main results hold when $\relR$ is a strong swappability relation.
The transformation works as follows:
the $i$-th transaction in $\bcB$ is rendered as a transition $(\txT[i],i)$ in $\PNet{\relR}{\bcB}$, and
transactions related by $\relR$ are transformed into concurrent transitions.
Technically, this concurrency is specified as 
a relation $<$ between transitions, such that
$(\txT[i],i) < (\txT[j],j)$ whenever $i < j$, 
but $\txT[i]$ and $\txT[j]$ are not related by $\relR$.
The places, the weight function, and the initial marking of $\PNet{\relR}{\bcB}$ 
are chosen to ensure that the firing ot transitions respects the relation $<$.

\begin{figure}[t]
  \begin{align*}
    & \Transitions 
    = \setcomp{(\txT[i],i)}{1 \leq i \leq n}
    \\[5pt] 
    & \Places 
    \; = \setcomp{(*,\trT)}{\trT \in \Transitions} \cup 
      \setcomp{(\trT,*)}{\trT \in \Transitions} \cup \setcomp{(\trT,\trTi)}{\trT < \trTi}
    \\
    & \hspace{22pt} \text{where }
      (\txT,i) < (\txTi,j) \eqdef (i < j) \,\land\, \neg(\txT \,\relR\, \txTi)
    \\[5pt]
    & \Arcs(x,y) 
    = \begin{cases}
      1 & \text{if $y=\trT$ and \big($x=(*,\trT)$ or $x=(\trTi,\trT)$\big)} \\
      1 & \text{if $x=\trT$ and \big($y=(\trT,*)$ or $y =(\trT,\trTi)$\big)} \\
      0 & \text{otherwise}
    \end{cases}
    \hspace{10pt}
    \markM[0](\plP)
    = \begin{cases}
      1 & \text{if $\plP = (*,\trT)$} \\
      0 & \text{otherwise}
    \end{cases}
  \end{align*}
  \vspace{-10pt}
  \caption{Construction of a Petri net from a blockchain $\bcB = \txT[1] \cdots \txT[n]$.}
  \label{def:bc-to-pnet}
\end{figure}

\begin{example}
  \label{ex:petri:1}
  Consider the following transactions and functions of a contract $\cmvA$:
  \begin{align*}
    & \txT[\funF] = \ethtx{0}{\pmvA}{\cmvA}{\funF}{}
    &&  \contrFunSig{\funF}{} \, \{
       \cmdIfTE{\expLookup{\valX} = 0}{\cmdAss{\valY}{1}}{\cmdThrow} \}
    \\
    & \txT[\funG] = \ethtx{0}{\pmvA}{\cmvA}{\funG}{}
    && \contrFunSig{\funG}{} \, \{
    \cmdIfTE{\expLookup{\valY} = 0}{\cmdAss{\valX}{1}}{\cmdThrow} \}
    \\
    & \txT[\funH] = \ethtx{0}{\pmvA}{\cmvA}{\funH}{}
    && \contrFunSig{\funH}{} \, \{
    \cmdAss{\valZ}{1} \}
  \end{align*}
  Let
  \(
  \QmvA[\funF]^{w} = \QmvA[\funG]^{r} = \setenum{\cmvA.\valY},
  \QmvA[\funF]^{r} = \QmvA[\funG]^{w} = \setenum{\cmvA.\valX},
  \QmvA[\funH]^{w} = \setenum{\cmvA.\valZ},
  \QmvA[\funH]^{r} = \emptyset
  \). 
  It is easy to check that 
  these sets are safe approximations of their transactions
  (\eg, $\QmvA[\funF]^{w}$ safely approximates the keys written by $\txT[\funF]$).
  By~\Cref{def:pswap} we have that
  $\txT[\funF] \pswap{}{} \txT[\funH]$,
  $\txT[\funG] \pswap{}{} \txT[\funH]$,
  but $\neg(\txT[\funF] \pswap{}{} \txT[\funG])$.
  We display $\PNet{\pswap{}{}}{\txT[\funF]\txT[\funH]\txT[\funG]}$ 
  in~\Cref{fig:petri:1},
  where $\trT[\funF] = (\txT[\funF],1)$, 
  $\trT[\funH] = (\txT[\funH],2)$, and
  $\trT[\funG] = (\txT[\funG],3)$.
  Note that $\trT[\funG]$ can only be fired after $\trT[\funF]$,
  while $\trT[\funH]$ can be fired independently from 
  $\trT[\funF]$ and $\trT[\funG]$.
  This is coherent with the fact that $\txT[\funH]$ is swappable with
  both $\txT[\funF]$ and $\txT[\funG]$, while $\txT[\funF]$ and $\txT[\funG]$
  are not swappable.
  \hfill\qedef
\end{example}

\begin{figure}[t]
  \centering
  \begin{tikzpicture}[>=stealth',scale=0.7] 
    \tikzstyle{place}=[circle,thick,draw=black!75,minimum size=5mm]
    \tikzstyle{transition}=[rectangle,thick,draw=black!75,minimum size=5mm]
    \tikzstyle{edge}=[->,thick,draw=black!75]
    \tikzstyle{edgered}=[->,thick,draw=red!75]
    \tikzstyle{edgeblu}=[->,thick,draw=blue!75]
    \node[place] (p*1) at (0,3) [label = above:{$(*,\trT[\funF])$}] {};
    \node[place] (p*2) at (11,3) [label = above:{$(*,\trT[\funH])$}] {};
    \node[place] (p*3) at (6,4.5) [label = right:{$(*,\trT[\funG])$}] {};
    \node[place] (p1*) at (2,4.5) [label = right:{$(\trT[\funF],*)$}] {};
    \node[place] (p2*) at (15,3) [label = above:{$(\trT[\funH],*)$}] {};
    \node[place] (p3*) at (8,3) [label = above:{$(\trT[\funG],*)$}] {};
    \node[place] (p13) at (4,3) [label = above:{$(\trT[\funF],\trT[\funG])$}] {};
    \node[place,token = 1] (tok1) at (0,3) {};
    \node[place,token = 1] (tok2) at (11,3) {};
    \node[place,token = 1] (tok3) at (6,4.5) {};
    \node[transition] (t1) at (2,3) {$\trT[\funF]$}
    edge[pre] node {} (p*1)
    edge[post] node {} (p1*)
    edge[post] node {} (p13);
    \node[transition] (t2) at (13,3) {$\trT[\funH]$}
    edge[pre] node {} (p*2)
    edge[post] node {} (p2*);
    \node[transition] (t3) at (6,3) {$\trT[\funG]$}
    edge[pre] node {} (p*3)
    edge[pre] node {} (p13)
    edge[post] node {} (p3*);
  \end{tikzpicture}
  \caption{Occurrence net for \Cref{ex:petri:1}.}
  \label{fig:petri:1}
\end{figure}

\begin{restatable}{lemma}{lembctopnetonet}
\label{lem:bc-to-pnet:onet}
  $\PNet{\relR}{\bcB}$ is an occurrence net, for all $\relR$ and $\bcB$.
\end{restatable}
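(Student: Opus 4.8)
The plan is to verify the four defining conditions of an occurrence net directly from the explicit construction in~\Cref{def:bc-to-pnet}, using only elementary combinatorial reasoning about the shape of the places, the weight function $\Arcs$, and the initial marking $\markM[0]$. First I would record two structural observations that will be reused throughout. The relation $<$ on $\Transitions$ is \emph{irreflexive and antisymmetric}: by definition $(\txT[i],i) < (\txT[j],j)$ requires $i<j$, so $<$ is included in the strict order on indices, hence acyclic as well. Second, $\Arcs$ takes values in $\setenum{0,1}$ by inspection of its case definition, which immediately settles condition~\ref{item:petri:onet:relation} ($\Arcs$ is a relation). These two facts do the bulk of the work.

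For condition~\ref{item:petri:onet:p}, I would compute $\post{\plP}$ for each of the three kinds of place. If $\plP = (*,\trT)$, then $\Arcs(\plP,y)=0$ for every transition $y$ (the first case of $\Arcs$ fires only when $\plP$ is a \emph{source} of an arc into $\trT$, not an output), so $\card{\post{\plP}}=0$. If $\plP = (\trT,*)$, then $\Arcs(\plP,y)=0$ for every $y$ as well, since $*$ is not a transition. If $\plP = (\trT,\trTi)$ with $\trT<\trTi$, then the only $y$ with $\Arcs(\plP,y)=1$ is $y=\trTi$, using antisymmetry of $<$ to rule out $y=\trT$; so $\card{\post{\plP}}=1$. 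In all cases $\card{\post{\plP}}\le 1$. For condition~\ref{item:petri:onet:mzero} I would argue symmetrically: the places in $\markM[0]$ are exactly those of the form $(*,\trT)$, and for such a place $\pre{\plP}$ is empty (no transition $x$ has $\Arcs(x,(*,\trT))=1$, since the second case of $\Arcs$ produces only places $(\trT,*)$ or $(\trT,\trTi)$). For a place $(\trT,*)$ the unique $x$ with $\Arcs(x,\plP)=1$ is $x=\trT$, and for a place $(\trT,\trTi)$ the unique such $x$ is again $\trT$ (here irreflexivity of $<$ guarantees $\trT\neq\trTi$, so the place is well-formed and its preset is the singleton $\setenum{\trT}$). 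Hence $\card{\pre{\plP}}=1$ exactly for the places not in $\markM[0]$.

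For condition~\ref{item:petri:onet:po} I would show $\Arcs^*$ is acyclic by observing that every arc in $\Arcs$ goes either from a place $(*,\trT)$ or $(\trTi,\trT)$ \emph{into} $\trT$, or from $\trT$ \emph{out to} a place $(\trT,*)$ or $(\trT,\trTi)$. Thus any directed path alternates places and transitions, and consecutive transitions along a path are related by $<$ (a path $\trT \to (\trT,\trTi) \to \trTi$ forces $\trT<\trTi$). Since $<$ is contained in the strict index order, a directed cycle in $\Arcs$ would induce a cycle in that strict order, which is impossible; the only way to have both $(x,y)\in\Arcs^*$ and $(y,x)\in\Arcs^*$ is $x=y$. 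I do not expect any real obstacle here — the construction was evidently designed to make all four conditions hold — but the one point demanding a little care is bookkeeping the three distinct shapes of places uniformly across conditions~\ref{item:petri:onet:p} and~\ref{item:petri:onet:mzero}, and making sure irreflexivity of $<$ is invoked wherever a place of the form $(\trT,\trTi)$ is asserted to be legitimate (i.e.\ $\trT\neq\trTi$).
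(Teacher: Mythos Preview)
Your approach is essentially the same as the paper's: verify conditions~\ref{item:petri:onet:p}--\ref{item:petri:onet:relation} by direct inspection of the construction, and for~\ref{item:petri:onet:po} argue that any nontrivial $\Arcs$-path projects to a chain in $<$, which embeds into the strict index order and hence cannot cycle. The paper dispatches~\ref{item:petri:onet:p}--\ref{item:petri:onet:relation} in one sentence and spends its effort on~\ref{item:petri:onet:po}, treating separately the degenerate case where a putative cycle visits only one transition (forcing a place $(\trT,\trT)$ and hence $\trT<\trT$); your argument handles that case implicitly via ``consecutive transitions along the cycle'', which is fine once unpacked.

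There is, however, a concrete bookkeeping slip in your treatment of condition~\ref{item:petri:onet:p}. You claim that for $\plP=(*,\trT)$ one has $\Arcs(\plP,y)=0$ for every transition $y$, hence $\card{\post{\plP}}=0$. This is false: by the first clause of the definition of $\Arcs$, taking $x=(*,\trT)$ and $y=\trT$ gives $\Arcs((*,\trT),\trT)=1$, so $\post{(*,\trT)}=\setenum{\trT}$ and $\card{\post{(*,\trT)}}=1$. The parenthetical justification you give (``$\plP$ is a source of an arc into $\trT$, not an output'') is exactly backwards --- being the source of an arc into $\trT$ is precisely what puts $\trT$ into $\post{\plP}$. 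The conclusion $\card{\post{\plP}}\le 1$ survives, of course, but the case analysis needs to be corrected. This is exactly the kind of place-shape bookkeeping you flagged as the delicate point, and it bit you.
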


\paragraph{Step firing sequences}

\Cref{th:bc-to-pnet} below establishes a correspondence between 
concurrent and serial execution of transactions.
Since the semantics of serial executions is given in terms of blockchain states $\bcSt$, 
to formalise this correspondence we use the same semantics domain
also for concurrent executions.
This is obtained in two steps.
First, we define concurrent executions of $\bcB$  
as the \emph{step firing sequences} (\ie finite sequences of \emph{sets} of transitions) 
of $\PNet{\pswap{}{}}{\bcB}$.
Then, we give a semantics to step firing sequences, in terms of blockchain states.

We denote finite sets of transitions, called \emph{steps},
as $\trSU,\trSUi,\hdots$.
Their preset and postset are defined as
\(
\textstyle
\pre{\trSU} = \sum_{\plP \in \trSU} \pre{\plP} 
\)
and
\(
\textstyle
\post{\trSU} = \sum_{\plP \in \trSU} \post{\plP}
\),
respectively.
We say that $\trSU$ is \emph{enabled at $\markM$} 
when $\pre{\trSU} \leq \markM$, and in this case
firing $\trSU$ results in the move 
$\markM \trans{\trSU} \markM - \pre{\trSU} + \post{\trSU}$.
Let $\vec{\trSU} = \trSU[1] \cdots \trSU[n]$ be a finite sequence of steps.
We say that $\vec{\trSU}$ is a 
\emph{step firing sequence from $\markM$ to $\markMi$}
if $\markM \trans{\trSU[1]} \cdots \trans{\trSU[n]} \markMi$, 
and in this case we write
$\markM \xrightarrow{\vec{\trSU}} \markMi$.

\paragraph{Concurrent execution of transactions}

We now define how to execute transactions in parallel. 
The idea is to execute transactions in \emph{isolation},
and then merge their changes, whenever they are mutually disjoint.
The state updated resulting from the execution of a transaction
are formalised as in \Cref{def:state-update}.

An \emph{update collector} is a function $\WR{}$ that,
given a state $\bcSt$ and a transaction $\txT$,
gives an update $\mSubst = \WR{\bcSt,\txT}$ which maps (at least)
the updated qualified keys to their new values.
In practice, update collectors can be obtained by instrumenting
the run-time environment of smart contracts, 
so to record the state changes resulting from the execution of transactions.
We formalise update collectors abstracting from the 
implementation details of such an instrumentation:

\begin{definition}[\textbf{Update collector}]
  \label{def:wr}
  We say that a function $\WR{}$ is an \emph{update collector} when
  $\semTx{\txT}{\bcSt} = \bcSt (\WR{\bcSt,\txT})$,
  for all $\bcSt$ and $\txT$.
  \hfill\qedef
\end{definition}

There exists a natural ordering of collectors,
which extends the ordering between state updates
(\ie, set inclusion, when interpreting them as sets of substitutions):
namely, $\WR{} \sqsubseteq \WRi{}$ holds when
$\forall \bcSt, \txT : \WR{\bcSt,\txT} \subseteq \WRi{\bcSt,\txT}$.
The following~\namecref{lem:wr:minimal} characterizes the least 
update collector \wrt this ordering.

\begin{lemma}[\textbf{Least update collector}]
  \label{lem:wr:minimal}
  Let \mbox{$\WRmin{\bcSt,\txT} = \semTx{\txT}{\bcSt} - \bcSt$},
  where we define $\bcSti - \bcSt$ as
  $\bigcup_{{\bcSti \qmvA \neq \bcSt \qmvA}} \setenum{\bind{\qmvA}{\bcSti \qmvA}}$.
  Then, $\WRmin{}$ is the least update collector.
  \hfill\qedef
\end{lemma}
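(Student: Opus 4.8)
The plan is to verify two things: first, that the proposed $\WRmin{}$ is indeed an update collector in the sense of \Cref{def:wr}; and second, that it is below every other update collector in the $\sqsubseteq$ ordering. For the first part, I would fix an arbitrary state $\bcSt$ and transaction $\txT$, write $\bcSti = \semTx{\txT}{\bcSt}$, and show $\bcSt(\WRmin{\bcSt,\txT}) = \bcSti$ by unfolding \Cref{def:state-update}. The update $\WRmin{\bcSt,\txT} = \bigcup_{\bcSti\qmvA \neq \bcSt\qmvA}\setenum{\bind{\qmvA}{\bcSti\qmvA}}$ has $\keys{\WRmin{\bcSt,\txT}} = \setcomp{\qmvA}{\bcSti\qmvA \neq \bcSt\qmvA}$. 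For a qualified key $\qmvA$ in this set, $(\bcSt(\WRmin{\bcSt,\txT}))\qmvA = \bcSti\qmvA$ by the first case of \Cref{def:state-update}; for $\qmvA$ not in this set, the second case gives $(\bcSt(\WRmin{\bcSt,\txT}))\qmvA = \bcSt\qmvA$, but $\qmvA\notin\keys{\WRmin{\bcSt,\txT}}$ means precisely $\bcSti\qmvA = \bcSt\qmvA$, so the two states agree at $\qmvA$ as well. Hence $\bcSt(\WRmin{\bcSt,\txT}) = \bcSti = \semTx{\txT}{\bcSt}$, which is the defining condition of an update collector.

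For the second part, let $\WR{}$ be any update collector; I must show $\WRmin{\bcSt,\txT} \subseteq \WR{\bcSt,\txT}$ for all $\bcSt,\txT$, where updates are read as sets of substitutions. Take any $\bind{\qmvA}{\valV} \in \WRmin{\bcSt,\txT}$; by construction this means $\valV = \bcSti\qmvA$ and $\bcSti\qmvA \neq \bcSt\qmvA$, where again $\bcSti = \semTx{\txT}{\bcSt}$. Since $\WR{}$ is an update collector, $\bcSt(\WR{\bcSt,\txT}) = \bcSti$, so in particular $(\bcSt(\WR{\bcSt,\txT}))\qmvA = \bcSti\qmvA \neq \bcSt\qmvA$. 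By \Cref{def:state-update}, if $\qmvA$ were \emph{not} in $\keys{\WR{\bcSt,\txT}}$ then $(\bcSt(\WR{\bcSt,\txT}))\qmvA$ would equal $\bcSt\qmvA$ --- a contradiction. Therefore $\qmvA \in \keys{\WR{\bcSt,\txT}}$, and then the first case of \Cref{def:state-update} forces $\WR{\bcSt,\txT}\,\amvA\,\valK = (\bcSt(\WR{\bcSt,\txT}))\qmvA = \bcSti\qmvA = \valV$ (writing $\qmvA = \amvA.\valK$), i.e.\ $\bind{\qmvA}{\valV} \in \WR{\bcSt,\txT}$. This establishes $\WRmin{} \sqsubseteq \WR{}$.

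The one point that needs a little care --- really the only subtlety --- is the bookkeeping around the identification of a state update with a set of substitutions and the exact reading of $\keys{\cdot}$ versus the domain of the underlying partial function; one should make sure that $\WRmin{\bcSt,\txT}$, built as a union of singleton substitutions over possibly many qualified keys, is genuinely a well-formed state update (a partial function $\Addr \mapstopart (\Val \mapstopart \Val)$) and that its $\keys{\cdot}$ set is exactly $\setcomp{\qmvA}{\bcSti\qmvA \neq \bcSt\qmvA}$. Given that, both inclusions are immediate from \Cref{def:state-update}, and uniqueness of the least element follows from antisymmetry of $\sqsubseteq$. I do not anticipate any real obstacle here; the lemma is essentially a direct unfolding of the two definitions.
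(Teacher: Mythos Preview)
Your proof is correct and complete. The paper does not actually give a proof of this lemma --- it is stated in the main text and left as immediate from the definitions --- so your careful unfolding of \Cref{def:state-update} and \Cref{def:wr} is exactly the argument one would supply, and there is nothing to compare against.
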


The merge of two state updates is the union of the corresponding substitutions;
to avoid collisions, we make the merge operator undefined 
when the domains of the two updates overlap.

\begin{definition}[\textbf{Merge of state updates}]
  \label{def:merge}
  Let $\mSubst[0]$, $\mSubst[1]$ be state updates.
  When $\keys{\mSubst[0]} \cap \keys{\mSubst[1]} = \emptyset$,
  we define $\mSubst[0] \mrg \mSubst[1]$ as follows:
  \[
  (\mSubst[0] \mrg \mSubst[1]) \qmvA =
  \begin{cases}
    \mSubst[0] \qmvA & \text{if $\qmvA \in \keys{\mSubst[0]}$} \\
    \mSubst[1] \qmvA & \text{if $\qmvA \in \keys{\mSubst[1]}$} \\
    \bot & \text{otherwise}
  \end{cases}
  \tag*{\qedef}
  \]
\end{definition}

The merge operator enjoys the commutative monoidal laws, and can therefore be
extended to (finite) sets of state updates.

We now associate step firing sequences with state updates.
The semantics of a step $\trSU = \setenum{(\txT[1],1),\ldots,(\txT[n],n)}$
in $\bcSt$ is obtained by applying to $\bcSt$ 
the merge of the updates $\WR{\bcSt,\txT[i]}$, 
for all $i \in 1..n$ --- whenever the merge is defined.
The semantics of a step firing sequence is then obtained by
folding that of its steps.

\begin{definition}[\textbf{Semantics of step firing sequences}]
  We define the semantics of step firing sequences,
  given $\WR{}$ and $\bcSt$, as:
  \[
  \msem{\WR{}}{\bcSt}{\bcEmpty} 
  \; = \;
  \bcSt
  \hspace{30pt}
  \msem{\WR{}}{\bcSt}{\trSU \vec{\trSU}} 
  \; = \;
  \msem{\WR{}}{\bcSti}{\vec{\trSU}} 
  \quad \text{where } \bcSti = 
  \msem{\WR{}}{\bcSt}{\trSU}
  =
  \bcSt \bigoplus_{(\txT,i) \in \trSU} \WR{\bcSt,\txT}
  \tag*{\qedef}
  \]
\end{definition}

\begin{example}
  \label{ex:txpar:txs-semantics}
  Let $\trT[\funF], \trT[\funG]$, and $\trT[\funH]$ be as in~\Cref{ex:petri:1},
  and let $\bcSt \cmvA x = \bcSt \cmvA y = 0$.
  Since
  $\WRmin{\bcSt,\txT[\funF]} = \setenum{\bind{\cmvA.y}{1}}$,
  $\WRmin{\bcSt,\txT[\funG]} = \setenum{\bind{\cmvA.x}{1}}$, and
  $\WRmin{\bcSt,\txT[\funH]} = \setenum{\bind{\cmvA.z}{1}}$,
  we have:
  \begin{align*}
    & \msem{\WRmin{}}{\bcSt}{\setenum{\trT[\funF],\trT[\funH]}} 
      = \bcSt(\setenum{\bind{\cmvA.y}{1}} \mrg \setenum{\bind{\cmvA.z}{1}}) 
      = \bcSt\setenum{\bind{\cmvA.y}{1},\bind{\cmvA.z}{1}}
    \\
    & \msem{\WRmin{}}{\bcSt}{\setenum{\trT[\funG],\trT[\funH]}} 
      = \bcSt(\setenum{\bind{\cmvA.x}{1}} \mrg \setenum{\bind{\cmvA.z}{1}}) 
      = \bcSt\setenum{\bind{\cmvA.x}{1},\bind{\cmvA.z}{1}}
    \\
    & \msem{\WRmin{}}{\bcSt}{\setenum{\trT[\funF],\trT[\funG]}} 
      = (\bcSt\setenum{\bind{\cmvA.y}{1}} \mrg \setenum{\bind{\cmvA.x}{1}})
      = \bcSt\setenum{\bind{\cmvA.y}{1},\bind{\cmvA.x}{1}}
  \end{align*}
  Note that, for all $\bcSt$:
  \[
  \begin{array}{c}
    \semBc{\txT[\funF]\txT[\funH]}{\bcSt} 
    \; = \; \semBc{\txT[\funH]\txT[\funF]}{\bcSt}
    \; = \; \bcSt\setenum{\bind{\cmvA.y}{1},\bind{\cmvA.z}{1}} 
    \; = \; \msem{\WRmin{}}{\bcSt}{\setenum{\trT[\funF],\trT[\funH]}} 
    \\
    \semBc{\txT[\funG]\txT[\funH]}{\bcSt} 
    \; = \; \semBc{\txT[\funH]\txT[\funG]}{\bcSt}
    \; = \; \bcSt\setenum{\bind{\cmvA.x}{1},\bind{\cmvA.z}{1}} 
    \; = \;
    \msem{\WRmin{}}{\bcSt}{\setenum{\trT[\funG],\trT[\funH]}}
  \end{array}
  \]
  So, the serial execution of $\txT[\funF]$ and $\txT[\funH]$ 
  (in both orders) is equal to their concurrent execution
  (similarly  for $\txT[\funG]$ and $\txT[\funH]$).
  Instead, for all $\bcSt$ such that $\bcSt \cmvA \valX = \bcSt \cmvA \valY = 0$: 
  \[
    \semBc{\txT[\funF]\txT[\funG]}{\bcSt} = 
    \bcSt\setenum{\bind{\cmvA.y}{1}}
    \qquad
    \semBc{\txT[\funG]\txT[\funF]}{\bcSt} = 
    \bcSt\setenum{\bind{\cmvA.x}{1}}
    \qquad
    \msem{\WRmin{}}{\bcSt}{\setenum{\trT[\funF],\trT[\funG]}} =
    \bcSt\setenum{\bind{\cmvA.y}{1},\bind{\cmvA.x}{1}}
  \]
  So, concurrent executions of $\txT[\funF]$ and $\txT[\funG]$ may 
  differ from serial ones.
  This is coherent with the fact that, in~\Cref{fig:petri:1},
  $\trT[\funF]$ and $\trT[\funG]$ are \emph{not} concurrent.
  \hfill\qedef
\end{example}

\paragraph{Concurrent execution of blockchains}

\Cref{th:bc-to-pnet} relates serial executions of transactions 
to concurrent ones (which are rendered as step firing sequences).
Item~\ref{th:bc-to-pnet:confluence} establishes a confluence property:
if two step firing sequences lead to the same marking,
then they also lead to the same blockchain state.
Item~\ref{th:bc-to-pnet:bc} ensures that the blockchain,
interpreted as a sequence of transitions, is a step firing sequence,
and it is \emph{maximal} 
(\ie, there is a bijection between the transactions in the blockchain 
and the transitions of the corresponding net).
Finally, item~\ref{th:bc-to-pnet:maximal}
ensures that executing maximal step firing sequences 
is equivalent to executing serially the blockchain.

\begin{restatable}{theorem}{thbctopnet}
  \label{th:bc-to-pnet}
  Let $\bcB = \txT[1] \cdots \txT[n]$. 
  Then, in $\PNet{\pswap{}{}}{\bcB}$:
  \begin{enumerate}[(a)]
    \vspace{-5pt}
  \item \label{th:bc-to-pnet:confluence}
    if $\markM[0] \xrightarrow{\vec{\trSU}} \markM$ and
    $\markM[0] \xrightarrow{\vec{\trSUi}} \markM$,
    then
    $\msem{\WRmin{}}{\bcSt}{\vec{\trSU}} = \msem{\WRmin{}}{\bcSt}{\vec{\trSUi}}$,
    for all reachable $\bcSt$;
    
  \item \label{th:bc-to-pnet:bc}
    $\setenum{(\txT[1],1)} \cdots \setenum{(\txT[n],n)}$
    is a maximal step firing sequence;

  \item \label{th:bc-to-pnet:maximal}
    for all maximal step firing sequences $\vec{\trSU}$,
    for all reachable $\bcSt$,
    $\msem{\WRmin{}}{\bcSt}{\vec{\trSU}} = \semBc{\bcB}{\bcSt}$.

\end{enumerate}
\end{restatable}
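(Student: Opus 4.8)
The plan is to prove all three items through a single lemma. First note that, by \Cref{def:bc-to-pnet}, every place $(*,\trT)$ has empty preset, so each transition can fire at most once; hence a transition $\trT$ has fired along a step firing sequence from $\markM[0]$ iff $(*,\trT)$ is empty in the reached marking, and in particular the set $T$ of fired transitions is determined by that marking. The lemma I would establish is: \emph{for every step firing sequence $\markM[0]\xrightarrow{\vec{\trSU}}\markM$ of $\PNet{\pswap{}{}}{\bcB}$, if $(\txT[i_1],i_1),\dots,(\txT[i_m],i_m)$ are the fired transitions listed so that $i_1<\cdots<i_m$, then $\msem{\WRmin{}}{\bcSt}{\vec{\trSU}}=\semBc{\txT[i_1]\cdots\txT[i_m]}{\bcSt}$ for every reachable $\bcSt$.} Granting it: the confluence claim is immediate, since $\vec{\trSU}$ and $\vec{\trSUi}$ reach the same $\markM$, hence fire the same $T$, hence both evaluate to $\semBc{\txT[i_1]\cdots\txT[i_m]}{\bcSt}$; the last claim follows because a maximal step firing sequence must fire \emph{every} transition (otherwise one could still fire a $<$-minimal un-fired transition), so $T=\Transitions$, $m=n$ and $i_\ell=\ell$, giving $\semBc{\txT[1]\cdots\txT[n]}{\bcSt}=\semBc{\bcB}{\bcSt}$; and the remaining claim is a direct induction on $i$ showing that after $\setenum{\trT[1]}\cdots\setenum{\trT[i-1]}$ the preset of $\trT[i]=(\txT[i],i)$ — namely $\setenum{(*,\trT[i])}$ together with the places $(\trT[j],\trT[i])$ for $j<i$ with $\neg(\txT[j]\pswap{}{}\txT[i])$ — is marked (the first token untouched, each other produced when $\trT[j]$ fired and consumed only by $\trT[i]$), so $\setenum{(\txT[1],1)}\cdots\setenum{(\txT[n],n)}$ is a step firing sequence, and it is maximal since it contains each transition exactly once.

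Two ingredients prove the lemma. The first is a net-theoretic observation: \emph{if $\trT[a]=(\txT[a],a)$ and $\trT[b]=(\txT[b],b)$ with $a<b$ both lie in a step that is enabled at a reachable marking, then $\txT[a]\pswap{}{}\txT[b]$} — for if not, \Cref{def:bc-to-pnet} puts $\trT[a]<\trT[b]$, so $(\trT[a],\trT[b])\in\pre{\trT[b]}$, a place that is empty in $\markM[0]$ and into which only $\trT[a]$ produces, so $\trT[a]$ would have to have fired before this step, contradicting that it fires at most once and lies in the same step. The second ingredient is that \emph{firing a step coincides with executing its transactions serially}: if $\trSU=\setenum{(\txT[j_1],j_1),\dots,(\txT[j_k],j_k)}$ with $j_1<\cdots<j_k$ is enabled at a reachable marking and $\bcSt$ is reachable, then $\bcSt\bigoplus_{\ell}\WRmin{\bcSt,\txT[j_\ell]}=\semBc{\txT[j_1]\cdots\txT[j_k]}{\bcSt}$ (in particular the merge is defined). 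Given these, fold the second fact over the steps of $\vec{\trSU}$ — the running state stays reachable, being the semantics of a blockchain — to get $\msem{\WRmin{}}{\bcSt}{\vec{\trSU}}=\semBc{\bcBi}{\bcSt}$, where $\bcBi$ is the blockchain that lists the transactions of $T$ grouped by step (in firing order) and index-sorted within each step. Finally, $\bcBi\equivStSeq[\pswap{}{}]\txT[i_1]\cdots\txT[i_m]$: any inversion between the two is a pair of transactions in distinct steps with the earlier-step one carrying the larger index, and by the net-theoretic observation (the earlier-step transition cannot causally precede the later-step one) such a pair is strongly swappable; since $\pswap{}{}$ is irreflexive, two occurrences of the same transaction are never inverted, so a bubble sort — whose inversion set only shrinks — connects the two words using only exchanges of $\pswap{}{}$-related adjacent transactions. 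By \Cref{th:pswap:mazurkiewicz} and \Cref{th:swap:mazurkiewicz}, $\equivStSeq[\pswap{}{}]\subseteq\,\sim$, and $\sim$ is equality on reachable states (\Cref{lem:sim:equiv}), so $\semBc{\bcBi}{\bcSt}=\semBc{\txT[i_1]\cdots\txT[i_m]}{\bcSt}$.

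The second ingredient is the technical core, and I would prove it by induction on $k$. By the net-theoretic observation the $\txT[j_\ell]$ are pairwise strongly swappable, so \Cref{def:pswap} yields safe approximations $W_\ell,R_\ell$ with $(R_a\cup W_a)\cap W_b=\emptyset$ for $a\ne b$. The base cases $k\le1$ are immediate ($k=1$ because $\WRmin{}$ is an update collector, \Cref{lem:wr:minimal}). For the inductive step, set $\bcStii=\semBc{\txT[j_1]}{\bcSt}$ and apply the hypothesis to $\txT[j_2],\dots,\txT[j_k]$ at the reachable state $\bcStii$, rewriting $\semBc{\txT[j_1]\cdots\txT[j_k]}{\bcSt}=\semBc{\txT[j_2]\cdots\txT[j_k]}{\bcStii}$ as $\bcStii\bigoplus_{\ell\ge2}\WRmin{\bcStii,\txT[j_\ell]}$. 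The crux is then the stability identity $\WRmin{\bcStii,\txT[j_\ell]}=\WRmin{\bcSt,\txT[j_\ell]}$ for $\ell\ge2$: from $\wapprox{W_1}{\txT[j_1]}$ one gets $\bcStii\sim_{\QmvB}\bcSt$ for every $\QmvB$ disjoint from $W_1$, so (by the Bernstein condition) $\bcStii\sim_{R_\ell}\bcSt$ and $\bcStii\sim_{R_\ell\cup W_\ell}\bcSt$; feeding these into $\rapprox{R_\ell}{\txT[j_\ell]}$ — taking blockchains that realise $\bcStii$ and $\bcSt$, which exist since both states are reachable — gives $\semBc{\txT[j_\ell]}{\bcStii}\sim_{R_\ell\cup W_\ell}\semBc{\txT[j_\ell]}{\bcSt}$, while $\wapprox{W_\ell}{\txT[j_\ell]}$ guarantees that neither side touches any key outside $W_\ell$; on $W_\ell$ the pre-states $\bcStii,\bcSt$ agree and the post-states agree, so the two updates have the same keys and the same values. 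Using this identity, the equation $\bcSt(\WRmin{\bcSt,\txT[j_1]})=\semBc{\txT[j_1]}{\bcSt}$, and the fact that updates with pairwise disjoint key-sets (here $\keys{\WRmin{\bcSt,\txT[j_\ell]}}\subseteq W_\ell$) are applied in sequence exactly by their $\mrg$ (\Cref{def:merge}), the whole expression collapses to $\bcSt\bigoplus_{\ell}\WRmin{\bcSt,\txT[j_\ell]}$.

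The main obstacle is this stability identity: turning the purely set-theoretic Bernstein disjointness of \Cref{def:pswap} into the state-level equality $\WRmin{\bcStii,\txT[j_\ell]}=\WRmin{\bcSt,\txT[j_\ell]}$ needs a careful, simultaneous use of $\wapprox{}{}$ (to bound which keys move), $\rapprox{}{}$ (to pin down the values produced by the surviving read set), and \Cref{lem:sim:equiv}, all restricted to reachable states, followed by the bookkeeping for the merge of pairwise disjoint updates. By contrast, the net-theoretic parts — which transitions may share a step, recovering the fired set from the marking, and maximality — are routine once \Cref{lem:bc-to-pnet:onet} is in hand.
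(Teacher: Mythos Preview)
Your argument is correct and takes a noticeably more direct route than the paper's. The paper proceeds via generic occurrence-net machinery: it linearises step firing sequences, proves a battery of lemmas about the transition system of an occurrence net (determinism, diamond, linearity, acyclicity, and a reachability lemma), uses these to show that any two firing sequences reaching the same marking are related by the Mazurkiewicz equivalence built on the net's independence relation, then transports this to $\equivStSeq[\pswap{}{}]$ on transactions, and finally invokes a separate theorem (\Cref{th:seq-union}) that a strongly swappable multiset's merge agrees with any serialisation. You collapse all of this into a single normal form: the index-sorted serial execution of the fired transitions. The net theory you need is minimal (each transition fires at most once; the fired set is read off the marking; causal predecessors must fire first), and the bubble-sort argument replaces the whole trace-equivalence development. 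What the paper's route buys is reusability — its lemmas about occurrence nets are stated abstractly — whereas your route buys economy for this particular result.

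One small imprecision: you write that \Cref{def:pswap} yields a \emph{single} family $W_\ell,R_\ell$ with $(R_a\cup W_a)\cap W_b=\emptyset$ for all $a\neq b$, but the definition only supplies approximations pair by pair, and safe read approximations are not closed under intersection (\Cref{ex:safeapprox:read-not-cap}). Fortunately your actual argument never uses a global family: the stability identity for $\WRmin{\bcStii,\txT[j_\ell]}=\WRmin{\bcSt,\txT[j_\ell]}$ only needs the pair $(j_1,j_\ell)$, the disjointness $\keys{\pi_1}\cap\keys{\pi_\ell}=\emptyset$ also follows pairwise, and the definedness of $\bigoplus_{\ell\ge2}\pi_\ell$ comes from the induction hypothesis. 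So the proof stands once you phrase the approximations pairwise. (The paper's \Cref{th:step-seq-union} glosses over the same point when reducing to the parameterised \Cref{lem:step-seq-union-aux}.)
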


Remarkably, the implications of \Cref{th:bc-to-pnet} also apply to $\PNet{\pswapWR}{\bcB}$.

\begin{example} \label{ex:petri:2}
  Recall $\bcB = \txT[\funF]\txT[\funH]\txT[\funG]$ 
  and $\PNet{\pswap{}{}}{\bcB}$ from~\Cref{ex:petri:1}, 
  let $\vec{\trSU} = \setenum{\trT[\funF],\trT[\funH]}\setenum{\trT[\funG]}$, and
  let $\bcSt$ be such that
  $\bcSt \cmvA x = \bcSt \cmvA y = 0$.
  As predicted by item~\ref{th:bc-to-pnet:maximal} of~\Cref{th:bc-to-pnet}:
  \[
  \semBc{\bcB}{\bcSt} 
  \; = \; 
  \bcSt \setenum{\bind{\cmvA.y}{1}} \setenum{\bind{\cmvA.z}{1}}
  \; = \; 
  \msem{\WRmin{}}{\bcSt}{\vec{\trSU}}
  \]
  Let $\vec{\trSUi} = \setenum{\trT[\funF]}\setenum{\trT[\funG],\trT[\funH]}$.
  We have that $\vec{\trSU}$ and $\vec{\trSUi}$ lead to the same marking,
  where the places $(\trT[\funF],*)$, $(\trT[\funG],*)$ and $(\trT[\funH],*)$
  contain one token each, while the other places have no tokens.
  By item~\ref{th:bc-to-pnet:confluence} of \Cref{th:bc-to-pnet}
  we conclude that 
  $\msem{\WRmin{}}{\bcSt}{\vec{\trSU}} = \msem{\WRmin{}}{\bcSt}{\vec{\trSUi}}$.
  Now, let $\vec{\trSUii} = \setenum{\trT[\funH]}\setenum{\trT[\funF],\trT[\funG]}$.
  Note that, although $\vec{\trSUii}$ is maximal, it is not a step firing sequence,
  since the second step is not enabled
  (actually, $\trT[\funF]$ and $\trT[\funG]$ are not concurrent, as pointed out 
  in~\Cref{ex:txpar:txs-semantics}).
  Therefore, the items of~\Cref{th:bc-to-pnet} do not apply to $\vec{\trSUii}$,
  coherently with the fact that $\vec{\trSUii}$ does not represent
  any sequential execution of $\bcB$.
  \hfill\qedef
\end{example}

\section{Case study: ERC-721 token}
\label{sec:erc721}

We now apply our theory to an archetypal Ethereum smart contract,
which implements a ``non-fungible token'' following the standard 
ERC-721 interface~\cite{ERC721,Frowis19fc}. 
This contract defines the functions to transfer tokens between users,
and to delegate their trade to other users.
Currently, token transfers involve $\sim{50}\%$ of the 
transactions on the Ethereum blockchain~\cite{tokens},
with larger peaks due to popular contracts like Cryptokitties~\cite{Young17cointelegraph}.

We sketch below the implementation of the \code{Token} contract, 
using Solidity, the main high-level smart contract language in Ethereum 
(see
\iftoggle{arxiv}
{\Cref{sec:full-erc721}}
{Appendix B of~\cite{BGM20arxiv}}
for the full implementation).

The contract state is defined by the following mappings:
\begin{lstlisting}[language=solidity,numbersep=10pt]
mapping(uint256 => address) owner;
mapping(uint256 => bool) exists;
mapping(address => uint256) balance;
mapping (address => mapping (address => bool)) operatorApprovals;
\end{lstlisting}

Each token is uniquely identified by an integer value (of type \code{uint256}),
while users are identified by an \code{address}.
The mapping \code{owner} maps tokens to their owners' addresses 
(the zero address is used to denote a dummy owner).
The mapping \code{exists} tells whether a token has been created or not,
while \code{balance} gives the number of tokens owned by each user.
The mapping \code{operatorApprovals} allows a user to delegate the transfer of all her tokens
to third parties.

The function \code{transferFrom} transfers a token from the owner to another user.
The \code{require} assertion rules out some undesirable cases,
\eg, if the token does not exist, 
or it is not owned by the \code{from} user,
or the user attempts to transfer the token to himself.
Once all these checks are passed, the transfer succeeds if 
the \code{sender} of the transaction owns the token, or if he has been delegated by the owner.
The mappings \code{owner} and \code{balance} are updated as expected.

\begin{lstlisting}[language=solidity,numbersep=10pt]
function transferFrom(address from, address to, uint256 id) external {
  require (exists[id] && from==owner[id] 
          && from!=to && to!=address(0));      
  if (from==msg.sender || operatorApprovals[from][msg.sender]) {
    owner[id] = to;
    balance[from] -= 1;
    balance[to] += 1;
  }
}
\end{lstlisting}

The function \code{setApprovalForAll} delegates the transfers of all the tokens of the \code{sender} to the \code{operator} when the boolean \code{isApproved} is true, 
otherwise it revokes the delegation.

\begin{lstlisting}[language=solidity,numbersep=10pt]
function setApprovalForAll(address operator, bool isApproved) external {
  operatorApprovals[msg.sender][operator] = isApproved;
}
\end{lstlisting}

Assume that user $\pmvA$ owns two tokens, identified by the integers $1$ and $2$,
and consider the following transactions:
\begin{align*}
  & \txT[1] = \ethtx{0}{\pmvA}{\codeAddr{Token}}
    {\codeFun{transferFrom}}{\pmvA,\pmv{P},1}
  \\
  & \txT[2] = \ethtx{0}{\pmvA}{\codeAddr{Token}}
    {\codeFun{setApprovalForAll}}{\pmvB,\true} 
  \\
  & \txT[3] = \ethtx{0}{\pmvB}{\codeAddr{Token}}
    {\codeFun{transferFrom}}{\pmvA,\pmv{Q},2}
  \\
  & \txT[4] = \ethtx{0}{\pmv{P}}{\codeAddr{Token}}
    {\codeFun{transferFrom}}{\pmv{P},\pmvB,1}
\end{align*}

We have that
$\txT[1] \pswap{}{} \txT[2]$, 
$\txT[2] \pswap{}{} \txT[4]$, and 
$\txT[3] \pswap{}{} \txT[4]$
(this can be proved \eg by using the static approximations in
\iftoggle{arxiv}
{\Cref{sec:full-erc721})}
{Appendix B of~\cite{BGM20arxiv})},
while the other combinations are not swappable.
%
Let $\bcB = \txT[1] \txT[2] \txT[3] \txT[4]$.
The resulting occurrence net is displayed in~\Cref{fig:erc721:petri}.
For instance, 
let $\vec{\trSU} = \setenum{\txT[1],\txT[2]} \setenum{\txT[3],\txT[4]}$,
\ie $\txT[1]$ and $\txT[2]$ are executed concurrently,
as well as $\txT[3]$ and $\txT[4]$.
From item~\ref{th:bc-to-pnet:maximal} of \Cref{th:bc-to-pnet}
we have that this concurrent execution is equivalent to the serial one.

Although this example deals with the marginal case where 
the sender and the receiver of tokens overlap, 
in practice
the large majority of transactions in a block either involves distinct users, 
or invokes distinct ERC-721 interfaces, 
making it possible to increase the degree of concurrency 
of \code{transferFrom} transactions.

\begin{figure}[t]
  \centering
  \begin{tikzpicture}[>=stealth',scale=0.65] 
    \tikzstyle{place}=[circle,thick,draw=black!75,minimum size=3mm]
    \tikzstyle{transition}=[rectangle,thick,draw=black!75,minimum size=5mm]
    \tikzstyle{edge}=[->,thick,draw=black!75]
    \tikzstyle{edgered}=[->,thick,draw=red!75]
    \tikzstyle{edgeblu}=[->,thick,draw=blue!75]
    \node[place] (p*1) at (0,3) [label = above:{}] {};
    \node[place] (p*2) at (0,5) [label = above:{}] {};
    \node[place] (p*3) at (6,6.5) [label = above:{}] {};
    \node[place] (p*4) at (6,1.5) [label = above:{}] {};
    \node[place] (p1*) at (2,1.5) [label = above:{}] {};
    \node[place] (p2*) at (2,6.5) [label = above:{}] {};
    \node[place] (p3*) at (8,5) [label = above:{}] {};
    \node[place] (p4*) at (8,3) [label = above:{}] {};
    \node[place] (p13) at (4,4) [label = above:{}] {};
    \node[place] (p14) at (4,3) [label = above:{}] {};
    \node[place] (p23) at (4,5) [label = above:{}] {};
    \node[place,token = 1] (tokp*1) at (p*1) {};
    \node[place,token = 1] (tokp*1) at (p*2) {};
    \node[place,token = 1] (tokp*1) at (p*3) {};
    \node[place,token = 1] (tokp*1) at (p*4) {};
    \node[transition] (t1) at (2,3) {$\trT[1]$}
    edge[pre] node {} (p*1)
    edge[post] node {} (p1*)
    edge[post] node {} (p13)
    edge[post] node {} (p14);
    \node[transition] (t2) at (2,5) {$\trT[2]$}
    edge[pre] node {} (p*2)
    edge[post] node {} (p2*)
    edge[post] node {} (p23);
    \node[transition] (t3) at (6,5) {$\trT[3]$}
    edge[pre] node {} (p*3)
    edge[post] node {} (p3*)
    edge[pre] node {} (p13)
    edge[pre] node {} (p23);
    \node[transition] (t4) at (6,3) {$\trT[4]$}
    edge[pre] node {} (p*4)
    edge[post] node {} (p4*)
    edge[pre] node {} (p14);
  \end{tikzpicture}
  \caption{Occurrence net for the blockchain $\bcB = \txT[1] \txT[2] \txT[3] \txT[4]$ of the ERC-721 token.}
  \label{fig:erc721:petri}
\end{figure}

\section{Related work and conclusions}
\label{sec:conclusions}

We have proposed a static approach to improve the performance
of blockchains by concurrently executing transactions.
We have started by introducing a model of transactions and blockchains.
We have defined two transactions to be \emph{swappable}
when inverting their order does not affect the blockchain state.
We have then introduced a static approximation of swappability, 
based on a static analysis of the sets of keys read/written by transactions.
We have rendered concurrent executions of a sequence of transactions as
\emph{step firing sequences} in the associated occurrence net.
Our main technical result, \Cref{th:bc-to-pnet}, shows that
these concurrent executions are semantically equivalent to the
sequential one.

We can exploit our results in practice
to improve the performances of miners and validators. 
Miners should perform the following steps to mine a block:
\begin{enumerate} 
\item \label{item:txpar:miners:gather}
  gather from the network a set of transactions,
  and put them in an arbitrary linear order $\bcB$, which is the mined block;
\item \label{item:txpar:miners:pswap}
  compute the relation $\pswapWR$ on $\bcB$, using a 
  static analysis of read/written keys;
\item \label{item:txpar:miners:pnet}
  construct the occurrence net $\PNet{\pswapWR}{\bcB}$;
\item \label{item:txpar:miners:exec}
  execute transactions concurrently according to the occurrence net,
  exploiting the available parallelism.
\end{enumerate}

The behaviour of validators is almost identical to that of miners,
except that in step (1), rather than choosing the order of transactions,
they should adhere to the ordering of the mined block $\bcB$.
Note that in the last step, validators can execute
any maximal step firing sequence which is coherent
with their degree of parallelism:
item~\ref{th:bc-to-pnet:maximal} of \Cref{th:bc-to-pnet} ensures
that the resulting state is equal to the state obtained by the miner.
The experiments in~\cite{Dickerson17podc} suggest that parallelization
may lead to a significant improvement of the performance of nodes:
the benchmarks on a selection of representative contracts
show an overall speedup of 1.33x for miners and 1.69x for validators,
using only three cores.

Note that malevolent users could attempt a denial-of-service attack by 
publishing contracts which are hard to statically analyse,
and therefore are not suitable for parallelization. 
This kind of attacks can be mitigated by adopting a mining strategy that
gives higher priority to parallelizable transactions.

\paragraph{Applying our approach to Ethereum}

Applying our theory to Ethereum would require
a static analysis of read/written keys at the level of EVM bytecode.
As far as we know, the only tool implementing such an analysis
is ES-ETH~\cite{Marcia19eseth}.
However, the current version of the tool has several limitations, 
like \eg the compile-time approximation of dictionary keys and of 
values shorter than 32 bytes,
which make ES-ETH not directly usable to the purposes of our work.
In general, precise static analyses at the level of the 
Ethereum bytecode are difficult to achieve,
since the language has features like dynamic dispatching and pointer aliasing
which are notoriously a source of imprecision for static analysis.
However, coarser approximations of read/written keys
may be enough to speed-up the execution of transactions. 
For instance, in Ethereum, blocks typically contain many transactions 
which transfer tokens between participants,
and many of them involve distinct senders and receivers.
A relatively simple analysis of the code of token contracts
(which is usually similar to that in~\Cref{sec:erc721})
may be enough to detect that these transactions are swappable.

Aiming at minimality, our model does not include the \emph{gas mechanism},
which is used in Ethereum to pay miners for executing contracts.
The sender of a transaction deposits into it some crypto-currency,
to be paid to the miner which appends the transaction to the blockchain.
Each instruction executed by the miner consumes part of this deposit;
when the deposit reaches zero, the miner stops executing the transaction.
At this point, all the effects of the transaction
(except the payment to the miner) are rolled back.
Our transaction model could be easily extended with a gas mechanism,
by associating a cost to statements and recording the gas consumption in the environment.
Remarkably, adding gas does not invalidate approximations of read/written keys 
which are correct while neglecting gas.
However, a gas-aware analysis may be more precise of a gas-oblivious one:
for instance, in the statement
$\cmdIfTE{\valK}{\cmdCall{\funF[long]()}{}{}; \cmdAss{\valX}{1}}{\cmdAss{\valY}{1}}$
(where $\funF[long]$ is a function which exceeds the available gas)
a gas-aware analysis would be able to detect that x is not written.

\paragraph{Related work}

A few works study how to optimize the execution of smart contracts 
on Ethereum, using dynamic techniques adopted from software transactional 
memory~\cite{Anjana19pdp,Dickerson17podc,Dickerson18eatcs}.
These works are focussed on empirical aspects
(\eg, measuring the speedup obtained on a given benchmark), 
while we focus on the theoretical counterpart.
In \cite{Dickerson17podc,Dickerson18eatcs}, miners execute
a set of transactions speculatively in parallel,
using abstract locks and inverse logs to dynamically discover conflicts
and to recover from inconsistent states.
The obtained execution is guaranteed to be equivalent to a serial
execution of the same set of transactions.
The work \cite{Anjana19pdp} proposes a conceptually similar technique, but
based on optimistic software transactional memory.
Since speculative execution is non-deterministic,
in both approaches miners need to communicate the chosen schedule of transactions
to validators, to allow them to correctly validate the block.
This schedule must be embedded in the mined block:
since Ethereum does not support this kind of block metadata,
these approaches would require a ``soft-fork'' of the blockchain
to be implemented in practice.
Instead, our approach is compatible with the current Ethereum,
since miners only need to append transactions to the blockchain.
Compared to~\cite{Dickerson17podc,Anjana19pdp}, 
where conflicts are detected dynamically, our approach relies
on a static analysis to detect potential conflicts.
Since software transactional memory introduces a run-time overhead,
in principle a static technique could allow for faster executions,
at the price of a preprocessing phase.
Saraph and Herlihy~\cite{Saraph19arxiv} study the effectiveness of speculatively executing smart contracts
in Ethereum.
They sample past blocks of transactions (from July 2016 to December 2017), 
replay them by using a speculative execution engine, and measure the speedup obtained by parallel execution. 
Their results show that simple speculative strategies yield non-trivial speed-ups.
Further, they note that many of the data conflicts (\ie concurrent read/write accesses to the same state location)
arise in periods of high traffic, and they are caused by a small number of popular contracts, like \eg tokens.   

In the permissioned setting, 
Hyperledger Fabric \cite{Androulaki18eurosys} 
follows the ``execute first and then order'' paradigm:
transactions are executed speculatively, 
and then their ordering is checked for correctness~\cite{Fabric19rw}.
In this paradigm, appending a transaction requires a few steps.
First, a client proposes a transaction to a set of ``endorsing'' peers, 
which simulate the transaction without updating the blockchain.
The output of the simulation includes 
the state updates of the transaction execution,
and the sets of read/written keys.
These sets are then signed by the endorsing peers, and returned to the client,
which submits them to the ``ordering'' peers. 
These nodes order transactions in blocks, 
and send them to the ``committing'' peers, which validate them. 
A block $\txT[1] \cdots \txT[n]$ is valid when, 
if a key $\valK$ is read by transaction $\txT[i]$, then 
$\valK$ has not been written by a transaction $\txT[j]$ with $j<i$.
Finally, validated blocks are appended to the blockchain.
Our model is coherent with Ethereum, 
which does not support speculative execution.

\paragraph{Future works}

A relevant line of research is the design of 
domain-specific languages for smart contracts 
that are directly amenable to techniques that, like ours,
increase the degree of concurrency of executions.
For this purpose, the language should support static analyses of
read/written keys, like the one we use to define the strong swappability relation.
Although the literature describes various static analyses of smart contracts, 
most of them are focussed on finding security vulnerabilities,
rather than enhancing concurrency.

Outside the realm of smart contracts, 
a few papers propose static analyses of read/written variables.
The paper~\cite{Dias11} describes an analysis based on separation logic,
and applies it to resolve conflicts in the setting of
\emph{snapshot isolation} for transactional memory in Java.
When a conflict is detected, the read/write sets are used to determine how the
code can be modified to resolve it.
The paper~\cite{CheremCG08} presents a static analysis to infer read
and write locations in a C-like language with atomic sections.
The analysis is used to translate atomic sections into standard lock operations.
The design of new smart contract languages could take advantage of these analyses.


\paragraph{Acknowledgements} 
Massimo Bartoletti is partially supported 
by Aut.\ Reg.\ Sardinia projects \textit{``Smart collaborative engineering''} and \textit{``Sardcoin''}.
Letterio Galletta is partially supported by IMT Lucca project \emph{``PAI VeriOSS''} and by MIUR project PRIN 2017FTXR7S
\textit{``Methods and Tools for Trustworthy Smart Systems''}.
Maurizio Murgia is partially supported by MIUR PON \textit{``Distributed Ledgers for Secure Open Communities''}.

\bibliographystyle{splncs04}
\bibliography{main}

\iftoggle{arxiv}{
  \newpage
  \appendix
  \section{Proofs}\label{sec:proofs}

\begin{figure*}[t]
  \centering
  \hspace{-10pt}
  \begin{subfigure}[b]{0.3\textwidth}
    \adjustbox{scale=0.8,center}{
      \begin{tikzcd}[column sep=normal, row sep=huge]
        & \bcSti = \bcSt - \pmvA:1 + \pmvB:1  \ar[rd, "{\txT[0]}" sloped, end anchor=north west, start anchor=south east] & \\
        \bcSt
        \ar[r, "{\txT[0]}"]
        \ar[ur, start anchor=north east, end anchor=south west,"{\bcSt\pmvA \varBalance \geq 1}"' pos=0, sloped, "{\txT[2]}" near end]
        \ar[dr, start anchor=south east, end anchor=north west,"{\txT[2]}", sloped, "\bcSt\pmvA \varBalance < 1"' near end]
        & \bcSt\setenum{\bind{\cmvA.x}{1}} \ar[r, "\bcSt\pmvA \varBalance \geq 1"', sloped, "{\txT[2]}"] \ar[rd, "\bcSt\pmvA \varBalance < 1"' near end, sloped, "{\txT[2]}"]
        & \bcSti\setenum{\bind{\cmvA.x}{1}}  \\
        & \bcSt \ar[r, "{\txT[0]}"]
        & \bcSt\setenum{\bind{\cmvA.x}{1}}
      \end{tikzcd}
    }
    \subcaption{Proof of $\txT[0] \swap \txT[2]$.}
  \end{subfigure}
  \hspace{70pt}
  \begin{subfigure}[b]{0.33\textwidth}
    \adjustbox{scale=0.8,center}{   
      \begin{tikzcd}[column sep=small, row sep=normal, negated/.style={
          decoration={markings,
            mark= at position 0.5 with {
              \node[transform shape] (tempnode) {$\slash$};
            }
          },
          postaction={decorate}
        }
        ]
        & \bcSti = \bcSt - \pmvA:1 + \pmvB:1 \ar[r, "{\txT[0]}"] & \bcSti\setenum{\bind{\cmvA.x}{1}} \ar[dd, equal, negated] \\
        \bcSt
        \ar[ru, end anchor=south west, "{\txT[1]}", sloped]  \ar[rd, end anchor=north west, "{\txT[0]}"', sloped]
        &
        \begin{scriptsize}
          \begin{array}{l}
            \bcSt\cmvA \valX = 0 \\
            \bcSt\pmvA\varBalance \geq 1
          \end{array}
        \end{scriptsize}
        &
        \\
        &  \bcSt\setenum{\bind{\cmvA.x}{1}} \ar[r, "{\txT[1]}"] & \bcSt\setenum{\bind{\cmvA.x}{1}} - \pmvA:1 + \cmvA:1
      \end{tikzcd}
    }
    \subcaption{Proof of $\txT[0] \nswap \txT[1]$.}
  \end{subfigure}
  \caption{Proofs for~\Cref{ex:swap}. A transition $\txT$ from $\bcSt$ can be taken only if the guard below the arrow is satisfied in $\bcSt$.}
  \label{fig:swap:nocongruence}
\end{figure*}

\begin{figure}[t]
  \adjustbox{scale=0.9,center}{
    \begin{tabular}{c}
      \begin{tikzcd}[column sep=scriptsize, row sep=normal, negated/.style={
          decoration={markings,
            mark= at position 0.5 with {
              \node[transform shape] (tempnode) {$\slash$};
            }
          },
          postaction={decorate}}]
        & & & \bcStii - \pmvA:1 + \pmvB:1 \ar[dd, equal, negated] \\
        \bcSt
        \ar[r, bend left=45, "{\txT[2]}" sloped, end anchor=north west]
        \ar[r, bend right=45, "{\txT[1]}"' sloped, end anchor=south west]
        & \bcSti
        \ar[r, "{\txT[0]}"]
        & \bcStii
        \ar[ru, "{\txT[2]}" pos=0.7, start anchor=north east, end anchor=south west]
        \ar[rd, "{\txT[1]}" pos=0.3, start anchor=south east, end anchor=north west] \\
        & & & \bcStii - \pmvA:1 + \cmvA:1\\
        \small
        &
        &
      \end{tikzcd}
      \\[-10pt]
      $\bcSti = \bcSt - \pmvA:1 + \pmvB:1$
      \;\;
      $\bcStii = \bcSti\setenum{\bind{\cmvA.x}{1}}$
    \end{tabular}
  }
  \caption{Proof of $\txT[1] \txT[0] \txT[2] \not\sim \txT[2] \txT[0] \txT[1]$ for \Cref{ex:swap:mazurkiewicz}.}
  \label{fig:swap:bc-noncongruence}
\end{figure}

\lemsimequiv*
\begin{proof}
  Items~\ref{lem:sim:equiv:equiv} and~\ref{lem:sim:equiv:subseteq} are trivial.
  The inclusion $\sim_{\QmvU} \,\subseteq\, \sim$ is trivial,
  and $\sim\, \subseteq\, \sim_{\QmvU}$,
  follows from item~\ref{lem:sim:equiv:subseteq}.
  \qed
\end{proof}

\lemsimappend*
\begin{proof}
  Direct from the fact that semantics of statements is a function,
  and it only depends on the blockchain states after the execution of
  $\bcB$ and $\bcBi$, which are equal in any $\bcSt$ since $\bcB \sim \bcBi$.
  \qed
\end{proof}

\thswapmazurkiewicz*
\begin{proof}
  By definition, $\equivStSeq[\swap]$ is the least equivalence 
  relation closed under the rules:
  \[
  \begin{array}{c}
    \irule{}{\bcEmpty \equivStSeq[\swap] \bcEmpty} 
    \nrule{[$\equivStSeq$0]}
    \quad
    \irule{}{\txT \equivStSeq[\swap] \txT} 
    \nrule{[$\equivStSeq$1]}
    \quad
    \irule{\txT \swap \txTi}{\txT \txTi \equivStSeq[\swap] \txTi \txT} 
    \nrule{[$\equivStSeq$2]}
    \quad
    \irule{\bcB[0] \equivStSeq[\swap] \bcBi[0] \quad 
    \bcB[1] \equivStSeq[\swap] \bcBi[1]}
    {\bcB[0] \bcB[1] \equivStSeq[\swap] \bcBi[0] \bcBi[1]} 
    \nrule{[$\equivStSeq$3]}
  \end{array}
  \] 
  Let $\bcB \equivStSeq[\swap] \bcBi$. 
  We have to show $\bcB \sim \bcBi$.
  We proceed by induction on the rules above. 
  The case for rules
  \nrule{[$\equivStSeq$0]} and \nrule{[$\equivStSeq$1]} follows by the fact
  that $\sim$ is an equivalence relation (\Cref{lem:sim:equiv}) and hence reflexive.
  The case for rule \nrule{[$\equivStSeq$2]} follows immediately by \Cref{def:swap}.
  For rule \nrule{[$\equivStSeq$3]}, first note that $\bcB = \bcB[0] \bcB[1]$ and
  $\bcBi = \bcBi[0] \bcBi[1]$. By the induction hypothesis it follows that:
  \[\bcB[0] \sim \bcBi[0] \quad \text{and} \quad \bcB[1] \sim \bcBi[1]\]
  Therefore, by two applications of \Cref{lem:sim:append}:
  \[
  \bcB = \bcB[0] \bcB[1] \sim \bcB[0] \bcBi[1] \sim \bcBi[0] \bcBi[1] = \bcBi
  \tag*{\qed}
  \]
\end{proof}

\begin{lemma}
  \label{lem:safeapprox}
  Let $\bullet \in \setenum{r,w}$. Then:
  \begin{enumerate}[(a)]
  \item \label{lem:safeapprox:subseteq}
    if $\safeapprox[\bullet]{\QmvA}{\txT}$ and $\QmvA \subseteq \QmvAi$,
    then $\safeapprox[\bullet]{\QmvAi}{\txT}$;
  \item \label{lem:safeapprox:cap}
    if $\safeapprox[w]{\QmvA}{\txT}$
    and $\safeapprox[w]{\QmvB}{\txT}$,
    then $\safeapprox[w]{\QmvA \cap \QmvB}{\txT}$.
  \end{enumerate}
\end{lemma}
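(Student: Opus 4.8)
The plan is to unfold \Cref{def:safeapprox} and rely on just two elementary facts about the index set of $\sim_{\QmvA}$: (i) if $\QmvB \subseteq \QmvA$ and $\bcSt \sim_{\QmvA} \bcSti$, then $\bcSt \sim_{\QmvB} \bcSti$ (this is \Cref{lem:sim:equiv}); and (ii) $\bcSt \sim_{\QmvA \cup \QmvB} \bcSti$ holds iff both $\bcSt \sim_{\QmvA} \bcSti$ and $\bcSt \sim_{\QmvB} \bcSti$ hold, which is immediate from \Cref{def:sim}. Both facts lift from states to blockchains through the semantics $\semBc{\cdot}{\cdot}$.

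For item~\ref{lem:safeapprox:subseteq}, consider the write case first: assume $\wapprox{\QmvA}{\txT}$ and $\QmvA \subseteq \QmvAi$. Any $\QmvB$ with $\QmvB \cap \QmvAi = \emptyset$ also satisfies $\QmvB \cap \QmvA = \emptyset$, so $\txT \sim_{\QmvB} \bcEmpty$ by hypothesis; hence $\wapprox{\QmvAi}{\txT}$. For the read case, assume $\rapprox{\QmvA}{\txT}$ and $\QmvA \subseteq \QmvAi$, and let $\bcB,\bcBi,\QmvB$ be such that $\bcB \sim_{\QmvAi} \bcBi$ and $\bcB \sim_{\QmvB} \bcBi$. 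By fact~(i) we get $\bcB \sim_{\QmvA} \bcBi$, so $\rapprox{\QmvA}{\txT}$ yields $\bcB\txT \sim_{\QmvB} \bcBi\txT$; hence $\rapprox{\QmvAi}{\txT}$.

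For item~\ref{lem:safeapprox:cap}, assume $\wapprox{\QmvA}{\txT}$ and $\wapprox{\QmvB}{\txT}$, and fix any $\QmvC$ with $\QmvC \cap \QmvA \cap \QmvB = \emptyset$. The crux is the set-theoretic identity $\QmvC = \QmvC \setminus (\QmvA \cap \QmvB) = (\QmvC \setminus \QmvA) \cup (\QmvC \setminus \QmvB)$, whose first equality is precisely the emptiness of the triple intersection. Since $\QmvC \setminus \QmvA$ is disjoint from $\QmvA$ and $\QmvC \setminus \QmvB$ is disjoint from $\QmvB$, the two hypotheses give $\txT \sim_{\QmvC \setminus \QmvA} \bcEmpty$ and $\txT \sim_{\QmvC \setminus \QmvB} \bcEmpty$; by fact~(ii) this yields $\txT \sim_{(\QmvC \setminus \QmvA) \cup (\QmvC \setminus \QmvB)} \bcEmpty$, i.e.\ $\txT \sim_{\QmvC} \bcEmpty$. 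As $\QmvC$ was arbitrary, $\wapprox{\QmvA \cap \QmvB}{\txT}$.

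I do not expect any real obstacle: item~(a) is a one-line unfolding in each case, and the only step in item~(b) that requires a small idea is spotting the decomposition $\QmvC = (\QmvC \setminus \QmvA) \cup (\QmvC \setminus \QmvB)$, which reduces the intersection claim to two instances of the disjointness already granted by the hypotheses.
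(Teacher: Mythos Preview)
Your proposal is correct and follows essentially the same route as the paper's proof: the write and read cases of item~(a) are identical one-line unfoldings, and for item~(b) the paper likewise splits $\QmvC$ as $(\QmvC \setminus \QmvA) \cup (\QmvC \setminus \QmvB)$ using the disjointness hypothesis, applies the two write-approximation assumptions separately, and recombines via the union property of $\sim_{\QmvA}$ from \Cref{def:sim}. If anything, you are slightly more explicit than the paper in isolating the two elementary facts about $\sim_{\QmvA}$ up front.
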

\begin{proof}
  \Cref{lem:safeapprox:subseteq}.
  For the case $\bullet = w$, let $\wapprox{\QmvA}{\txT}$ and 
  $\QmvA \subseteq \QmvAi$. 
  Let $\QmvB$ be such that $\QmvB \cap \QmvAi = \emptyset$.
  We have to show that $\txT \sim_{\QmvB} \bcEmpty$.
  Since $\QmvA \subseteq \QmvAi$,
  it must be $\QmvB \cap \QmvA = \emptyset$. Then, since $\wapprox{\QmvA}{\txT}$,
  it must be $\txT \sim_{\QmvB} \bcEmpty$, as required.
  For the case $\bullet = r$, let $\rapprox{\QmvA}{\txT}$ and 
  $\QmvA \subseteq \QmvAi$. 
  We have to show that, for all $\bcB[1],\bcB[2]$,
  if $\bcB[1] \sim_{\QmvAi} \bcB[2]$ and $\bcB[1] \sim_{\QmvB} \bcB[2]$,
  then $\bcB[1]\txT \sim_{\QmvB} \bcB[2]\txT$. But this follows immediately
  by the fact that $\QmvA \subseteq \QmvAi$ and $\rapprox{\QmvA}{\txT}$.
  
  \Cref{lem:safeapprox:cap}. 
  Let $\QmvC$ be such that $\QmvC \cap (\QmvA \cap \QmvB) = \emptyset$.
  Since $\safeapprox[w]{\QmvA}{\txT}$ and $(\QmvC \setminus \QmvA) \cap \QmvA =
  \emptyset$, it must be:
  \[
  \txT \sim_{\QmvC \setminus \QmvA} \bcEmpty
  \]
  Similarly, since $\safeapprox[w]{\QmvB}{\txT}$ and 
  $(\QmvC \setminus \QmvB) \cap \QmvB = \emptyset$, we have that:
  \[
  \txT \sim_{\QmvC \setminus \QmvB} \bcEmpty
  \]
  By assumption $\QmvC \cap (\QmvA \cap \QmvB) = \emptyset$,
  then $(\QmvC \setminus \QmvA) \cup (\QmvC \setminus \QmvB) = \QmvC$.
  By~\Cref{def:sim}, we conclude that:
  \[
  \txT \sim_{\QmvC} \txT \sim_{(\QmvC \setminus \QmvA) \cup (\QmvC \setminus \QmvB)} \bcEmpty
  \tag*{\qed}
  \]
\end{proof}

The following~\namecref{ex:safeapprox:read-not-cap}
shows that part~\ref{lem:safeapprox:cap} of~\Cref{lem:safeapprox}
does not hold for read approximations.

\begin{example}
  \label{ex:safeapprox:read-not-cap}
  Let $\cmvA$ be a contract with exactly two procedures:
  \begin{align*}
    \contrFunSig{\funF}{\varX}
    & \; \{ \cmdAss{\valK}{x}; \cmdAss{\valKi}{x} \}
    \\
    \contrFunSig{\funG}{}
    & \; \{\cmdIfTE{\expLookup{\valK} \neq \pmvA}{\cmdSend{\expLookup{\varBalance}}{\pmvB}}{\cmdSkip} \}
  \end{align*}
  and let $\txT = \ethtx{0}{\pmvA}{\cmvA}{\funG}{}$.
  Note that, in any reachable state $\bcSt$,
  it must be $\bcSt \cmvA \valK = \bcSt \cmvA \valKi$.
  Let $\QmvB$ be such that $\bcB \sim_{\QmvB} \bcBi$,
  and let $\bcSt = \semBc{\bcB}{}$, $\bcSti = \semBc{\bcBi}{}$,
  $\valN = \bcSt \cmvA \varBalance$, and
  $\valNi = \bcSti \cmvA \varBalance$.
  Appending $\txT$ to $\bcB$ and $\bcBi$ will result in:
  \begin{align*}
    \semTx{\txT}{\bcSt} & = \begin{cases}
      \bcSt - \cmvA:\valN + \pmvB:\valN & \text{if $\bcSt \cmvA \valK \neq \pmvA$} \\
      \bcSt & \text{otherwise}
    \end{cases}
    \\
    \semTx{\txT}{\bcSti} & = \begin{cases}
      \bcSti - \cmvA:\valNi + \pmvB:\valNi & \text{if $\bcSti \cmvA \valK \neq \pmvA$} \\
      \bcSti & \text{otherwise}
    \end{cases}
  \end{align*}
  If $\bcB \sim_{\setenum{\valK}} \bcBi$, then the conditions
  $\bcSt \cmvA \valK \neq \pmvA$ and $\bcSti \cmvA \valK \neq \pmvA$
  are equivalent.
  Therefore,
  $\sem{\bcB \txT} = \semTx{\txT}{\bcSt} \sim_{\QmvB} \semTx{\txT}{\bcSti} = \sem{\bcBi \txT}$,
  and so we have proved that $\rapprox{\setenum{\valK}}{\txT}$.
  Similarly, we obtain that $\rapprox{\setenum{\valKi}}{\txT}$,
  since $\valK$ and $\valKi$ are always bound to the same value.
  Note however that $\setenum{\valK} \cap \setenum{\valKi} = \emptyset$
  is \emph{not} a safe approximation of the keys read by $\txT$.
  For instance, if $\bcSt \cmvA \valK = \pmvA \neq \bcSti \cmvA \valKi$
  and $\bcSt \cmvA \varBalance = \bcSti \cmvA \varBalance$,
  then appending $\txT$ to $\bcB$ or to $\bcBi$ results in states
  which differ in the balance of $\cmvA$.
  \qed
\end{example}

\begin{lemma}
  \label{lem:dummy-txTi}
  Let $\wapprox{\QmvA[\txT]^w}{\txT}$, $\wapprox{\QmvA[\txTi]^w}{\txTi}$,
  and $\rapprox{\QmvA[\txT]^r}{\txT}$. 
  Then:
  \[
  \QmvA[\txTi]^w \cap \QmvA[\txT]^r = \emptyset = \QmvA[\txT]^w \cap S_{\txT'}^w
  \quad \implies \quad
  \txT \sim_{\QmvA[\txT]^w} \txTi\txT
  \]
\end{lemma}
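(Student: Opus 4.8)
The plan is to derive the conclusion directly from the hypothesis $\rapprox{\QmvA[\txT]^r}{\txT}$, instantiated at the pair of blockchains $\bcEmpty$ and $\txTi$, using the write-approximation hypothesis on $\txTi$ to discharge its two premises. Before that, I would record one small fact: for every set of qualified keys $\QmvB$ the relation $\sim_{\QmvB}$ on blockchains is symmetric, since it is defined pointwise (over reachable states) from the state-level relation $\sim_{\QmvB}$, which is an equivalence by \Cref{lem:sim:equiv}.

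Next I would extract the two ``$\txTi$ acts as a no-op'' facts from the single hypothesis $\wapprox{\QmvA[\txTi]^w}{\txTi}$. From $\wapprox{\QmvA[\txTi]^w}{\txTi}$ together with the disjointness assumption $\QmvA[\txT]^w \cap \QmvA[\txTi]^w = \emptyset$ (the right-hand equality of the hypothesis), \Cref{def:safeapprox} gives $\txTi \sim_{\QmvA[\txT]^w} \bcEmpty$, hence $\bcEmpty \sim_{\QmvA[\txT]^w} \txTi$. Likewise, from $\wapprox{\QmvA[\txTi]^w}{\txTi}$ and the disjointness assumption $\QmvA[\txTi]^w \cap \QmvA[\txT]^r = \emptyset$ (the left-hand equality), I obtain $\txTi \sim_{\QmvA[\txT]^r} \bcEmpty$, hence $\bcEmpty \sim_{\QmvA[\txT]^r} \txTi$.

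Finally I would apply the definition of $\rapprox{\QmvA[\txT]^r}{\txT}$ with the instantiation $\bcB := \bcEmpty$, $\bcBi := \txTi$, and $\QmvB := \QmvA[\txT]^w$. The two premises it requires, namely $\bcEmpty \sim_{\QmvA[\txT]^r} \txTi$ and $\bcEmpty \sim_{\QmvA[\txT]^w} \txTi$, are exactly the two facts established in the previous step, so the definition yields $\bcEmpty\,\txT \sim_{\QmvA[\txT]^w} \txTi\,\txT$; since $\bcEmpty\,\txT$ is the one-transaction blockchain $\txT$, this is precisely the desired $\txT \sim_{\QmvA[\txT]^w} \txTi\,\txT$.

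There is no genuinely hard step here: the only thing to get right is the choice of instance $(\bcEmpty,\txTi,\QmvA[\txT]^w)$ for the read-approximation property, and the observation that the single write hypothesis on $\txTi$ supplies both of the ``$\txTi$ is invisible'' facts needed — one on the read set $\QmvA[\txT]^r$, one on the write set $\QmvA[\txT]^w$ — precisely because the statement carries \emph{two} disjointness conditions. The minor point worth flagging is the symmetry of $\sim_{\QmvB}$ on blockchains, which lets one rewrite the ``$\txTi \sim_{\QmvB} \bcEmpty$'' delivered by \Cref{def:safeapprox} into the ``$\bcEmpty \sim_{\QmvB} \txTi$'' shape demanded by the premise of $\rapprox{}{}$; that is immediate from \Cref{lem:sim:equiv}.
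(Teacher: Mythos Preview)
Your proposal is correct and follows essentially the same approach as the paper's own proof: both derive $\txTi \sim_{\QmvA[\txT]^w} \bcEmpty$ and $\txTi \sim_{\QmvA[\txT]^r} \bcEmpty$ from the write approximation of $\txTi$ together with the two disjointness hypotheses, and then apply the read-approximation property of $\txT$ with $\bcB = \bcEmpty$, $\bcBi = \txTi$, $\QmvB = \QmvA[\txT]^w$. The only cosmetic difference is that the paper phrases the derivation of $\txTi \sim_{\QmvA[\txT]^r} \bcEmpty$ as a short contradiction argument, whereas you apply \Cref{def:safeapprox} directly with $\QmvB = \QmvA[\txT]^r$; your version is if anything slightly cleaner.
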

\begin{proof}
  Since $\wapprox{\QmvA[\txTi]^w}{\txTi}$
  and $\QmvA[\txTi]^w \cap \QmvA[\txT]^w = \emptyset$,
  by \Cref{def:safeapprox} we have
  $\txTi \sim_{\QmvA[\txT]^w} \bcEmpty$.
  We prove that $\txTi \sim_{\QmvA[\txT]^r} \bcEmpty$.
  By contradiction, assume that
  $\txTi \not\sim_{\qmvA} \bcEmpty$ for some $\qmvA \in \QmvA[\txT]^r$.
  Since $\wapprox{\QmvA[\txTi]^w}{\txTi}$, it must be
  $\qmvA \in \QmvA[\txTi]^w$ ---
  contradiction, since $\QmvA[\txTi]^w \cap \QmvA[\txT]^r = \emptyset$.
  Since $\rapprox{\QmvA[\txT]^r}{\txT}$, 
  by \Cref{def:safeapprox} we conclude that
  $\txTi \txT \sim_{\QmvA[\txT]^w} \txT$.
  \qed
\end{proof}

\begin{lemma}
  \label{lem:commutativity-writes}
  \(
  \txT[1] \pswapWR \txT[2] \implies \txT[1]\txT[2] \sim_{\wset{\txT[1]}} \txT[2]\txT[1]
  \)
\end{lemma}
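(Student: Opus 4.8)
The plan is to chain two observational equalities --- one supplied by the ``dummy transaction'' \Cref{lem:dummy-txTi}, the other by the very definition of safe write approximation --- and then close by transitivity of $\sim_{\wset{\txT[1]}}$.

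First I would unpack the hypothesis. By \Cref{def:pswapWR}, $\txT[1] \pswapWR \txT[2]$ gives in particular
\[ \rset{\txT[1]} \cap \wset{\txT[2]} = \emptyset \qquad \text{and} \qquad \wset{\txT[1]} \cap \wset{\txT[2]} = \emptyset . \]
Since $\wset{}$ and $\rset{}$ are static analyses, we also have $\wapprox{\wset{\txT[1]}}{\txT[1]}$, $\wapprox{\wset{\txT[2]}}{\txT[2]}$ and $\rapprox{\rset{\txT[1]}}{\txT[1]}$. Instantiating \Cref{lem:dummy-txTi} with $\txT := \txT[1]$, $\txTi := \txT[2]$, $\QmvA[\txT]^w := \wset{\txT[1]}$, $\QmvA[\txTi]^w := \wset{\txT[2]}$ and $\QmvA[\txT]^r := \rset{\txT[1]}$ --- whose two disjointness side conditions are exactly the equalities just extracted --- yields
\[ \txT[1] \;\sim_{\wset{\txT[1]}}\; \txT[2]\txT[1]. \]

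Second I would show that appending $\txT[2]$ \emph{after} $\txT[1]$ is invisible on $\wset{\txT[1]}$. Since $\wapprox{\wset{\txT[2]}}{\txT[2]}$ and $\wset{\txT[1]} \cap \wset{\txT[2]} = \emptyset$, \Cref{def:safeapprox} gives $\txT[2] \sim_{\wset{\txT[1]}} \bcEmpty$, \ie $\semBc{\txT[2]}{\bcSt} \sim_{\wset{\txT[1]}} \bcSt$ for every reachable $\bcSt$. Applying this at the state $\semBc{\txT[1]}{\bcSt}$ --- which is again reachable whenever $\bcSt$ is, since reachability is closed under appending a transaction --- and folding the blockchain semantics gives $\semBc{\txT[1]\txT[2]}{\bcSt} = \semBc{\txT[2]}{\semBc{\txT[1]}{\bcSt}} \sim_{\wset{\txT[1]}} \semBc{\txT[1]}{\bcSt}$, that is $\txT[1]\txT[2] \sim_{\wset{\txT[1]}} \txT[1]$. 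Composing the two displays via transitivity of $\sim_{\wset{\txT[1]}}$ (item~\ref{lem:sim:equiv:equiv} of \Cref{lem:sim:equiv}) gives $\txT[1]\txT[2] \sim_{\wset{\txT[1]}} \txT[1] \sim_{\wset{\txT[1]}} \txT[2]\txT[1]$, which is the claim.

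The argument is essentially bookkeeping once \Cref{lem:dummy-txTi} is in hand, so I do not expect a genuine obstacle; the only subtlety is that $\sim_{\QmvB}$ on blockchains quantifies over reachable states only, so in the second step one must explicitly invoke that $\semBc{\txT[1]}{\bcSt}$ is reachable before instantiating $\txT[2] \sim_{\wset{\txT[1]}} \bcEmpty$ at that state.
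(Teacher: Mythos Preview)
Your proposal is correct and follows essentially the same route as the paper: both proofs combine \Cref{lem:dummy-txTi} (to get $\txT[1] \sim_{\wset{\txT[1]}} \txT[2]\txT[1]$) with the write-approximation clause of \Cref{def:safeapprox} (to get $\txT[1]\txT[2] \sim_{\wset{\txT[1]}} \txT[1]$), and then close by transitivity. The only difference is cosmetic: where the paper simply invokes that $\sim_{\wset{\txT[1]}}$ is a congruence to prepend $\txT[1]$ to $\txT[2] \sim_{\wset{\txT[1]}} \bcEmpty$, you unfold that step explicitly and justify the reachability of the intermediate state --- which is arguably more careful, since \Cref{lem:sim:append} is stated only for the full relation~$\sim$.
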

\begin{proof}
  By \Cref{def:pswapWR}, it must be: 
  \[\wset{\txT[1]} \cap \wset{\txT[2]} = \emptyset\]
  So, since $\wapprox{\wset{\txT[2]}}{\txT[2]}$, by \Cref{def:safeapprox} we have:
  \[
  \bcEmpty \sim_{\wset{\txT[1]}} \txT[2] 
  \]
  and so, since $\sim_{\wset{\txT[1]}}$ is a congruence:
  \[
  \txT[1] \sim_{\wset{\txT[1]}} \txT[1]\txT[2] 
  \]
  By \Cref{def:pswapWR}, it must be: 
  \[
  \wset{\txT[2]} \cap \rset{\txT[1]} = \emptyset = 
  \wset{\txT[1]} \cap \wset{\txT[2]}
  \] 
  Then, by \Cref{lem:dummy-txTi}:
  \[
  \txT[1] \sim_{\wset{\txT[1]}} \txT[2]\txT[1]
  \]
  By simmetry and transitivity of $\sim$ (\Cref{lem:sim:equiv}), we conclude:
  \[
  \txT[1]\txT[2] \sim_{\wset{\txT[1]}} \txT[2]\txT[1]
  \tag*{\qed}
  \]
\end{proof}

\begin{lemma}
  \label{th:pswapWR-implies-swap}
  $\txT[1] \pswapWR \txT[2] \implies \txT[1] \swap \txT[2]$
\end{lemma}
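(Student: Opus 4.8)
The plan is to upgrade the ``agreement on written keys'' provided by \Cref{lem:commutativity-writes} into agreement on \emph{all} qualified keys, which is precisely what $\txT[1] \swap \txT[2]$ asks for. First I would note that the condition defining $\pswapWR$ in \Cref{def:pswapWR} is symmetric in its two arguments, so from $\txT[1] \pswapWR \txT[2]$ I also get $\txT[2] \pswapWR \txT[1]$; applying \Cref{lem:commutativity-writes} to both orders then yields $\txT[1]\txT[2] \sim_{\wset{\txT[1]}} \txT[2]\txT[1]$ and, after using symmetry of $\sim_{\QmvA}$ (\Cref{lem:sim:equiv}), $\txT[1]\txT[2] \sim_{\wset{\txT[2]}} \txT[2]\txT[1]$. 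Since $\sim_{\QmvA}$ is by \Cref{def:sim} just pointwise agreement on the keys in $\QmvA$, these two facts combine to $\txT[1]\txT[2] \sim_{W} \txT[2]\txT[1]$ for $W = \wset{\txT[1]} \cup \wset{\txT[2]}$.

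The remaining work is to show the two orders also agree on every qualified key \emph{outside} $W$. Fix a reachable state $\bcSt$ and a key $\qmvA \notin W$. Because $\{\qmvA\} \cap \wset{\txT[1]} = \emptyset$ and $\wapprox{\wset{\txT[1]}}{\txT[1]}$, \Cref{def:safeapprox} gives $\txT[1] \sim_{\{\qmvA\}} \bcEmpty$, hence $\semTx{\txT[1]}{\bcSt}\,\qmvA = \bcSt\,\qmvA$. The intermediate state $\semTx{\txT[1]}{\bcSt}$ is again reachable, so the same reasoning applied to $\txT[2]$ (using $\{\qmvA\} \cap \wset{\txT[2]} = \emptyset$ and $\wapprox{\wset{\txT[2]}}{\txT[2]}$) gives $\semBc{\txT[1]\txT[2]}{\bcSt}\,\qmvA = \semTx{\txT[2]}{\semTx{\txT[1]}{\bcSt}}\,\qmvA = \bcSt\,\qmvA$. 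The symmetric computation gives $\semBc{\txT[2]\txT[1]}{\bcSt}\,\qmvA = \bcSt\,\qmvA$, so the two orders agree on $\qmvA$.

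Putting the two cases together, $\semBc{\txT[1]\txT[2]}{\bcSt}$ and $\semBc{\txT[2]\txT[1]}{\bcSt}$ agree on every qualified key, i.e.\ $\semBc{\txT[1]\txT[2]}{\bcSt} \sim \semBc{\txT[2]\txT[1]}{\bcSt}$ by item~\ref{lem:sim:equiv:sim} of \Cref{lem:sim:equiv}; as $\bcSt$ was an arbitrary reachable state, this means $\txT[1]\txT[2] \sim \txT[2]\txT[1]$, that is $\txT[1] \swap \txT[2]$. I do not anticipate a genuine difficulty: the only delicate points are the symmetry observation --- needed so that \Cref{lem:commutativity-writes} also controls $\wset{\txT[2]}$ --- and the remark that $\semTx{\txT[1]}{\bcSt}$ stays reachable, which is what licenses invoking safety of $\wset{\txT[2]}$ there; everything else is a direct unfolding of \Cref{def:safeapprox} and \Cref{def:sim}.
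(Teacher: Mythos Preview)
Your proposal is correct and follows essentially the same approach as the paper: apply \Cref{lem:commutativity-writes} twice (exploiting the symmetry of $\pswapWR$) to handle the keys in $\wset{\txT[1]} \cup \wset{\txT[2]}$, then use the write-safety of $\wset{\txT[1]}$ and $\wset{\txT[2]}$ to show that neither order touches the remaining keys, and combine via \Cref{lem:sim:equiv}. The paper phrases the second step as ``$\sim_{\QmvA}$ is a congruence'' applied to $\bcEmpty \sim_{\QmvA} \txT[1]$ and $\bcEmpty \sim_{\QmvA} \txT[2]$ for $\QmvA = \QmvU \setminus (\wset{\txT[1]} \cup \wset{\txT[2]})$, whereas you unfold this pointwise and make explicit that the intermediate state $\semTx{\txT[1]}{\bcSt}$ is reachable; the underlying argument is the same.
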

\begin{proof}
  By applying \Cref{lem:commutativity-writes} twice:
  \[
  \txT[1]\txT[2] \sim_{\wset{\txT[1]}} \txT[2]\txT[1] \qquad
  \txT[2]\txT[1] \sim_{\wset{\txT[2]}} \txT[1]\txT[2]
  \]
  Let $\QmvA = \QmvU \setminus (\wset{\txT[1]} \cup \wset{\txT[2]})$.
  Since $\QmvA \cap \wset{\txT[1]} = \emptyset = \QmvA \cap \wset{\txT[2]}$,
  by applying~\Cref{def:safeapprox} twice:
  \[
  \bcEmpty \sim_{\QmvA} \txT[1]
  \qquad
  \bcEmpty \sim_{\QmvA} \txT[2]
  \]
  Then, since $\sim_{\QmvA}$ is a congruence:
  \[
  \txT[1]\txT[2] \sim_{\QmvA} \txT[2]\txT[1]
  \]
  Summing up:
  \[
  \txT[1]\txT[2] \sim_{\QmvA \cup (\wset{\txT[1]} \cup \wset{\txT[2]})} \txT[2]\txT[1]
  \]
  from which we obtain the thesis, since $\sim_{\QmvU} = \sim$.
  \qed
\end{proof}

\begin{lemma}
  \label{lem:pswap-implies-pswapWR}
  \(
  \txT[1] \pswap{}{} \txT[2]
  \implies
  \exists \wset{}, \rset{} 
  \, : \,
  \txT[1] \pswapWR \txT[2]
  \)
\end{lemma}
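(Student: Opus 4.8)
The plan is to exploit the fact that $\pswap{}{}$ and $\pswapWR$ differ only in a quantifier placement: $\txT[1] \pswap{}{} \txT[2]$ merely asks for \emph{some} safe read/write approximations of $\txT[1]$ and $\txT[2]$ satisfying Bernstein's disjointness condition, whereas $\txT[1] \pswapWR \txT[2]$ requires the same condition for the sets assigned by \emph{global} static analyses $\wset{}$ and $\rset{}$, which must be safe for every transaction. So, given witnesses for $\txT[1] \pswap{}{} \txT[2]$, I would build global analyses that reproduce exactly those witnesses on $\txT[1]$ and $\txT[2]$ and fall back to a trivially safe default everywhere else.

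Concretely, let $W, W', R, R' \subseteq \QmvU$ be such that $\wapprox{W}{\txT[1]}$, $\wapprox{W'}{\txT[2]}$, $\rapprox{R}{\txT[1]}$, $\rapprox{R'}{\txT[2]}$ and $(R \cup W) \cap W' = \emptyset = (R' \cup W') \cap W$; these exist by~\Cref{def:pswap}, which also gives $\txT[1] \neq \txT[2]$. Define, for every transaction $\txT$,
\[
\wset{\txT} =
\begin{cases}
W & \text{if } \txT = \txT[1] \\
W' & \text{if } \txT = \txT[2] \\
\QmvU & \text{otherwise}
\end{cases}
\qquad
\rset{\txT} =
\begin{cases}
R & \text{if } \txT = \txT[1] \\
R' & \text{if } \txT = \txT[2] \\
\QmvU & \text{otherwise}
\end{cases}
\]
which is well defined since $\txT[1] \neq \txT[2]$. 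The next step is to check that $\wset{}$ and $\rset{}$ are static analyses, i.e.\ $\wapprox{\wset{\txT}}{\txT}$ and $\rapprox{\rset{\txT}}{\txT}$ for all $\txT$. For $\txT \in \setenum{\txT[1],\txT[2]}$ this is the choice of witnesses. For every other $\txT$ it reduces to showing that $\QmvU$ is a safe write and a safe read approximation of $\txT$. Safety as a write approximation is immediate from~\Cref{def:safeapprox}: the only $\QmvB$ with $\QmvB \cap \QmvU = \emptyset$ is $\QmvB = \emptyset$, and $\txT \sim_{\emptyset} \bcEmpty$ holds vacuously. Safety as a read approximation is the only mildly delicate point: if $\bcB \sim_{\QmvU} \bcBi$ then $\bcB \sim \bcBi$ by item~\ref{lem:sim:equiv:sim} of~\Cref{lem:sim:equiv}, hence $\bcB\txT \sim \bcBi\txT$ by~\Cref{lem:sim:append} (with empty prefix and suffix $\txT$), and therefore $\bcB\txT \sim_{\QmvB} \bcBi\txT$ for every $\QmvB$ (so the second hypothesis $\bcB \sim_{\QmvB} \bcBi$ of~\Cref{def:safeapprox} is not even needed).

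Finally, with these analyses we have $\wset{\txT[1]} = W$, $\wset{\txT[2]} = W'$, $\rset{\txT[1]} = R$, $\rset{\txT[2]} = R'$, so the requirement of~\Cref{def:pswapWR}, namely $(\rset{\txT[1]} \cup \wset{\txT[1]}) \cap \wset{\txT[2]} = \emptyset = (\rset{\txT[2]} \cup \wset{\txT[2]}) \cap \wset{\txT[1]}$, is literally the disjointness condition satisfied by the witnesses, giving $\txT[1] \pswapWR \txT[2]$. I expect the main (and really the only) obstacle to be the verification that $\QmvU$ is a universally safe read approximation; everything else is routine bookkeeping around the definitions.
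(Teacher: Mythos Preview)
Your proof is correct and is exactly the natural way to unpack what the paper dismisses as ``straightforward from~\Cref{def:pswap} and~\Cref{def:pswapWR}'': extend the four witnesses to global analyses by defaulting to $\QmvU$, and verify that $\QmvU$ is always safe. Your check that $\rapprox{\QmvU}{\txT}$ via~\Cref{lem:sim:equiv}\ref{lem:sim:equiv:sim} and~\Cref{lem:sim:append} is the right way to close the only non-immediate obligation.
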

\begin{proof}
  Straightforward from~\Cref{def:pswap} and~\Cref{def:pswapWR}.
  \qed
\end{proof}

\thpswapimpliesswap*
\begin{proof}
  By~\Cref{lem:pswap-implies-pswapWR} and \Cref{th:pswapWR-implies-swap}.
  \qed
\end{proof}

\lempswapWRimpliespswap*
\begin{proof}
  Trivial.
  \qed
\end{proof}

\thpswapmazurkiewicz*
\begin{proof}
  Straightforward by~\Cref{th:pswap-implies-swap}.
  \qed
\end{proof}

\begin{lemma}
  \label{lem:equivRel-implies-equiv}
  \(
  \bcB \equivStSeq[\pswap{}{}] \bcBi
  \;\;\implies\;\;
  \bcB \sim \bcBi
  \)
\end{lemma}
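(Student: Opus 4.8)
The plan is to obtain this as an immediate consequence of two results already established in the excerpt, chained together. Recall that \Cref{th:pswap:mazurkiewicz} gives the inclusion $\equivStSeq[\pswap{}{}] \subseteq \equivStSeq[\swap]$ between Mazurkiewicz equivalences (itself following from $\pswap{}{} \subseteq \swap$, \ie \Cref{th:pswap-implies-swap}, by monotonicity of the Mazurkiewicz construction in the independence relation), and \Cref{th:swap:mazurkiewicz} gives $\equivStSeq[\swap] \subseteq \sim$.

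Concretely, I would argue as follows. Assume $\bcB \equivStSeq[\pswap{}{}] \bcBi$. By \Cref{th:pswap:mazurkiewicz}, this yields $\bcB \equivStSeq[\swap] \bcBi$. Then by \Cref{th:swap:mazurkiewicz} we conclude $\bcB \sim \bcBi$. That is the whole proof: it is just the composition of the two inclusions $\equivStSeq[\pswap{}{}] \subseteq \equivStSeq[\swap] \subseteq \sim$.

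There is no real obstacle here, since the substance has already been discharged in the proofs of \Cref{th:pswap:mazurkiewicz} and \Cref{th:swap:mazurkiewicz}; the only thing to be slightly careful about is that all these relations are relations on the free monoid $\Tx^*$ (resp.\ the derived relations on blockchains via their semantics), so that the transitivity step is literally set-theoretic inclusion composition and nothing more. Hence the proof reduces to a one-line remark invoking the two theorems.

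\begin{proof}[Proof of \Cref{lem:equivRel-implies-equiv}]
  Immediate from \Cref{th:pswap:mazurkiewicz} and \Cref{th:swap:mazurkiewicz}, by the inclusions $\equivStSeq[\pswap{}{}] \subseteq \equivStSeq[\swap] \subseteq \sim$.
  \qed
\end{proof}
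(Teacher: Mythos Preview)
Your proposal is correct and matches the paper's own proof, which simply states ``Direct by \Cref{th:swap:mazurkiewicz,th:pswap:mazurkiewicz}.'' Both arguments compose the two inclusions $\equivStSeq[\pswap{}{}] \subseteq \equivStSeq[\swap] \subseteq {\sim}$.
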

\begin{proof}
  Direct by \Cref{th:swap:mazurkiewicz,th:pswap:mazurkiewicz}.
  \qed
\end{proof}

\begin{lemma}
  \label{lem:merge-comm-monoid}
  $\mrg$ is commutative and associative, 
  with $\lambda \qmvA. \bot$ as neutral element.
\end{lemma}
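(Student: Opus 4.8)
The plan is to verify each of the three properties directly from \Cref{def:merge}, handling the partiality of $\mrg$ with some care. The one fact that organises the whole argument is that, whenever $\mSubst[0] \mrg \mSubst[1]$ is defined, the set of qualified keys it binds is exactly $\keys{\mSubst[0]} \cup \keys{\mSubst[1]}$; this is immediate from the case analysis in the definition. I would establish this observation first, since it is used repeatedly below.

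For \textbf{commutativity}: the side condition $\keys{\mSubst[0]} \cap \keys{\mSubst[1]} = \emptyset$ is symmetric in $\mSubst[0]$ and $\mSubst[1]$, so $\mSubst[0] \mrg \mSubst[1]$ is defined iff $\mSubst[1] \mrg \mSubst[0]$ is. When both are defined, fix a qualified key $\qmvA$ and distinguish the three cases $\qmvA \in \keys{\mSubst[0]}$, $\qmvA \in \keys{\mSubst[1]}$, and $\qmvA \notin \keys{\mSubst[0]} \cup \keys{\mSubst[1]}$ (exhaustive, and by disjointness the first two are mutually exclusive); in each case both $(\mSubst[0] \mrg \mSubst[1])\,\qmvA$ and $(\mSubst[1] \mrg \mSubst[0])\,\qmvA$ evaluate to the same value. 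For the \textbf{neutral element}: $\keys{\lambda \qmvA.\bot} = \emptyset$, so both $\mSubst[0] \mrg (\lambda \qmvA.\bot)$ and $(\lambda \qmvA.\bot) \mrg \mSubst[0]$ are always defined, and the same kind of case split (now the non-trivial case being $\qmvA \in \keys{\mSubst[0]}$, the trivial one $\qmvA \notin \keys{\mSubst[0]}$) shows that the result agrees with $\mSubst[0]$ pointwise.

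For \textbf{associativity}: using the domain observation, $(\mSubst[0] \mrg \mSubst[1]) \mrg \mSubst[2]$ is defined iff $\keys{\mSubst[0]} \cap \keys{\mSubst[1]} = \emptyset$ and $(\keys{\mSubst[0]} \cup \keys{\mSubst[1]}) \cap \keys{\mSubst[2]} = \emptyset$, i.e. iff the three sets $\keys{\mSubst[0]}, \keys{\mSubst[1]}, \keys{\mSubst[2]}$ are pairwise disjoint; the identical condition characterises definedness of $\mSubst[0] \mrg (\mSubst[1] \mrg \mSubst[2])$. Assuming it, a final case analysis on which of the (now pairwise disjoint, hence mutually exclusive) sets $\keys{\mSubst[i]}$ contains a given $\qmvA$ — with a fourth case for $\qmvA$ lying in none of them — shows $\big((\mSubst[0] \mrg \mSubst[1]) \mrg \mSubst[2]\big)\,\qmvA = \mSubst[i]\,\qmvA = \big(\mSubst[0] \mrg (\mSubst[1] \mrg \mSubst[2])\big)\,\qmvA$ in every case.

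I do not expect a genuine obstacle here: the content is entirely careful bookkeeping of the partiality, and the only place to slip is in associativity, where one must check that ``defined'' unfolds to the \emph{same} pairwise-disjointness condition on both bracketings before comparing values. Once the domain observation $\keys{\mSubst[0] \mrg \mSubst[1]} = \keys{\mSubst[0]} \cup \keys{\mSubst[1]}$ is in place, the rest is a routine three- or four-way case split on qualified keys.
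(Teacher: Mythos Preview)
Your proposal is correct and matches the paper's own treatment: the paper's proof is just the single word ``Trivial.'', and what you have written is precisely the routine unfolding of \Cref{def:merge} that justifies that word. There is nothing to add or to worry about here.
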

\begin{proof}
  Trivial.
  \qed
\end{proof}

\begin{lemma}
  \label{lem:mrg-seq}
  If $\mSubst[1] \mrg \mSubst[2] = \mSubst$, then 
  $\mSubst = \mSubst[1]\mSubst[2]$.
\end{lemma}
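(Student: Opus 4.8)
The plan is to unfold both sides of the claimed equation key by key and compare. First I would record the consequences of the hypothesis: since $\mSubst[1] \mrg \mSubst[2] = \mSubst$ is (by the very form of the equation) a defined expression, \Cref{def:merge} forces $\keys{\mSubst[1]} \cap \keys{\mSubst[2]} = \emptyset$; and, reading off \Cref{def:merge} together with \Cref{def:state-update}, $\keys{\mSubst} = \keys{\mSubst[1]} \cup \keys{\mSubst[2]}$, with $\mSubst$ agreeing with $\mSubst[1]$ on $\keys{\mSubst[1]}$ and with $\mSubst[2]$ on $\keys{\mSubst[2]}$.

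Next I would recall that $\mSubst[1]\mSubst[2]$ is the composition of the two updates, i.e.\ the state update $\mSubst'$ such that $\bcSt \mSubst' = (\bcSt \mSubst[1]) \mSubst[2]$ for every $\bcSt$. So it suffices to fix an arbitrary state $\bcSt$ and qualified key $\amvA.\valK$ and evaluate $((\bcSt \mSubst[1]) \mSubst[2]) \amvA \valK$ by two applications of \Cref{def:state-update}, in a three-way case split on membership of $\amvA.\valK$ in the two domains: (i) if $\amvA.\valK \in \keys{\mSubst[2]}$, the outer application returns $\mSubst[2]\amvA\valK$, which is $\mSubst\amvA\valK$ by the agreement above; (ii) if $\amvA.\valK \in \keys{\mSubst[1]}$, then by disjointness $\amvA.\valK \notin \keys{\mSubst[2]}$, so the result is $(\bcSt\mSubst[1])\amvA\valK = \mSubst[1]\amvA\valK = \mSubst\amvA\valK$; (iii) if $\amvA.\valK \notin \keys{\mSubst[1]} \cup \keys{\mSubst[2]} = \keys{\mSubst}$, the result is $\bcSt\amvA\valK$, which is also $(\bcSt\mSubst)\amvA\valK$. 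In every case $((\bcSt \mSubst[1]) \mSubst[2]) \amvA \valK = (\bcSt \mSubst) \amvA \valK$, hence $\bcSt\mSubst = (\bcSt\mSubst[1])\mSubst[2]$ for all $\bcSt$, i.e.\ $\mSubst = \mSubst[1]\mSubst[2]$.

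I do not expect a genuine obstacle here: the content of the lemma is just the observation that on disjoint domains ``merge'' and ``sequential override'' of updates coincide, so the proof reduces to the routine case analysis sketched above. The only points needing minor care are the elementary bookkeeping of $\keys{\cdot}$ for a merge (immediate from \Cref{def:merge}) and keeping the roles of $\mSubst[1]$ and $\mSubst[2]$ straight in case (i); the latter is harmless since, by \Cref{lem:merge-comm-monoid}, $\mrg$ is symmetric on updates with disjoint domains.
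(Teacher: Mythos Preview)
Your proposal is correct and follows essentially the same route as the paper: both argue by a case split on whether a qualified key lies in $\keys{\mSubst[2]}$, in $\keys{\mSubst[1]}$, or in neither, using the disjointness of $\keys{\mSubst[1]}$ and $\keys{\mSubst[2]}$ guaranteed by the definedness of the merge. The only cosmetic difference is that the paper compares $\mSubst$ and $\mSubst[1]\mSubst[2]$ directly as partial functions on qualified keys, whereas you compare their action on an arbitrary state $\bcSt$; since the juxtaposition $\mSubst[1]\mSubst[2]$ is just the override of $\mSubst[1]$ by $\mSubst[2]$ (\Cref{def:state-update} applied with $\mSubst[1]$ in the role of the state), the two framings coincide and your extra universal quantification over $\bcSt$ is harmless.
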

\begin{proof}
  Since $\mSubst[1] \mrg \mSubst[2]$ is defined, it must be
  $\keys{\mSubst[1]} \cap \keys{\mSubst[2]} = \emptyset$.
  Let $\qmvA$ be a qualified key. We have two cases:
  \begin{itemize}
  \item $\qmvA \in \keys{\mSubst}$. Since $\keys{\mSubst} = \keys{\mSubst[1]} 
    \cup \keys{\mSubst[2]}$, we have two subcases:
    \begin{itemize}
    \item $\qmvA \in \keys{\mSubst[1]}$. Then, $\mSubst \qmvA = \mSubst[1]\qmvA$.
      By disjointness, $\qmvA \not\in \keys{\mSubst[2]}$, and hence
      $\mSubst[1]\mSubst[2]\qmvA = \mSubst[1]\qmvA$.
    \item $\qmvA \in \keys{\mSubst[2]}$. Then, $\mSubst \qmvA = \mSubst[2]\qmvA = 
      \mSubst[1]\mSubst[2]\qmvA$.
    \end{itemize}
  \item $\qmvA \not\in \keys{\mSubst}$. 
    Then, $\qmvA \not\in \keys{\mSubst[1]}$,
    $\qmvA \not\in \keys{\mSubst[2]}$, 
    and so $\mSubst \qmvA = \bot = \mSubst[1]\mSubst[2]\qmvA$.
    \qed
  \end{itemize}
\end{proof}

\begin{lemma}
  If $\bcB[1] \seqn \txsA[1]$ and $\bcB[2] \seqn \txsA[2]$, then 
  $\bcB[1]\bcB[2] \seqn (\txsA[1] \cup \txsA[2])$.
\end{lemma}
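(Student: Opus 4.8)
The plan is to unfold the definition of $\seqn$ and verify its defining clauses directly for the pair $(\bcB[1]\bcB[2],\,\txsA[1]\cup\txsA[2])$. Recall that, relative to an occurrence net such as $\PNet{\relR}{\bcB}$, a relation $\bcB' \seqn \txsA$ holds exactly when $\bcB'$ lists the underlying transactions $\txOfTr{\trT}$ of the transitions $\trT \in \txsA$, each exactly once, in an order that is a linear extension of the net's causal order $<$ --- i.e.\ $\bcB'$ is a legitimate serial firing order for the transitions in $\txsA$. So two things have to be checked: (i) that $\bcB[1]\bcB[2]$ enumerates the transactions of $\txsA[1]\cup\txsA[2]$ with the correct multiplicity, and (ii) that this enumeration is $<$-compatible.

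For (i), the hypothesis $\bcB[1]\seqn\txsA[1]$ gives that $\bcB[1]$ contains exactly the transactions underlying $\txsA[1]$, once each, and likewise for $\bcB[2]$ and $\txsA[2]$; since distinct transitions carry distinct indices in the ambient net we have $\txsA[1]\cap\txsA[2]=\emptyset$, so $\bcB[1]\bcB[2]$ lists each transaction of the union exactly once. For (ii), take $\trT < \trTi$ with $\trT,\trTi \in \txsA[1]\cup\txsA[2]$ and split on where the two transitions fall. If both are in $\txsA[1]$, conclude from $\bcB[1]\seqn\txsA[1]$; if both are in $\txsA[2]$, from $\bcB[2]\seqn\txsA[2]$; if $\trT\in\txsA[1]$ and $\trTi\in\txsA[2]$, then $\trT$ precedes $\trTi$ trivially, since every transaction of $\bcB[1]$ occurs before every transaction of $\bcB[2]$ in $\bcB[1]\bcB[2]$. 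The remaining case, $\trT\in\txsA[2]$ and $\trTi\in\txsA[1]$, must be shown impossible: since $<$ only relates a transition of smaller index to one of larger index (as fixed in \Cref{def:bc-to-pnet}) and $\txsA[1]$ consists precisely of the lower-indexed transitions --- it is downward closed in $\txsA[1]\cup\txsA[2]$ with respect to $<$ --- no such pair can exist.

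I expect this last case to be the only real obstacle; everything else is pure bookkeeping. Ruling out a ``backward'' causal edge from $\txsA[2]$ into $\txsA[1]$ needs that $\txsA[1]$ be an initial, $<$-downward-closed set of transitions, which is exactly how such sets arise when $\seqn$ is used to describe prefixes of step firing sequences in the proof of \Cref{th:bc-to-pnet}; given that, the impossibility follows immediately from the index-monotonicity of $<$. So before committing the final proof I would confirm, from the definition of $\seqn$ and its uses, that this downward-closure is indeed available (or is part of the definition itself); granting it, the lemma follows by the case analysis above.
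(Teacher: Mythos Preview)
Your proposal rests on a misreading of $\seqn$. In the paper, $\seqn$ is \emph{not} a relation between sequences of transitions and sets of transitions of an occurrence net, and it carries no causal-order constraint whatsoever. It is defined purely inductively (\Cref{def:mset-serial}) as a relation between blockchains and multisets of \emph{transactions}: $\bcEmpty \seqn \emptymset$, and $\bcB\txT \seqn (\msetenum{\txT} + \txsA)$ whenever $\bcB \seqn \txsA$. In other words, $\bcB \seqn \txsA$ holds exactly when the multiset of transactions occurring in $\bcB$ equals $\txsA$; there is no net, no $<$, and no linear-extension requirement. All of your part (ii), the case analysis on where $\trT,\trTi$ fall, the discussion of downward closure, and the appeal to index monotonicity of $<$ from \Cref{def:bc-to-pnet} are addressing constraints that simply are not present in the definition.

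Once you use the correct definition, the lemma is almost immediate and the paper proves it by a straightforward induction on $\card{\bcB[2]}$. Base case: $\bcB[2] = \bcEmpty$ forces $\txsA[2] = \emptymset$, so the conclusion is just the first hypothesis. Inductive step: write $\bcB[2] = \bcBi[2]\txT$, invert the second rule to get $\txsA[2] = \msetenum{\txT} + \txsAi[2]$ with $\bcBi[2] \seqn \txsAi[2]$, apply the induction hypothesis to obtain $\bcB[1]\bcBi[2] \seqn (\txsA[1] \cup \txsAi[2])$, and then apply the second rule once more. No occurrence-net machinery is needed or relevant here.
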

\begin{proof}
  By induction on $\card{\bcB[2]}$. For the base case, it must be 
  $\bcB[2] = \bcEmpty$ and hence $\txsA[2] = \emptyset$. Then,
  $\bcB[1]\bcB[2] = \bcB[1]$ and $\txsA[1] \cup \txsA[2] = \txsA[1]$.
  Therefore, the thesis coincides with the first hypothesis.
  For the induction case, it must be $\bcB[2] = \bcBi[2]\txT$, with
  $\card{\bcBi[2]} = n$. Furthermore, it must be
  $\txsA[2] = \setenum{\txT} \cup \txsAi[2]$, for some $\txsAi[2]$ such that
  $\bcBi[2] \seqn \txsAi[2]$. By the induction hypothesis:
  \[
  \bcB[1]\bcBi[2] \seqn (\txsA[1] \cup \txsAi[2])
  \]
  Then:
  \[
  \bcB[1]\bcBi[2]\txT = \bcB[1]\bcB[2] \seqn 
  (\setenum{\txT} \cup \txsA[1] \cup \txsAi[2]) = \txsA[1] \cup \txsA[2]
  \tag*{\qed}
  \]
\end{proof}


\begin{lemma}
  \label{lem:strong-swap-chain-W-R}
  Let $\bcB$ and $\txT$ be such that 
  $\bcB = \bcB[1]\txTi\bcB[2] \implies \txT \pswapWR \txTi$. Then, for all
  $\bcBi$, $\semBc{\bcBi}{} \sim_{\rset{\txT}}\semBc{\bcB}{\semBc{\bcBi}{}}$
  and $\semBc{\bcBi}{} \sim_{\wset{\txT}}\semBc{\bcB}{\semBc{\bcBi}{}}$.
\end{lemma}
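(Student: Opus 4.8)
The plan is to proceed by induction on the length of $\bcB$. For the base case $\bcB = \bcEmpty$ we have $\semBc{\bcB}{\semBc{\bcBi}{}} = \semBc{\bcBi}{}$, so both claims hold by reflexivity of $\sim_{\rset{\txT}}$ and $\sim_{\wset{\txT}}$ (\Cref{lem:sim:equiv}). For the inductive step, write $\bcB = \txTi \bcB'$. Every transaction occurring in $\bcB'$ also occurs in $\bcB$, so the hypothesis of the lemma still holds with $\bcB'$ in place of $\bcB$; moreover, instantiating the hypothesis with $\bcB[1] = \bcEmpty$ and $\bcB[2] = \bcB'$ gives $\txT \pswapWR \txTi$. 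The goal for the step is to chain two facts: (i) that appending $\txTi$ to $\bcBi$ does not disturb $\rset{\txT}$ or $\wset{\txT}$, and (ii) that, by the induction hypothesis, running $\bcB'$ afterwards does not disturb them either.

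For (i), from $\txT \pswapWR \txTi$ and \Cref{def:pswapWR} we extract $(\rset{\txT} \cup \wset{\txT}) \cap \wset{\txTi} = \emptyset$, hence in particular $\rset{\txT} \cap \wset{\txTi} = \emptyset$ and $\wset{\txT} \cap \wset{\txTi} = \emptyset$. Since $\wset{}$ is a static analysis of written keys, $\wapprox{\wset{\txTi}}{\txTi}$ holds, so by \Cref{def:safeapprox} we get $\txTi \sim_{\rset{\txT}} \bcEmpty$ and $\txTi \sim_{\wset{\txT}} \bcEmpty$. Because $\semBc{\bcBi}{} = \semBc{\bcBi}{\bcStInit}$ is a reachable state, we may instantiate the universally quantified state in these equivalences with $\semBc{\bcBi}{}$, obtaining
\[
\semBc{\bcBi\txTi}{} \;=\; \semBc{\txTi}{\semBc{\bcBi}{}} \;\sim_{\rset{\txT}}\; \semBc{\bcBi}{}
\qquad\text{and}\qquad
\semBc{\bcBi\txTi}{} \;\sim_{\wset{\txT}}\; \semBc{\bcBi}{}.
\]
For (ii), apply the induction hypothesis to $\bcB'$ (whose transactions are all $\pswapWR$-related to $\txT$) with $\bcBi\txTi$ in the role of $\bcBi$: this yields $\semBc{\bcBi\txTi}{} \sim_{\rset{\txT}} \semBc{\bcB'}{\semBc{\bcBi\txTi}{}}$, and likewise for $\wset{\txT}$. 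Unfolding the fold semantics, $\semBc{\bcB'}{\semBc{\bcBi\txTi}{}} = \semBc{\bcB'}{\semBc{\txTi}{\semBc{\bcBi}{}}} = \semBc{\txTi\bcB'}{\semBc{\bcBi}{}} = \semBc{\bcB}{\semBc{\bcBi}{}}$. Combining (i) and (ii) by transitivity of $\sim_{\rset{\txT}}$ and $\sim_{\wset{\txT}}$ (\Cref{lem:sim:equiv}) gives the two desired equivalences.

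The argument is essentially bookkeeping, so there is no deep obstacle; the only point requiring care is the interplay between the definition of $\sim_{\QmvA}$ on blockchains — which quantifies over all reachable states — and the fold semantics. Concretely, one must note that $\semBc{\bcBi}{}$ (and, in the recursive call, $\semBc{\bcBi\txTi}{}$) is reachable, so that the universally quantified state in the instances of \Cref{def:safeapprox} may be specialised to it, and one must correctly unfold $\semBc{\txTi\bcB'}{\bcSt} = \semBc{\bcB'}{\semBc{\txTi}{\bcSt}}$ and $\semBc{\bcBi\txTi}{} = \semBc{\txTi}{\semBc{\bcBi}{}}$ when matching up the two halves of the chain.
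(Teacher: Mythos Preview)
Your proposal is correct and follows exactly the approach of the paper, which dismisses the proof in one line as ``a simple induction on $\card{\bcB}$, using \Cref{def:safeapprox} for the induction case.'' You have simply fleshed out that induction: peeling off the first transaction, using $\wapprox{\wset{\txTi}}{\txTi}$ together with the disjointness from $\pswapWR$ to invoke \Cref{def:safeapprox}, and then chaining with the induction hypothesis via transitivity.
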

\begin{proof}
  A simple induction on $\card{\bcB}$, using \Cref{def:safeapprox} 
  for the induction case.
  \qed
\end{proof}

\begin{lemma}
  \label{lem:strong-swap-seq}
  If $\txT \pswapWR \txTi$ for all $\txTi \in \txsA$ and $\bcB[1] \seqn \txsA$
  then, $\bcB = \bcB[1]\txTi\bcB[2] \implies \txT \pswapWR \txTi$.
\end{lemma}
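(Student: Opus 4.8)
The plan is to reduce the stated implication to the single combinatorial fact that, whenever $\bcB \seqn \txsA$, every transaction occurring as a factor of $\bcB$ already lies in $\txsA$; the strong‑swappability part is then immediate from the first hypothesis. Concretely, I would read the conclusion as: for the blockchain $\bcB$ with $\bcB \seqn \txsA$ and for every factorisation $\bcB = \bcB[1]\txTi\bcB[2]$, one has $\txT \pswapWR \txTi$. So, after fixing the hypotheses ($\txT \pswapWR \txTi$ for all $\txTi \in \txsA$, and $\bcB \seqn \txsA$) and an arbitrary factorisation $\bcB = \bcB[1]\txTi\bcB[2]$, the goal $\txT \pswapWR \txTi$ follows from the first hypothesis once we know that $\txTi \in \txsA$.

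To establish $\txTi \in \txsA$ I would induct on the length of $\bcB[2]$, unfolding the (right‑recursive) defining clauses of $\seqn$. If $\bcB[2] = \bcEmpty$, then $\bcB = \bcB[1]\txTi$ is non‑empty, so $\seqn$ forces $\txsA = \setenum{\txTi} \cup \txsAi$ for some $\txsAi$ with $\bcB[1] \seqn \txsAi$; hence $\txTi \in \txsA$. If instead $\bcB[2] = \bcBi[2]\txTii$ for some transaction $\txTii$ and a shorter $\bcBi[2]$, then $\bcB = (\bcB[1]\txTi\bcBi[2])\txTii$, so $\txsA = \setenum{\txTii} \cup \txsAi$ with $\bcB[1]\txTi\bcBi[2] \seqn \txsAi$, and the induction hypothesis applied to the shorter factorisation $\bcB[1]\txTi\bcBi[2]$ gives $\txTi \in \txsAi \subseteq \txsA$. (Alternatively one could appeal to the auxiliary lemma $\bcB[1] \seqn \txsA[1] \land \bcB[2] \seqn \txsA[2] \implies \bcB[1]\bcB[2] \seqn \txsA[1]\cup\txsA[2]$, together with the fact that $\seqn$ determines the set $\txsA$ uniquely from $\bcB$.)

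With $\txTi \in \txsA$ in hand, the first hypothesis yields $\txT \pswapWR \txTi$, which is precisely the conclusion, so the proof is complete. The only point that needs care is making the inductive step on $\seqn$ precise — i.e. checking that $\seqn$ genuinely enumerates all transactions occurring in the blockchain, so that no factor transaction can escape $\txsA$; everything else is routine bookkeeping, and in particular the argument never touches the semantics $\sem{\cdot}$ or the static read/write analyses beyond quoting \Cref{def:pswapWR}.
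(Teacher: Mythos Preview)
Your proposal is correct and matches the paper's own proof, which is just ``a simple induction on $\card{\txsA}$'': since $\seqn$ is defined by peeling transactions from the right, inducting on $\card{\txsA}$ amounts exactly to your induction on the length of $\bcB[2]$, with the same two cases (last transaction is $\txTi$, or not). The only cosmetic slip is that $\seqn$ is defined on multisets via $\msetenum{\txT} + \txsA$ rather than sets via $\setenum{\txT}\cup\txsA$, but this does not affect the argument.
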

\begin{proof}
  A simple induction on $\card{\txsA}$.
  \qed
\end{proof}

\begin{lemma}
  \label{lem:pi-swap}
  If $\rapprox{\qmvA}{\txT}$, $\bcB[1] \sim_{\qmvA} \bcB[2]$ and 
  $\bcB[1] \sim_{\qmvB} \bcB[2]$, then $\semBc{\bcB[1]\txT}{} \sim_{\qmvB} 
  \semBc{\bcB[1]}{}\WR{\semBc{\bcB[2]}{},\txT}$.
\end{lemma}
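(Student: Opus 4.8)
The plan is to reduce the statement to the defining property of read‑safe approximations (\Cref{def:safeapprox}) plus the elementary observation that applying a \emph{fixed} state update to observationally equivalent states yields observationally equivalent states. Write $\bcSt[1] = \semBc{\bcB[1]}{}$ and $\bcSt[2] = \semBc{\bcB[2]}{}$; both are reachable states. Since $\WR{}$ is an update collector, \Cref{def:wr} gives $\semBc{\bcB[1]\txT}{} = \semTx{\txT}{\bcSt[1]} = \bcSt[1](\WR{\bcSt[1],\txT})$, while the right‑hand side of the claim is $\bcSt[1](\WR{\bcSt[2],\txT})$. So it suffices to prove $\bcSt[1](\WR{\bcSt[1],\txT}) \sim_{\qmvB} \bcSt[1](\WR{\bcSt[2],\txT})$.

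First I would apply the hypothesis $\rapprox{\qmvA}{\txT}$ to the blockchains $\bcB[1],\bcB[2]$ and the set $\qmvB$: from $\bcB[1] \sim_{\qmvA} \bcB[2]$ and $\bcB[1] \sim_{\qmvB} \bcB[2]$, \Cref{def:safeapprox} yields $\bcB[1]\txT \sim_{\qmvB} \bcB[2]\txT$. Instantiating this at the (reachable) initial state $\bcStInit$ and unfolding $\semTx{\cdot}{}$ via the update collector gives
\[
\bcSt[1](\WR{\bcSt[1],\txT}) \;\sim_{\qmvB}\; \bcSt[2](\WR{\bcSt[2],\txT}).
\]
Next I would observe that, for any state update $\mSubst$, $\bcSt \sim_{\qmvB} \bcSti$ implies $\bcSt\mSubst \sim_{\qmvB} \bcSti\mSubst$: for each $\qmvA \in \qmvB$, the value $(\bcSt\mSubst)\,\qmvA$ is either $\mSubst\,\qmvA$ (when $\qmvA \in \keys{\mSubst}$, hence independent of the underlying state) or $\bcSt\,\qmvA = \bcSti\,\qmvA$ (otherwise, by $\bcSt \sim_{\qmvB} \bcSti$). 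Applying this with $\mSubst = \WR{\bcSt[2],\txT}$ and with $\bcSt[1] \sim_{\qmvB} \bcSt[2]$ (another instance of $\bcB[1] \sim_{\qmvB} \bcB[2]$ at $\bcStInit$) gives $\bcSt[2](\WR{\bcSt[2],\txT}) \sim_{\qmvB} \bcSt[1](\WR{\bcSt[2],\txT})$.

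Finally I would chain the two equivalences using symmetry and transitivity of $\sim_{\qmvB}$ (\Cref{lem:sim:equiv}):
\[
\bcSt[1](\WR{\bcSt[1],\txT}) \;\sim_{\qmvB}\; \bcSt[2](\WR{\bcSt[2],\txT}) \;\sim_{\qmvB}\; \bcSt[1](\WR{\bcSt[2],\txT}),
\]
which is the thesis. I do not expect a genuine obstacle here; the proof is essentially bookkeeping. The only points needing a little care are keeping track of the direction of the equivalences and remembering that $\rset{}$‑safety and $\sim_{\qmvA}$ on blockchains are stated over arbitrary reachable states, so one must explicitly instantiate at $\bcStInit$ (and note $\bcStInit$ is reachable, being $\semBc{\bcEmpty}{}$) rather than argue "for all states''.
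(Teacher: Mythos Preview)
Your proof is correct and follows essentially the same reasoning as the paper's. Both rewrite the two sides via the update collector, use the read-approximation hypothesis to obtain $\semBc{\bcB[1]\txT}{} \sim_{\qmvB} \semBc{\bcB[2]\txT}{}$, and close the remaining gap with the observation that applying a fixed update preserves $\sim_{\qmvB}$; the paper performs that case split on membership in $\keys{\mSubst[2]}$ inline for each key, whereas you factor it out as a separate observation and then chain via transitivity of $\sim_{\qmvB}$.
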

\begin{proof}
  Let $\mSubst[1] = \WR{\semBc{\bcB[1]}{},\txT}$ and 
  $\mSubst[2] = \WR{\semBc{\bcB[2]}{},\txT}$. By \Cref{def:wr}, 
  $\semBc{\bcB[1]\txT}{} = \semBc{\bcB[1]}{}\mSubst[1]$. Let $\qmvA \in \QmvB$.
  We have two cases:
  \begin{itemize}
  \item $\qmvA \in \keys{\mSubst[2]}$. 
    \[
    \semBc{\bcB[1]}{}\mSubst[2]\qmvA = \mSubst[2]\qmvA = 
    \semBc{\bcB[2]}{}\mSubst[2]\qmvA = \semBc{\bcB[2]\txT}{}\qmvA = 
    \semBc{\bcB[1]\txT}{}\qmvA
    \]
  \item $\qmvA \not\in \keys{\mSubst[2]}$.
    \[
    \semBc{\bcB[1]}{}\mSubst[2]\qmvA = \semBc{\bcB[1]}{}\qmvA = 
    \semBc{\bcB[2]}{}\qmvA = \semBc{\bcB[2]}{}\mSubst[2]\qmvA = 
    \semBc{\bcB[1]\txT}{}\qmvA
    \tag*{\qed}
    \]
  \end{itemize}
\end{proof}

\begin{definition}
  Let $\WR{}$ be a state updater, and
  let $\wset{}$ be such that $\forall \txT: \wapprox{\wset{\txT}}{\txT}$.
  We say that $\WR{}$ and $\wset{}$ are \emph{compatible} when
  $\forall \bcSt,\txT : \keys{\WR{\bcSt,\txT}} \subseteq \wset{\txT}$.
\end{definition}

We extend the semantics of transactions
to finite \emph{multisets} of transactions.
Hereafter, we denote with $\emptymset$ the empty multiset,
with $\msetenum{\txT[1],\ldots,\txT[n]}$ the multiset containing
$\txT[1],\ldots,\txT[n]$, and with
$A + B$ the sum between multisets,
\ie $(A+B)(x) = A(x) + B(x)$ for all $x$.

\begin{definition}[\textbf{Semantics of multisets of transactions}]
  \label{def:multiset-sem}
  We denote the semantics of a multiset of transactions $\txsA$,
  in a state $\bcSt$ and an update collector $\WR{}$,
  as $\msem{\WR{}}{\bcSt}{\txsA}$,
  where the partial function
  $\msem{\WR{}}{\bcSt}{\cdot}$ is defined as:
  \( \;
  \msem{\WR{}}{\bcSt}{\txsA}
  \, = \,
  \bcSt \bigoplus_{\txT \in \txsA} \WR{\bcSt,\txT}
  \).
  %
\end{definition}


Hereafter, we say that a multiset $\txsA$ is strongly swappable \wrt $\pswapWR$
if $\forall \txT \in \txsA, \forall \txTi \in \txsA \setminus \msetenum{\txT} : \txT \pswapWR \txTi$.

\begin{lemma}
  \label{lem:step-seq-union-aux}
  If $\txsA$ is strongly swappable \wrt $\pswapWR$, $\bcB \seqn \txsA$ and $\WR{}$
  is compatible with $\wset{}$ then,
  for all $\bcB[0]$: $\msem{\WR{}}{\semBc{\bcB[0]}{}}{\txsA} = 
  \semBc{\bcB}{\semBc{\bcB[0]}{}}$.
\end{lemma}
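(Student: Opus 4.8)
The plan is to prove the identity by induction on $\card{\txsA}$ (equivalently, on the length of $\bcB$). For the base case $\txsA = \emptymset$ we have $\bcB = \bcEmpty$, and both sides reduce to $\semBc{\bcB[0]}{}$: the left one because the empty merge is the neutral element $\lambda\qmvA.\bot$ (\Cref{lem:merge-comm-monoid}), which acts as the identity on states.

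For the inductive step, write $\bcB = \txT\bcBi$ with $\bcBi \seqn \txsAi$ and $\txsA = \msetenum{\txT} + \txsAi$; the sub-multiset $\txsAi$ is again strongly swappable \wrt $\pswapWR$. Put $\bcSt = \semBc{\bcB[0]}{}$ and $\bcSti = \semTx{\txT}{\bcSt} = \bcSt\,\WR{\bcSt,\txT}$ (\Cref{def:wr}). Since $\WR{}$ is compatible with $\wset{}$ and strong swappability makes $\wset{\txT}$ and each $\wset{\txTi}$ (for $\txTi\in\txsAi$) pairwise disjoint, every merge below is defined. Unfolding \Cref{def:multiset-sem}, using commutativity and associativity of $\mrg$ (\Cref{lem:merge-comm-monoid}) and then turning merges into sequential application of updates via \Cref{lem:mrg-seq}, the left‑hand side rewrites as
\[
\msem{\WR{}}{\bcSt}{\txsA}
\;=\;
\bcSt\Bigl(\WR{\bcSt,\txT}\mrg\!\!\bigoplus_{\txTi\in\txsAi}\!\!\WR{\bcSt,\txTi}\Bigr)
\;=\;
\bcSti\Bigl(\bigoplus_{\txTi\in\txsAi}\WR{\bcSt,\txTi}\Bigr),
\]
the last step because $\bcSt\,\WR{\bcSt,\txT} = \semTx{\txT}{\bcSt} = \bcSti$. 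On the other side, $\semBc{\bcB}{\bcSt} = \semBc{\bcBi}{\bcSti}$, and since $\bcSti = \semBc{\bcB[0]\txT}{}$ the induction hypothesis (instantiated with $\bcB[0]\txT$ in place of $\bcB[0]$) gives $\semBc{\bcBi}{\bcSti} = \msem{\WR{}}{\bcSti}{\txsAi} = \bcSti\bigl(\bigoplus_{\txTi\in\txsAi}\WR{\bcSti,\txTi}\bigr)$. Hence it suffices to establish
\[
\bcSti\Bigl(\bigoplus_{\txTi\in\txsAi}\WR{\bcSt,\txTi}\Bigr)
\;=\;
\bcSti\Bigl(\bigoplus_{\txTi\in\txsAi}\WR{\bcSti,\txTi}\Bigr).
\tag*{$(\star)$}
\]

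The crux is $(\star)$: it says that for the transactions of $\txsAi$ it is immaterial whether their updates are computed from the common pre‑state $\bcSt$ or from $\bcSti = \semTx{\txT}{\bcSt}$, the state already modified by $\txT$. I would prove it transaction by transaction. Fix $\txTi\in\txsAi$; by symmetry of $\pswapWR$ we have $\txTi\pswapWR\txT$, so $\rset{\txTi}\cap\wset{\txT} = \emptyset = \wset{\txTi}\cap\wset{\txT}$. Since $\wapprox{\wset{\txT}}{\txT}$, \Cref{def:safeapprox} gives $\txT\sim_{\rset{\txTi}}\bcEmpty$ and $\txT\sim_{\wset{\txTi}}\bcEmpty$, whence, by congruence of $\sim_{\QmvA}$, $\bcB[0]\txT\sim_{\rset{\txTi}}\bcB[0]$ and $\bcB[0]\txT\sim_{\wset{\txTi}}\bcB[0]$ (this is exactly the content of \Cref{lem:strong-swap-chain-W-R} for the one‑transaction block $\txT$). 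Applying \Cref{lem:pi-swap} with read set $\rset{\txTi}$ and $\QmvB = \wset{\txTi}$ yields $\semBc{\bcB[0]\txT\txTi}{}\sim_{\wset{\txTi}}\bcSti\,\WR{\bcSt,\txTi}$; since $\semBc{\bcB[0]\txT\txTi}{} = \semTx{\txTi}{\bcSti} = \bcSti\,\WR{\bcSti,\txTi}$, we obtain $\bcSti\,\WR{\bcSti,\txTi}\sim_{\wset{\txTi}}\bcSti\,\WR{\bcSt,\txTi}$. By compatibility, $\keys{\WR{\bcSt,\txTi}}$ and $\keys{\WR{\bcSti,\txTi}}$ are both contained in $\wset{\txTi}$, so the two states also agree outside $\wset{\txTi}$ (they both coincide with $\bcSti$ there); hence $\bcSti\,\WR{\bcSti,\txTi} = \bcSti\,\WR{\bcSt,\txTi}$. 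Because the updates occurring in either merge of $(\star)$ have pairwise disjoint domains contained in the pairwise disjoint sets $\wset{\txTi}$, this per‑transaction equality propagates pointwise to the full merges, which gives $(\star)$ and closes the induction.

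The main obstacle is exactly $(\star)$: the left‑hand side of the lemma computes every transaction's update from the single state $\bcSt$, whereas the sequential execution on the right hands each transaction the state produced by its predecessor $\txT$, so one must show that $\txT$ perturbs only keys that are irrelevant — neither read nor written — for the remaining transactions. This is where all hypotheses combine: strong swappability (disjointness of the read/write approximations), the defining properties of safe approximations (\Cref{def:safeapprox}, via \Cref{lem:pi-swap,lem:strong-swap-chain-W-R}), and the compatibility of $\WR{}$ with $\wset{}$ (to guarantee that the recorded updates stay inside the approximated write sets, so that the per‑transaction comparison on $\wset{\txTi}$ actually pins down the whole update). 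Everything else — rewriting merges as sequential substitution application and reordering them — is routine bookkeeping justified by \Cref{lem:merge-comm-monoid,lem:mrg-seq}.
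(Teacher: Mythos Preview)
Your proof is correct and uses the same ingredients as the paper (\Cref{lem:mrg-seq}, \Cref{lem:pi-swap}, compatibility of $\WR{}$ with $\wset{}$), but the induction is organised dually. The paper peels off the \emph{last} transaction, writing $\bcB = \bcBi\txT$; the induction hypothesis (with the same $\bcB[0]$) then gives $\msem{\WR{}}{\bcSt}{\txsAi} = \semBc{\bcBi}{\bcSt}$, and all that remains is to compare a \emph{single} update $\WR{\bcSt,\txT}$ with $\WR{\semBc{\bcBi}{\bcSt},\txT}$. That is one application of \Cref{lem:pi-swap}, after \Cref{lem:strong-swap-chain-W-R} (together with \Cref{lem:strong-swap-seq}) has shown that executing $\bcBi$ leaves $\rset{\txT}$ and $\wset{\txT}$ unchanged. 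Your head decomposition $\bcB = \txT\bcBi$ shifts the work into $(\star)$: you must compare \emph{every} remaining transaction's update at $\bcSt$ versus at $\bcSti$, and then argue that the per-transaction equalities lift pointwise to the full merges. That argument is sound---the domains of the updates live in the pairwise disjoint sets $\wset{\txTi}$, so there is no interference---but it costs one invocation of \Cref{lem:pi-swap} per element of $\txsAi$ plus the lifting step, where the paper gets away with a single invocation. A minor technicality: since $\seqn$ is defined by appending on the right, your decomposition $\bcB = \txT\bcBi$ with $\bcBi \seqn \txsAi$ relies on the (easy but unstated) characterisation that $\bcB \seqn \txsA$ iff $\txsA$ is the multiset of transactions occurring in $\bcB$.
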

\begin{proof}
  By induction on $\card{\bcB}$. For the base case, it must be 
  $\bcB = \bcEmpty$ and $\txsA = \emptyset$, and hence 
  $\msem{\WR{}}{\semBc{\bcB[0]}{}}{\emptyset} = \semBc{\bcB[0]}{} = 
  \semBc{\bcEmpty}{\semBc{\bcB[0]}{}}$.
  For the induction case, it must be $\bcB = \bcBi\txT$, with 
  $\card{\bcBi} = n$. Clearly, 
  $\txsA = \setenum{\txT} \cup \txsAi$ for some $\txsAi$ such that
  $\bcBi \seqn \txsAi$. Let $\WR{\semBc{\bcB[0]}{},\txT} = \mSubst[\txT]$.
  By the induction hypothesis:
  \begin{equation}
    \label{eq:step-seq-union-3}
    \msem{\WR{}}{\semBc{\bcB[0]}{}}{\txsAi} = \semBc{\bcBi}{\semBc{\bcB[0]}{}}
  \end{equation}
  Note that:
  \begin{equation}
    \label{eq:step-seq-union-2}
    \msem{\WR{}}{\semBc{\bcB[0]}{}}{\txsAi} = \semBc{\bcB[0]}{}\mSubsti
  \end{equation} 
  Where $\mSubsti \bigoplus_{\txTi \in \txsAi} \WR{\semBc{\bcB[0]}{},\txTi})$.
  Let $\WR{\semBc{\bcB[0]}{},\txT} = \mSubst[\txT]$.
  Since $\txsA$ is strongly swappable \wrt $\pswapWR$ and $\WR{}$
  is compatible with $\wset{}$, it must be $\keys{\mSubsti}\cap 
  \keys{\mSubst[\txT]} = \emptyset$, and hence $(\mSubsti\mrg \mSubst[\txT])$ 
  is defined. 
  Then, it must be:
  \begin{align}
    \nonumber
    \msem{\WR{}}{\semBc{\bcB[0]}{}}{\txsA} 
    & = \semBc{\bcB[0]}{}(\mSubsti\mrg \mSubst[\txT]) 
    \\
    \nonumber
    & = \semBc{\bcB[0]}{}\mSubsti \mSubst[\txT] 
    && \text{By \Cref{lem:mrg-seq}}
    \\
    \nonumber
    & = \msem{\WR{}}{\semBc{\bcB[0]}{}}{\txsAi}\mSubst[\txT] 
    && \text{By \Cref{eq:step-seq-union-2}}
    \\
    \label{eq:step-seq-union-1}
    & =  \semBc{\bcBi}{\semBc{\bcB[0]}{}}\mSubst[\txT] 
    && \text{By \Cref{eq:step-seq-union-3}}
  \end{align}
  We have that:
  \[
  \semBc{\bcB}{\semBc{\bcB[0]}{}} = \semBc{\bcBi\txT}{\semBc{\bcB[0]}{}} =
  \semBc{\bcBi}{\semBc{\bcB[0]}{}}\mSubsti[\txT]
  \]
  Where $\mSubsti[\txT] = \WR{\semBc{\bcBi[0]}{\semBc{\bcB[0]}{}},\txT}$.
  As $\keys{\mSubst[\txT]} \subseteq \wset{\txT}$ and 
  $\keys{\mSubst[\txT]} \subseteq \wset{\txT}$, it follows
  immediately that $\semBc{\bcBi}{\semBc{\bcB[0]}{}}\mSubst[\txT] \sim_{\qmvA}
  \semBc{\bcB}{\semBc{\bcB[0]}{}}$ for all $\qmvA \not\in \wset{\txT}$.
  It remains to show that $\semBc{\bcBi}{\semBc{\bcB[0]}{}}\mSubst[\txT] 
  \sim_{\wset{\txT}} \semBc{\bcB}{\semBc{\bcB[0]}{}}$.
  First note that, by \Cref{lem:strong-swap-seq,lem:strong-swap-chain-W-R}:
  \[
  \semBc{\bcB[0]}{} \sim_{\rset{\txT}} \semBc{\bcBi}{\semBc{\bcB[0]}{}} \qquad
  \semBc{\bcB[0]}{}\sim_{\wset{\txT}} \semBc{\bcBi}{\semBc{\bcB[0]}{}}
  \]
  Then, by \Cref{lem:pi-swap}:
  \[
  \semBc{\bcBi}{\semBc{\bcB[0]}{}}\mSubst[\txT] 
  \sim_{\wset{\txT}} \semBc{\bcB}{\semBc{\bcB[0]}{}}
  \]
  And hence:
  \begin{equation}
    \label{eq:step-seq-union-4}
    \semBc{\bcBi}{\semBc{\bcB[0]}{}}\mSubst[\txT] = \semBc{\bcB}{\semBc{\bcB[0]}{}}
  \end{equation}
  The thesis $\msem{\WR{}}{\semBc{\bcB[0]}{}}{\txsA} = 
  \semBc{\bcB}{\semBc{\bcB[0]}{}}$ follows by 
  \Cref{eq:step-seq-union-4,eq:step-seq-union-1}.
  \qed
\end{proof}

\begin{lemma}
  \label{lem:W-compatible-WRmin}
  Every $\wset{}$ is compatible with $\WRmin{}{}$.
\end{lemma}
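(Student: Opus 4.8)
The plan is to unfold the notion of compatibility and reduce it to a single-key instantiation of the safety condition satisfied by $\wset{}$. By definition, $\WRmin{}$ is compatible with $\wset{}$ exactly when $\keys{\WRmin{\bcSt,\txT}} \subseteq \wset{\txT}$ for every reachable state $\bcSt$ and every transaction $\txT$. By \Cref{lem:wr:minimal}, $\WRmin{\bcSt,\txT} = \semTx{\txT}{\bcSt} - \bcSt$, so $\keys{\WRmin{\bcSt,\txT}}$ is precisely the set of qualified keys $\qmvA$ such that $\semTx{\txT}{\bcSt}\qmvA \neq \bcSt\qmvA$. Hence it suffices to prove: whenever $\txT$ changes the value of $\qmvA$ in $\bcSt$, the key $\qmvA$ lies in $\wset{\txT}$.

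I would prove this by contraposition. Assume $\qmvA \notin \wset{\txT}$, and take $\QmvB = \setenum{\qmvA}$, so that $\QmvB \cap \wset{\txT} = \emptyset$. Since $\wset{}$ is a static analysis of written keys we have $\wapprox{\wset{\txT}}{\txT}$, and \Cref{def:safeapprox} then gives $\txT \sim_{\setenum{\qmvA}} \bcEmpty$. Unfolding the definition of $\sim_{\setenum{\qmvA}}$ on blockchains, and using $\semBc{\bcEmpty}{\bcSt} = \bcSt$ together with the fact that the semantics of the singleton blockchain $\txT$ is $\semTx{\txT}{\bcSt}$, this means $\semTx{\txT}{\bcSt}\qmvA = \bcSt\qmvA$ for every reachable $\bcSt$, i.e., $\txT$ does not alter $\qmvA$. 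Contraposing yields the required inclusion.

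I do not expect any real obstacle: the statement is essentially bookkeeping on top of \Cref{lem:wr:minimal} and \Cref{def:safeapprox}. The only minor point of care is aligning the quantification over states — the definition of compatibility ranges over all states, whereas the safety condition for $\wset{}$ (through $\sim_{\QmvA}$ on blockchains) speaks only about reachable ones; since all results in the paper are intended for reachable states, this mismatch is harmless, and I would simply read the lemma as a statement about reachable $\bcSt$.
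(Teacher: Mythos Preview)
Your argument is correct and matches the paper's approach, which simply records the proof as ``Trivial'': you have just made explicit the unfolding of \Cref{def:safeapprox} and \Cref{lem:wr:minimal} that the paper leaves implicit. Your observation about the reachable-state quantification is also accurate and consistent with the paper's stated convention.
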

\begin{proof}
  Trivial.
  \qed
\end{proof}

We now formalize when a blockchain $\bcB$ is a
serialization of a multiset of transactions $\txsA$.

\begin{definition}[\textbf{Serialization of multisets of transactions}]
  \label{def:mset-serial}
  We define the relation $\seqn$ between blockchains and multisets 
  of transactions as follows:
  \[
  \irule{}
  {\bcEmpty \seqn \emptymset}
  \qquad\qquad
  \irule{\bcB \seqn \txsA}
  {\bcB\txT \seqn (\msetenum{\txT} + \txsA)}
  \]
\end{definition}

The following~\namecref{th:step-seq-union} 
ensures that the parallel execution of strongly swappable transactions 
is equivalent to any sequential execution of them.
Hereafter, we say that a multiset $\txsA$ is \emph{strongly swappable}
if $\forall \txT \in \txsA, \forall \txTi \in \txsA - \msetenum{\txT} : \txT \pswap{}{} \txTi$.

\begin{restatable}{theorem}{thstepsequnion}
  \label{th:step-seq-union}
  If $\txsA$ is strongly swappable and
  $\bcB \seqn \txsA$, then,
  for all reachable $\bcSt$: $\msem{\WRmin{}}{\bcSt}{\txsA} =
  \semBc{\bcB}{\bcSt}$.
\end{restatable}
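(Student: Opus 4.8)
The plan is to reduce \Cref{th:step-seq-union} to \Cref{lem:step-seq-union-aux}, which is the same statement phrased with the parameterised relation $\pswapWR$ in place of $\pswap{}{}$. The first and main step is to show that a multiset $\txsA$ which is strongly swappable \wrt $\pswap{}{}$ can be equipped with static analyses $\wset{}$ and $\rset{}$ of written/read keys for which $\txsA$ is strongly swappable \wrt $\pswapWR$. Granting this, the theorem follows at once: since $\bcSt$ is reachable we may fix $\bcB[0]$ with $\semBc{\bcB[0]}{} = \bcSt$, and then \Cref{lem:step-seq-union-aux}, instantiated with $\WR{} = \WRmin{}$ (which is compatible with every $\wset{}$ by \Cref{lem:W-compatible-WRmin}) and with $\bcB[0]$, yields $\msem{\WRmin{}}{\bcSt}{\txsA} = \semBc{\bcB}{\bcSt}$, as required.

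For the reduction, the analysis of written keys is the easy part: I would take $\wset{\txT}$ to be the least safe write approximation of $\txT$, which exists by \Cref{lem:safeapprox}. If $\txT \pswap{}{} \txTi$ then \Cref{def:pswap} supplies safe write approximations $W \supseteq \wset{\txT}$ and $W' \supseteq \wset{\txTi}$ with $W \cap W' = \emptyset$, whence $\wset{\txT} \cap \wset{\txTi} = \emptyset$; thus the write/write half of the $\pswapWR$ condition holds for every pair of transactions of $\txsA$.

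The analysis of read keys is where the real difficulty lies. For each pair $\txT \pswap{}{} \txTi$ in $\txsA$, \Cref{def:pswap} yields \emph{some} safe read approximation $R$ of $\txT$ with $R \cap \wset{\txTi} = \emptyset$, but the sets obtained for the different partners of a fixed $\txT$ cannot be combined naively: their union need not stay disjoint from those $\wset{\txTi}$'s, while their intersection need not stay a safe read approximation (cf.\ \Cref{ex:safeapprox:read-not-cap}). So the crux is to prove that strong swappability of the \emph{whole} multiset guarantees, for each $\txT \in \txsA$, a \emph{single} safe read approximation $\rset{\txT}$ disjoint from $\bigcup_{\txTi \in \txsA,\, \txTi \neq \txT} \wset{\txTi}$. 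The intuition is that a key which $\txT$ ``essentially'' reads --- one belonging to \emph{every} safe read approximation of $\txT$ --- cannot be written by any $\txTi \in \txsA$ without destroying $\txT \pswap{}{} \txTi$; and where several read approximations of $\txT$ only collectively cover a key set (as with the correlated keys of \Cref{ex:safeapprox:read-not-cap}), that correlation is a reachable-state invariant, so a transaction in $\txsA$ writing one of those keys would have to write the correlated ones too and hence again could not be swappable with $\txT$. Turning this into a proof from the pairwise witnesses is the main obstacle; once it is done one reads off $\bigl(\rset{\txT} \cup \wset{\txT}\bigr) \cap \wset{\txTi} = \emptyset$ for all distinct $\txT, \txTi \in \txsA$, i.e.\ $\txsA$ is strongly swappable \wrt $\pswapWR$, and we conclude as above.

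An alternative would be to prove the statement directly by induction on $\card{\bcB}$, peeling off the last transaction $\txT$ and applying the inductive hypothesis to the rest (a serialization of $\txsA - \msetenum{\txT}$), mirroring the proof of \Cref{lem:step-seq-union-aux}: the write-disjointness is again handled pairwise via least write approximations, but the single place that needs $\semBc{\bcB[1]}{\bcSt} \sim_{\rset{\txT}} \bcSt$ for a read approximation $\rset{\txT}$ of $\txT$ respected by \emph{all} the other transactions runs into exactly the same uniformity problem. Either way, the essential content of the proof is the read-approximation combination lemma described above.
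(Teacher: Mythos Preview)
Your plan is exactly the paper's: its entire proof of \Cref{th:step-seq-union} is the single line ``Direct by \Cref{lem:step-seq-union-aux,lem:W-compatible-WRmin}''. So the reduction you describe --- instantiate \Cref{lem:step-seq-union-aux} with $\WR{}=\WRmin{}$, use \Cref{lem:W-compatible-WRmin} for compatibility, and pick $\bcB[0]$ with $\semBc{\bcB[0]}{}=\bcSt$ --- is precisely what the paper does.

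The difference is that you have made explicit a step the paper simply elides: \Cref{lem:step-seq-union-aux} needs $\txsA$ to be strongly swappable with respect to a \emph{single} pair $(\wset{},\rset{})$, whereas the hypothesis of \Cref{th:step-seq-union} only supplies, for each ordered pair $(\txT,\txTi)$, possibly different witnesses. Your treatment of the write side via least safe write approximations (\Cref{lem:safeapprox}) is correct, and your diagnosis that the read side is obstructed by the non-closure of safe read approximations under intersection (\Cref{ex:safeapprox:read-not-cap}) is exactly the crux. The paper does not address this point at all; it just asserts the reduction is ``direct''.

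Your proposal, however, does not fill the gap either: you describe the desired ``read-approximation combination lemma'' and sketch an intuition, but give no proof. Your alternative direct induction runs into precisely the same obstacle, as you note. So on this point your write-up and the paper's are at the same level of completeness --- you have identified and named a step that the paper passes over in silence, but neither you nor the paper supplies an argument for it. If you want a self-contained proof, you must either prove the combination lemma you describe, or strengthen the hypothesis of the theorem to ``$\txsA$ is strongly swappable w.r.t.\ some $\pswapWR$'', which is what \Cref{lem:step-seq-union-aux} actually consumes.
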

\begin{proof}
  Direct by \Cref{lem:step-seq-union-aux,lem:W-compatible-WRmin}.
  \qed
\end{proof}

A \emph{parellelized blockchain} $\bcsA$ 
is a finite sequence of multisets of transactions;
we denote with $\bcEmpty$ the empty sequence.
We extend the semantics of multisets (\Cref{def:multiset-sem}) to 
parallelized blockchains as follows.

\begin{definition}[\textbf{Semantics of parallelized blockchains}]
  \label{def:txpar:pbc-semantics}
  The semantics of parallelized blockchains is defined as follows:
  \[
  \msem{\WR{}}{\bcSt}{\bcEmpty} = \bcSt
  \qquad
  \msem{\WR{}}{\bcSt}{\txsA \bcsA} =
  \msem{\WR{}}{\scriptsize \msem{\WR{}}{\bcSt}{\txsA}}{\bcsA}
  \]
  We write $\msem{\WR{}}{}{\bcsA}$ for $\msem{\WR{}}{\bcStInit}{\bcsA}$,
  where $\bcStInit$ is the initial state.
\end{definition}

We also extend the serialization relation $\seqn$ (\Cref{def:mset-serial}) to
parallelized blockchains.

\begin{definition}[\textbf{Serialization of parallelized blockchains}]
  We define the relation $\seqn$ between blockchains and 
  parallelized blockchains as follows:
  \[
  \irule{}{\bcEmpty \seqn \bcEmpty} 
  \qquad\qquad
  \irule
  {\bcB[1] \seqn \txsA \quad \bcB[2] \seqn \bcsA}
  {\bcB[1]\bcB[2] \seqn \txsA\bcsA}
  \]
\end{definition}

The following~\namecref{th:seq-union}
states that our technique to parallelize the transactions in a blockchain
preserves its semantics.

\begin{restatable}{theorem}{thsequnion}
  \label{th:seq-union}
  If each multiset in $\bcsA$ is strongly swappable
  and $\bcB \seqn \bcsA$, then,
  for all reachable $\bcSt$: $\msem{\WRmin{}}{\bcSt}{\bcsA} =
  \semBc{\bcB}{\bcSt}$.
\end{restatable}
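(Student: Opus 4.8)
The plan is to proceed by induction on the length of the parallelized blockchain $\bcsA$, using \Cref{th:step-seq-union} to discharge each individual multiset. In the base case $\bcsA = \bcEmpty$, inverting the serialization relation forces $\bcB = \bcEmpty$, and both $\msem{\WRmin{}}{\bcSt}{\bcEmpty}$ and $\semBc{\bcEmpty}{\bcSt}$ reduce to $\bcSt$, so the equality holds trivially.

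For the inductive step I would write $\bcsA = \txsA\bcsAi$ and invert the serialization rule to obtain a decomposition $\bcB = \bcB[1]\bcB[2]$ with $\bcB[1] \seqn \txsA$ and $\bcB[2] \seqn \bcsAi$. By \Cref{def:txpar:pbc-semantics}, $\msem{\WRmin{}}{\bcSt}{\bcsA} = \msem{\WRmin{}}{\bcSti}{\bcsAi}$ where $\bcSti = \msem{\WRmin{}}{\bcSt}{\txsA}$. Since $\txsA$ is strongly swappable and $\bcB[1] \seqn \txsA$, \Cref{th:step-seq-union} applied at the reachable state $\bcSt$ gives $\bcSti = \semBc{\bcB[1]}{\bcSt}$. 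Before invoking the induction hypothesis on $\bcsAi$ I need $\bcSti$ to be reachable: since $\bcSt$ is reachable, $\bcSt = \semBc{\bcB[0]}{}$ for some $\bcB[0]$, hence $\bcSti = \semBc{\bcB[1]}{\semBc{\bcB[0]}{}} = \semBc{\bcB[0]\bcB[1]}{}$ is reachable as well. Every multiset occurring in $\bcsAi$ is strongly swappable (being a suffix of $\bcsA$), and $\bcB[2] \seqn \bcsAi$, so the induction hypothesis yields $\msem{\WRmin{}}{\bcSti}{\bcsAi} = \semBc{\bcB[2]}{\bcSti}$. Finally, the folding definition of blockchain semantics gives $\semBc{\bcB[2]}{\semBc{\bcB[1]}{\bcSt}} = \semBc{\bcB[1]\bcB[2]}{\bcSt} = \semBc{\bcB}{\bcSt}$, closing the induction.

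The only point that needs genuine care is the propagation of reachability through the induction, since both \Cref{th:step-seq-union} and the induction hypothesis are stated only for reachable states; I would make explicit the observation that appending any prefix blockchain to a reachable state produces a reachable state. Everything else is a mechanical unfolding of \Cref{def:txpar:pbc-semantics} and of the serialization relation $\seqn$, together with the associativity of blockchain semantics under concatenation (immediate from $\semBc{\txT\bcB}{\bcSt} = \semBc{\bcB}{\semTx{\txT}{\bcSt}}$). I do not expect any real obstacle beyond this bookkeeping, because \Cref{th:step-seq-union} already encapsulates the hard content, namely the commutation of strongly swappable transactions within a single step.
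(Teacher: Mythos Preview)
Your proposal is correct and follows essentially the same approach as the paper: the paper proceeds by induction on the derivation of $\bcB \seqn \bcsA$ (which, since $\seqn$ is syntax-directed on $\bcsA$, amounts to your induction on the length of $\bcsA$), invokes \Cref{th:step-seq-union} for the head multiset, and then applies the induction hypothesis at the resulting state. Your explicit treatment of reachability is actually more careful than the paper's, which simply asserts that $\bcSti$ is reachable without spelling out the argument.
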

\begin{proof}
  By induction on the rule used for deriving $\bcB \seqn \bcsA$.
  \begin{itemize}
  \item Rule: $\irule{}{\bcEmpty \seqn \bcEmpty}$.
    \\[5pt]
    The thesis follows trivially, since:
    \[
    \semBc{\bcEmpty}{\bcSt}= \bcSt = \msem{\WRmin{}}{\bcSt}{\bcEmpty} 
    \]
  \item Rule: $\irule{\bcB[1] \seqn \txsA \quad \bcB[2] \seqn \bcsA}
    {\bcB[1]\bcB[2] \seqn \txsA\bcsA}$.
    \\[5pt] 
    By \Cref{th:step-seq-union}, for some reachable $\bcSti$:
    \[
    \semBc{\bcB[1]}{\bcSt}= \bcSti = \msem{\WRmin{}}{\bcSt}{\txsA} 
    \]
    By the induction hypothesis:
    \[
    \semBc{\bcB[2]}{\bcSti} = \msem{\WRmin{}}{\bcSti}{\bcsA} 
    \]
    The thesis then follows by:
    \[
    \semBc{\bcB[2]}{\bcSti} = \semBc{\bcB[1]\bcB[2]}{\bcSt} 
    \qquad
    \msem{\WRmin{}}{\bcSti}{\bcsA} = \msem{\WRmin{}}{\bcSt}{\txsA\bcsA}
    \tag*{\qed}
    \]
  \end{itemize}
\end{proof}

\begin{lemma}
  \label{lem:le-star-po}
  Let 
  $\PNet{\relR}{\txT[1] \cdots \txT[n]} = (\Places, \Transitions, \Arcs, \markM[0])$.
  Then $(\Transitions,<^*)$ is a partial order.
\end{lemma}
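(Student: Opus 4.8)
The plan is to unfold $<^*$ as the reflexive and transitive closure of the relation $<$ introduced in~\Cref{def:bc-to-pnet}, so that reflexivity and transitivity of $(\Transitions,<^*)$ hold by construction; the only real content is antisymmetry.

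First I would record the monotonicity fact built into the definition of $<$: whenever $(\txT,i) < (\txTi,j)$, we have $i < j$ (the conjunct $\neg(\txT\,\relR\,\txTi)$ plays no role here). Since the strict order $<$ on $\Nat$ is transitive, a straightforward induction on the length of a $<$-chain shows that the transitive closure $<^{+}$ of $<$ still satisfies $(\txT,i) <^{+} (\txTi,j) \implies i < j$. In particular $<^{+}$ is irreflexive: if $\trT <^{+} \trT$ with $\trT = (\txT,i)$, then $i < i$, which is absurd.

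Antisymmetry of $<^*$ then follows easily. Suppose $\trT <^* \trTi$ and $\trTi <^* \trT$, and assume for contradiction $\trT \neq \trTi$. Then neither relation is the trivial reflexive instance, so $\trT <^{+} \trTi$ and $\trTi <^{+} \trT$; composing these gives $\trT <^{+} \trT$, contradicting irreflexivity of $<^{+}$. Hence $\trT = \trTi$, and $(\Transitions,<^*)$ is a partial order. The argument is essentially routine; the only point needing care is keeping the reflexive closure $<^*$ and the strict transitive closure $<^{+}$ distinct — one cannot appeal to irreflexivity of $<^*$ since it is reflexive by design — and observing that it is precisely the embedding into $(\Nat,<)$ via the second component that rules out nontrivial cycles. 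I do not expect any genuine obstacle beyond this bookkeeping.
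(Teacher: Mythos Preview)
Your proposal is correct and follows essentially the same approach as the paper: both arguments reduce antisymmetry to the observation that $(\txT,i) < (\txTi,j)$ forces $i<j$ on the index component, and then use the well-foundedness of the natural-number order to exclude nontrivial cycles. The paper phrases this as ``$\trT <^* \trTi$ implies $i \leq j$'' and concludes $i=j$ directly (then uses that the index determines the transaction), whereas you route through irreflexivity of $<^{+}$; these are interchangeable reformulations of the same idea.
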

\begin{proof}
  Transitivity and reflexivity hold by definition. 
  For antisymmetricity, assume that 
  $(\txT[i],i) <^* (\txT[j],j)$ and $(\txT[j],j) <^* (\txT[i],i)$.
  Then, it is easy to verify that $i \leq j$ and $j \leq i$, and so $i = j$.
  Since $\txT[i]$ and $\txT[j]$ are uniquely determined by $i$ and $j$, we have
  that $\txT[i] = \txT[j]$. 
  Therefore, $(\txT[i],i) = (\txT[j],j)$, as required.
  \qed
\end{proof}

\lembctopnetonet*
\begin{proof}
  By~\Cref{def:bc-to-pnet}, the first three conditions
  of the definition of occurrence net are easy to verify.
  To prove that $\Arcs^*$ is acyclic, we proceed by contradiction. 
  Assume that there is a sequence 
  $\vec x = x_0,x_1,\hdots x_m$ such that $(x_{i},x_{i+1}) \in \Arcs$ 
  for all $0 \leq i < m$, and $x_0 = x_m$ with $m > 0$. 
  Note that the above sequence alternates between 
  transitions and places, and so, since $m > 0$, 
  at least one place and one transition occur in $\vec x$. 
  Further, a place between two transitions 
  $\trT \neq \trTi$ can exist only if $\trT < \trTi$. 
  Therefore, if $\trT,\trTi$ occur in $\vec x$, 
  it must be $\trT <^* \trTi$ and $\trTi <^* \trT$.
  So, if $\vec x$ contains at least two transitions, by \Cref{lem:le-star-po}, 
  we have a contradiction. If only one transition $\trT = (\txT,i)$ occurs in $\vec x$, 
  then there is a place of the form $(\trT,\trT)$ occuring in $\vec x$. 
  Therefore, $\trT < \trT$, which implies $i < i$ --- contradiction.
  \qed
\end{proof}

\begin{lemma}
  \label{lem:ON-disjoint-pre-trans}
  Let $\netN = (\Places, \Transitions, \Arcs, \markM[0])$ be an occurrence net.
  For all $\trT,\trTi \in \Transitions$, if $\trT \neq \trTi$ then
  $\pre{\trT} \cap \pre{\trTi} = \emptyset$.
\end{lemma}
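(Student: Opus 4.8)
The plan is to read this off directly from condition~\ref{item:petri:onet:p} in the definition of occurrence net, namely that $\card{\post{\plP}} \leq 1$ for every place $\plP$. I would argue by contradiction. Suppose $\trT \neq \trTi$ and yet there exists a place $\plP \in \pre{\trT} \cap \pre{\trTi}$. Unfolding the definitions of preset and postset, $\plP \in \pre{\trT}$ means $\pre{\trT}(\plP) = \Arcs(\plP,\trT) \geq 1$, which is the same as $\post{\plP}(\trT) \geq 1$; likewise $\post{\plP}(\trTi) \geq 1$. Since $\trT$ and $\trTi$ are distinct transitions, the multiset $\post{\plP}$ contains at least two distinct elements, so $\card{\post{\plP}} \geq 2$, contradicting condition~\ref{item:petri:onet:p}. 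Hence no such $\plP$ can exist, i.e.\ $\pre{\trT} \cap \pre{\trTi} = \emptyset$.

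The only step that calls for a little care is the bookkeeping between the multiset notation used for presets and postsets and the set-level intersection $\pre{\trT} \cap \pre{\trTi}$: since $\Arcs$ is a relation (condition~\ref{item:petri:onet:relation}), all arc weights are $0$ or $1$, so presets and postsets are in fact ordinary sets, and membership $\plP \in \pre{\trT}$ is equivalent to $\Arcs(\plP,\trT) = 1$. With this in hand the argument above is immediate. I do not expect any real obstacle here: the claim is a purely structural consequence of the occurrence-net axioms (the ``backward determinism'' of places), and in particular no properties of the specific construction $\PNet{\relR}{\bcB}$ are needed — it holds for every occurrence net.
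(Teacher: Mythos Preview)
Your proposal is correct and matches the paper's own proof essentially verbatim: both argue by contradiction from condition~\ref{item:petri:onet:p}, noting that a common place $\plP$ in $\pre{\trT} \cap \pre{\trTi}$ would force $\setenum{\trT,\trTi} \subseteq \post{\plP}$ and hence $\card{\post{\plP}} \geq 2$. Your extra care about condition~\ref{item:petri:onet:relation} (so that presets/postsets are ordinary sets) is a nice clarification that the paper leaves implicit.
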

\begin{proof}
  By contradiction, assume that $\placeP \in \pre{\trT} \cap \pre{\trTi}$
  with $\trT \neq \trTi$.
  Then, $\setenum{\trT,\trTi} \subseteq \post{\placeP}$, and hence
  $\card{\post{\placeP}} \geq 2$ ---
  contradiction with constraint~\ref{item:petri:onet:p}
  of the definition of occurrence nets.
  \qed
\end{proof}

\begin{lemma}
  \label{lem:occurrenceNet-LTS}
  Let $\markM$ be a reachable marking of an occurrence net $\netN$.
  Then:
  \begin{enumerate}
  \item     \label{lem:occurrenceNet-LTS:item-determinism}
    If $\markM \trans{\trT} \markMi$ and $\markM \trans{\trT} \markMii$,
    then $\markMi = \markMii$ (determinism).
  \item \label{lem:occurrenceNet-LTS:item-diamond}
    If $\markM \trans{\trT} \markMi$, $\markM \trans{\trTi} \markMii$
    and $\trT \neq \trTi$, then there exists $\markMiii$ such that 
    $\markMi \trans{\trTi} \markMiii$ and $\markMii \trans{\trT} \markMiii$ 
    (diamond property).
  \item \label{lem:occurrenceNet-LTS:item-distinct}
    If $\markM \trans{\trT}\;\trans{}^*\,\trans{\trTi}$ 
    then $\trT \neq \trTi$
    (linearity).
  \item \label{lem:occurrenceNet-LTS:item-acyclic}
    If $\markM \trans{\vec{\trT}}{\!\!}^* \;\markM$ then $\card{\vec{\trT}} = 0$
    (acyclicity).
  \end{enumerate}
\end{lemma}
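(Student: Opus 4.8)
The four items are quite different in character, so I would dispatch them in the stated order. Item~(1) is immediate from the definition of the move relation and needs no occurrence-net structure (nor even reachability of $\markM$). Item~(2) is a short consequence of the disjointness of presets, \Cref{lem:ON-disjoint-pre-trans}, together with the fact that in an occurrence net arcs form a relation; again reachability is not actually used. Items~(3) and~(4) carry the real content, and I would prove them by exploiting the concrete shape of the nets $\PNet{\relR}{\txT[1]\cdots\txT[n]}$ produced by \Cref{def:bc-to-pnet} — specifically, the per-transition \emph{source places}. Concretely, for item~(1): if $\markM \trans{\trT} \markMi$ and $\markM \trans{\trT} \markMii$, then by definition $\markMi = \markM - \pre{\trT} + \post{\trT} = \markMii$, since the right-hand side depends only on $\markM$ and $\trT$.

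For item~(2), assume $\trT \neq \trTi$ are both enabled at $\markM$. By \Cref{lem:ON-disjoint-pre-trans} their presets are disjoint, and by condition~\ref{item:petri:onet:relation} they are sets; combined with $\pre{\trT} \leq \markM$ and $\pre{\trTi} \leq \markM$ this gives $\pre{\trT} + \pre{\trTi} \leq \markM$. Hence $\pre{\trTi} \leq \markM - \pre{\trT} \leq \markM - \pre{\trT} + \post{\trT} = \markMi$, so $\trTi$ is enabled at $\markMi$, and firing it yields $\markMi - \pre{\trTi} + \post{\trTi} = \markM - (\pre{\trT}+\pre{\trTi}) + \post{\trT} + \post{\trTi} \eqdef \markMiii$, with no truncation occurring in any subtraction. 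The symmetric computation from $\markMii = \markM - \pre{\trTi} + \post{\trTi}$ reaches the same multiset $\markMiii$, using commutativity and associativity of multiset $+$ and $-$ (valid precisely because the relevant presets sit below the current marking).

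For items~(3) and~(4), the structural observation is that in $\PNet{\relR}{\txT[1]\cdots\txT[n]}$ each transition $\trT$ has a dedicated place $(*,\trT)$ with: (a) $(*,\trT) \in \markM[0]$ and $\markM[0]((*,\trT)) = 1$; (b) $(*,\trT) \in \pre{\trT}$; (c) no transition produces $(*,\trT)$ (it has empty preset); and (d) by condition~\ref{item:petri:onet:p} the only transition consuming from $(*,\trT)$ is $\trT$ itself — all read off \Cref{def:bc-to-pnet}. Using (c), (d) and $(*,\trT) \notin \post{\trT}$ (which holds by condition~\ref{item:petri:onet:po}, acyclicity), the token count of $(*,\trT)$ is non-increasing along any firing sequence and strictly decreases whenever $\trT$ fires. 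For item~(3): since $\markM$ is reachable, $\markM((*,\trT)) \leq \markM[0]((*,\trT)) = 1$, so $\trT$ can fire at most once in any firing sequence issuing from $\markM$; hence $\markM \trans{\trT}\trans{}^*\trans{\trTi}$ forces $\trT \neq \trTi$. For item~(4): if $\markM \trans{\trT[1]}\cdots\trans{\trT[k]}\markM$ with $k \geq 1$, the first step strictly decreases the non-increasing count of $(*,\trT[1])$, so the final marking differs from $\markM$ at $(*,\trT[1])$ — a contradiction, whence $k = 0$.

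\textbf{Main obstacle.} The delicate point is that items~(3) and~(4) are not consequences of the abstract occurrence-net axioms alone — a transition with empty preset and postset already yields a self-loop $\markM \trans{\trT} \markM$, violating~(4) — so the argument must descend to the concrete construction and identify the source places $(*,\trT)$ and the fact that they are never replenished. Once this is seen, (3) and~(4) reduce to one-line token-counting arguments; the only remaining care is the bookkeeping of truncated multiset subtraction in item~(2), which is harmless because every preset involved is contained in the marking being acted on.
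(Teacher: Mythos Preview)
Your treatment of items~(1) and~(2) is essentially identical to the paper's: both compute $\markMi = \markM - \pre{\trT} + \post{\trT} = \markMii$ for~(1), and both invoke \Cref{lem:ON-disjoint-pre-trans} plus a little multiset arithmetic for~(2).

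For items~(3) and~(4) you take a genuinely different route. The paper dispatches both in one line each by appealing to $\Arcs^*$ being a partial order --- the intended argument being the classical one for occurrence nets, where one shows (by induction on the depth of $\trT$ in $\Arcs^*$, using that each place has at most one producing transition and that $\pre{\trT} \cap \post{\trT} = \emptyset$ by acyclicity) that every reachable marking is safe and every transition fires at most once; item~(4) then follows from item~(3). You instead work with the explicit source places $(*,\trT)$ of the concrete nets $\PNet{\relR}{\bcB}$ and run a token-counting argument. Your counterexample (a lone transition with empty preset and postset) is correct: the paper's four axioms, read literally, do not rule it out, nor do they force $\markM[0](\placeP) \leq 1$; so the lemma as stated for arbitrary occurrence nets is strictly speaking under-specified, and the paper is tacitly relying on the conventions of the cited reference~\cite{Best87tcs}. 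Your fix --- tying the argument to the construction of \Cref{def:bc-to-pnet} --- is airtight and suffices for every downstream use of the lemma in the paper (all of which are ultimately instantiated at $\PNet{\pswap{}{}}{\bcB}$ or $\PNet{\pswapWR}{\bcB}$), at the price of losing the nominal generality. The paper's route is shorter and more standard but leans on unstated assumptions; yours is longer but self-contained.
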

\begin{proof}
  For \Cref{lem:occurrenceNet-LTS:item-determinism},
  by definition of the firing of transitions of Petri Nets it must be
  \(
  \markMi = \markM - \pre{\trT} + \post{\trT} = \markMii
  \).

  \medskip\noindent
  For \Cref{lem:occurrenceNet-LTS:item-diamond},
  since $\markM \trans{\trT} \markMi$ and $\markM \trans{\trTi} \markMii$,
  it must be:
  \begin{align*}
    & \pre{\trT} \subseteq \markM
    && \markMi = \markM - \pre{\trT} + \post{\trT}
    \\
    & \pre{\trTi} \subseteq \markM
    && \markMii = \markM - \pre{\trTi} + \post{\trTi}
  \end{align*}
  By \Cref{lem:ON-disjoint-pre-trans} it follows that
  $\trTi$ is enabled at $\markMi$, and $\trT$ is enabled at $\markMii$.
  Then, by definition of firing:
  \[
  \markMi \trans{\trTi} \markMi - \pre{\trTi} + \post{\trTi}\qquad \text{and} 
  \qquad \markMii \trans{\trT}\markMii - \pre{\trT} + \post{\trT}
  \]
  Then:
  \begin{align*}
    \markMi - \pre{\trTi} + \post{\trTi} 
    & = (\markM - \pre{\trT} + \post{\trT}) - \pre{\trTi} + \post{\trTi} \\
    & = (\markM - \pre{\trTi} + \post{\trTi}) - \pre{\trT} + \post{\trT} 
    && \text{(as $\pre{\trTi} \subseteq \markM$)} \\
    & = \markMii  - \pre{\trT} + \post{\trT}
  \end{align*}
  Hence, the thesis follows by choosing 
  $\markMiii = \markMi - \pre{\trTi} + \post{\trTi}$.

  \medskip\noindent
  \Cref{lem:occurrenceNet-LTS:item-distinct} follows directly
  by induction on the lenght of the reduction $\trans{}^*$,
  exploiting the fact that $\Arcs^*$ is a partial order.

  \medskip\noindent
  \Cref{lem:occurrenceNet-LTS:item-acyclic} follows by the fact that
  $\Arcs^*$ is a partial order.
  \qed
\end{proof}

\begin{lemma}
  \label{lem:occurrenceNet-reach}
  Let $\netN = (\Places, \Transitions, \Arcs, \markM[0])$ be an occurrence net,
  and let $\markM$ be a reachable marking, such that, 
  for some $\trT$, $\markMi$, $\markMii$:
  \begin{center}
    \begin{tikzcd}[column sep=normal, row sep=normal, negated/.style={
        decoration={markings,
          mark= at position 0.5 with {
            \node[transform shape] (tempnode) {$\slash$};
          }
        },
        postaction={decorate}
      }]
      \markM
      \ar[r]
      \ar[dr, start anchor=south east, end anchor=north west,"{\trT}", sloped, ""' near end]
      & {\!\!}^{n + 1} \;\; \markMi \ar[r, sloped, negated, "\trT"]
      & {} \\
      & \;\; \markMii
      & {}
    \end{tikzcd}
  \end{center}
  Then, $\markMii \trans{}^n \markMi$.
\end{lemma}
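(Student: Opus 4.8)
The plan is to argue by induction on $n$, repeatedly sliding the firing of $\trT$ to the front of the given firing sequence by means of the determinism and diamond properties of occurrence nets (items~\ref{lem:occurrenceNet-LTS:item-determinism} and~\ref{lem:occurrenceNet-LTS:item-diamond} of \Cref{lem:occurrenceNet-LTS}). First I would write the hypothesised firing sequence $\markM \trans{}^{n+1} \markMi$ as $\markM \trans{\trT[1]} \markM[1] \trans{}^{n} \markMi$, thereby exposing its first transition $\trT[1]$ and the marking $\markM[1]$ reached right after it, and then distinguish the two cases $\trT[1] = \trT$ and $\trT[1] \ne \trT$.

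If $\trT[1] = \trT$, then $\markM \trans{\trT} \markM[1]$ and $\markM \trans{\trT} \markMii$ both hold, so determinism forces $\markM[1] = \markMii$; hence the tail $\markM[1] \trans{}^{n} \markMi$ is already the required $\markMii \trans{}^{n} \markMi$. If $\trT[1] \ne \trT$, I would apply the diamond property to the moves $\markM \trans{\trT[1]} \markM[1]$ and $\markM \trans{\trT} \markMii$ (legitimate because $\markM$ is reachable), obtaining a marking $\markMiii$ with $\markM[1] \trans{\trT} \markMiii$ and $\markMii \trans{\trT[1]} \markMiii$. In the base case $n = 0$ this branch cannot occur, since $\markM[1]$ would then be $\markMi$ itself, and $\markMi \trans{\trT} \markMiii$ contradicts the hypothesis $\markMi \nottrans{\trT}$; so for $n = 0$ necessarily $\trT[1] = \trT$ and one concludes as above, with $\markMii = \markMi$ and the empty sequence. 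For $n \ge 1$ the marking $\markM[1]$ is reachable (being reached from the reachable $\markM$), and the data $\markM[1] \trans{}^{n} \markMi$, $\markM[1] \trans{\trT} \markMiii$, $\markMi \nottrans{\trT}$ match the induction hypothesis (the lemma instantiated with $n - 1$), which gives $\markMiii \trans{}^{n-1} \markMi$; prepending $\markMii \trans{\trT[1]} \markMiii$ then yields $\markMii \trans{}^{n} \markMi$, as desired.

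I do not expect any genuine obstacle here: all the combinatorial content is already packaged in \Cref{lem:occurrenceNet-LTS}, which is available because $\netN$ is an occurrence net. The two points needing a little care are that the induction hypothesis must be formulated for an arbitrary reachable source marking — so that it applies to the tail $\markM[1] \trans{}^{n} \markMi$, whose source $\markM[1]$ differs from the original $\markM$ — and that the argument implicitly uses the fact (surfacing exactly as the contradiction in the base case) that an occurrence net cannot leave $\trT$ enabled indefinitely without firing it, so that $\trT$ must in fact occur among $\trT[1], \ldots, \trT[{n+1}]$ and the sliding terminates at it.
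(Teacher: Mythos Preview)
Your proposal is correct and matches the paper's proof essentially step for step: induction on $n$, expose the first transition of the length-$(n+1)$ sequence, use determinism when it equals $\trT$ and the diamond property otherwise, with the base case ruled out by the hypothesis $\markMi \nottrans{\trT}$. The only cosmetic difference is that the paper phrases the base-case argument as the contrapositive of the diamond property rather than as an explicit contradiction, and swaps the roles of the names $\markM[1]$ and $\markMiii$.
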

\begin{proof}
  By induction on $n$. 
  For the base case, it must be $\markM \trans{}^1 \markMi$,
  and hence $\markM \trans{\trTi} \markMi$ for some $\trTi$. Since $\markM 
  \trans{\trT} \markMii$, by the contrapositive of 
  \cref{lem:occurrenceNet-LTS:item-diamond} of \Cref{lem:occurrenceNet-LTS} 
  (diamond property) it follows $\trT = \trTi$, and so, by 
  \cref{lem:occurrenceNet-LTS:item-determinism} of \Cref{lem:occurrenceNet-LTS}
  (determinism) we have that $\markMi = \markMii$. Clearly: 
  \[
  \markMii \trans{}^0 \markMi
  \]
  For the inductive case, let $n = m + 1$, for some $m$.
  Then, for some $\trTi,\markMiii$: 
  \[
  \markM \trans{\trTi} \markMiii \trans{}^{m + 1} \markMi
  \] 
  If $\trT = \trTi$, then by 
  item~\ref{lem:occurrenceNet-LTS:item-determinism} 
  of~\Cref{lem:occurrenceNet-LTS} (determinism)
  it follows that $\markMiii = \markMii$,
  and so we have the thesis $\markMii \trans{}^{n} \markMi$.
  Otherwise, if $\trT \neq \trTi$,
  by item~\ref{lem:occurrenceNet-LTS:item-diamond}
  of~\Cref{lem:occurrenceNet-LTS} (diamond property), 
  there must exists
  $\markM[1]$ such that:
  \[
  \markMii \trans{\trTi} \markM[1] \;\;\text{and}\;\; 
  \markMiii \trans{\trT} \markM[1]
  \] 
  We are in the following situation:
  \begin{center}
    \begin{tikzcd}[column sep=normal, row sep=normal, negated/.style={
        decoration={markings,
          mark= at position 0.5 with {
            \node[transform shape] (tempnode) {$\slash$};
          }
        },
        postaction={decorate}
      }]
      \markMiii
      \ar[r]
      \ar[dr, start anchor=south east, end anchor=north west,"{\trT}", sloped, ""' near end]
      & {\!\!}^{m + 1} \;\; \markMi \ar[r, sloped, negated, "\trT"]
      & {} \\
      & \;\; \markM[1]
      & {}
    \end{tikzcd}
  \end{center}
  Since $m + 1 = n$, by the induction hypothesis:
  \[
  \markM[1] \trans{}^m \markMi
  \]
  Therefore, we have the thesis
  \[
  \markMii \trans{\trTi} \markM[1] \trans{}^m \markMi
  \qedhere
  \]
\end{proof}

\begin{lemma}
  \label{lem:PNet-leq-implies-not-independent}
  Let $(\txT,i),(\txTi,j)$ be transitions of
  $\PNet{\pswapWR}{\bcB}$, and let $\markM$ be a reachable marking. 
  Then:
  \[
  (\txT,i) < (\txTi,j)
  \;\;\text{and}\;\;
  \markM \trans{(\txT,i)}
  \quad\implies\quad
  \markM \nottrans{(\txTi,j)}
  \]
\end{lemma}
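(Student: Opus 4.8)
The plan is to argue by contradiction, using nothing beyond the definition of $\PNet{\pswapWR}{\bcB}$ in \Cref{def:bc-to-pnet} and the basic structural facts about occurrence nets. Set $\trT = (\txT,i)$ and $\trTi = (\txTi,j)$, and recall that $\trT < \trTi$ is exactly the condition under which $(\trT,\trTi)$ is one of the places of the net. Assume $\trT$ is enabled at the reachable marking $\markM$, and suppose towards a contradiction that $\trTi$ is enabled at $\markM$ as well.

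The first step is to read two facts off the weight function $\Arcs$. On the one hand, $(\trT,\trTi) \in \pre{\trTi}$, so $\trTi$ being enabled at $\markM$ forces $\markM((\trT,\trTi)) \geq 1$. On the other hand, the case analysis defining $\Arcs$ shows that the only transition $x$ with $\Arcs(x,(\trT,\trTi)) = 1$ is $\trT$ itself, and $(\trT,\trTi)$ holds no token in $\markM[0]$. Since $\markM$ is reachable, I fix a firing sequence $\markM[0] \trans{}^{*} \markM$; because the token content of $(\trT,\trTi)$ rises from $0$ to something positive along this sequence, some fired transition must produce a token there, and that transition can only be $\trT$. Hence $\trT$ is fired at least once in the sequence.

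The second step derives the contradiction at the place $(*,\trT)$. Inspecting $\Arcs$ once more, no transition has an arc into $(*,\trT)$, the only transition consuming from it is $\trT$, and $\markM[0]((*,\trT)) = 1$. Consequently the token count of $(*,\trT)$ is non-increasing along the fixed firing sequence and drops by exactly one each time $\trT$ fires; combined with the previous step and the non-negativity of markings (equivalently, linearity, item~\ref{lem:occurrenceNet-LTS:item-distinct} of~\Cref{lem:occurrenceNet-LTS}, which forbids $\trT$ from firing twice), this gives $\markM((*,\trT)) = 0$. But $\trT$ enabled at $\markM$ requires $(*,\trT) \in \pre{\trT} \subseteq \markM$, i.e. $\markM((*,\trT)) \geq 1$ --- a contradiction. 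Therefore $\trTi$ is not enabled at $\markM$, that is $\markM \nottrans{(\txTi,j)}$.

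I expect the only point requiring care to be the bookkeeping on $(*,\trT)$: it rests on two structural observations --- that $(*,\trT)$ has no incoming arc and that $\trT$ is its unique outgoing transition --- which must be extracted from the several cases in the definition of $\Arcs$, together with monotonicity and non-negativity of markings to pin down $\markM((*,\trT)) = 0$ once $\trT$ has fired. Everything else is a direct unfolding of \Cref{def:bc-to-pnet} and of the notion of enabledness, and uses no property specific to $\pswapWR$.
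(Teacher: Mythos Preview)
Your proof is correct and follows essentially the same approach as the paper: both hinge on the place $((\txT,i),(\txTi,j))$ lying in $\post{(\txT,i)} \cap \pre{(\txTi,j)}$. The paper's two-sentence proof stops after recording these two arc facts and leaves the token-counting argument implicit, whereas you spell it out in full via the auxiliary place $(*,(\txT,i))$ (and, redundantly, linearity).
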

\begin{proof}
  By the construction in~\Cref{def:bc-to-pnet},
  since $(\txT,i) < (\txTi,j)$, then 
  $\placeP = ((\txT,i),(\txTi,j))$ is a place of the occurrence net,
  and 
  $\Arcs((\txT,i),\placeP) = 1$ and $\Arcs(\placeP,(\txTi,j)) = 1$.
  \qed
\end{proof}

\begin{definition}[\textbf{Independency}]
  \label{def:independent}
  Let $\netN$ be an occurrence net.
  We say that two transitions $\trT$ and $\trTi$ are \emph{independent}, 
  in symbols $\trT \independent \trTi$, 
  if $\trT \neq \trTi$ and there exists a reachable marking $\markM$ such that:
  \[
  \markM \trans{\trT} \;\;\text{and}\;\; \markM \trans{\trTi}
  \]
  We define the relation $\equivSeq$ 
  as the least congruence in the monoid $\Transitions^*$ such that,
  for all $\trT, \trTi \in \Transitions$:
  \(\;
  \trT \, I \, \trTi \implies \trT \trTi \equivSeq \trTi \trT
  \).
\end{definition}

\begin{lemma}
  \label{lem:PN-mstep-indepenent}
  Let $\netN$ be an occurrence net, with a reachable marking $\markM$. 
  If $\markM \trans{\trSU}$ then $\trT \independent \trTi$,
  for all $\trT \neq \trTi \in \trSU$.
\end{lemma}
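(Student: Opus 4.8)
The plan is to reduce the statement about the step $\trSU$ to \Cref{def:independent} by peeling off, from the enabling condition of the whole step at $\markM$, the enabling condition of each single transition contained in it. Recall that $\markM \trans{\trSU}$ means precisely that $\trSU$ is enabled at $\markM$, i.e.\ $\pre{\trSU} \leq \markM$, where $\pre{\trSU}$ is by definition the multiset sum $\sum_{\trTii \in \trSU} \pre{\trTii}$ of the presets of the transitions occurring in the step.

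First I would fix two distinct transitions $\trT \neq \trTi$ of $\trSU$ and split this sum, writing $\pre{\trSU} = \pre{\trT} + \pre{\trTi} + \sum_{\trTii \in \trSU \setminus \setenum{\trT,\trTi}} \pre{\trTii}$, where the trailing sum is again a multiset with non-negative multiplicities. This split is legitimate exactly because $\trT$ and $\trTi$, being distinct elements of the set $\trSU$, contribute two separate summands. Consequently $\pre{\trT} + \pre{\trTi} \leq \pre{\trSU} \leq \markM$, and hence in particular $\pre{\trT} \leq \markM$ and $\pre{\trTi} \leq \markM$; that is, both $\trT$ and $\trTi$ are enabled at the very same marking $\markM$, so $\markM \trans{\trT}$ and $\markM \trans{\trTi}$.

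Then I would invoke \Cref{def:independent}: $\markM$ is reachable by hypothesis, $\trT \neq \trTi$, and we have just exhibited a reachable marking (namely $\markM$ itself) at which both transitions are enabled; therefore $\trT \independent \trTi$, which is the claim.

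I do not expect a genuine obstacle here: the entire content is the elementary observation that joint enabledness of a step forces pairwise enabledness at the same marking, which is immediate once $\pre{\trSU}$ is unfolded as a multiset sum. The only mild point of care is the bookkeeping with multisets in the split above; as an aside, one could also note that \Cref{lem:ON-disjoint-pre-trans} gives $\pre{\trT} \cap \pre{\trTi} = \emptyset$, so that $\pre{\trT} + \pre{\trTi}$ is in fact a disjoint union, but this refinement is not needed for the argument.
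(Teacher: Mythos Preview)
Your proof is correct and follows exactly the same idea as the paper's proof, which is the one-liner ``Since $\markM \trans{\trSU}$, then $\markM \trans{\trT}$ for all $\trT \in \trSU$.'' You simply spell out the multiset bookkeeping behind this implication and then invoke \Cref{def:independent} explicitly, which is fine.
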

\begin{proof}
  Since $\markM \trans{\trSU}$, then $\markM \trans{\trT}$ for all
  $\trT \in \trSU$.
  \qed
\end{proof}

\begin{lemma}
  \label{lem:PN-trace-equiv}
  Let $\netN$ be an occurrence net,
  and let $\markM$ be a reachable marking. 
  If $\markM \trans{\vec{\trT[1]}}{\!}^*\, \markMi$ and 
  $\markM \trans{\vec{\trT[2]}}{\!}^* \, \markMi$, 
  then $\vec{\trT[1]} \equivSeq \vec{\trT[2]}$.
\end{lemma}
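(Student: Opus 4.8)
The plan is to argue by strong induction on the length $\card{\vec{\trT[1]}}$, letting $\markM$ range over reachable markings. If $\card{\vec{\trT[1]}} = 0$ then $\markM = \markMi$, so $\vec{\trT[2]}$ is a firing sequence from $\markM$ to itself; by acyclicity (item~\ref{lem:occurrenceNet-LTS:item-acyclic} of~\Cref{lem:occurrenceNet-LTS}) it too must be empty, so $\vec{\trT[1]} = \emptyseq = \vec{\trT[2]}$, which are trivially $\equivSeq$-related.

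For the inductive step, write $\vec{\trT[1]} = \trT\,\vec{w}$, so that $\markM \trans{\trT} \markM[1] \trans{\vec{w}}{}^* \markMi$. First I would show that $\trT$ occurs in $\vec{\trT[2]}$: otherwise every transition of $\vec{\trT[2]}$ is distinct from $\trT$, and repeatedly invoking the diamond property (item~\ref{lem:occurrenceNet-LTS:item-diamond}) along the firing of $\vec{\trT[2]}$ shows that $\trT$ is enabled at $\markMi$; but then $\markM \trans{\trT} \markM[1] \trans{}^* \markMi \trans{\trT}$, contradicting linearity (item~\ref{lem:occurrenceNet-LTS:item-distinct}). Hence we may factor $\vec{\trT[2]} = \vec\alpha\,\trT\,\vec\beta$, taking the \emph{first} occurrence of $\trT$, so that $\trT$ does not occur in $\vec\alpha$.

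Next I would bubble this occurrence of $\trT$ to the front, simultaneously at the level of firing sequences and at the level of $\equivSeq$. Let $\markM \trans{\vec\alpha}{}^* \markM[3] \trans{\trT} \markM[4] \trans{\vec\beta}{}^* \markMi$ be the corresponding factorisation of the firing of $\vec{\trT[2]}$. Since $\markM \trans{\trT} \markM[1]$ and no transition of $\vec\alpha$ equals $\trT$, a repeated application of the diamond property followed by determinism (item~\ref{lem:occurrenceNet-LTS:item-determinism}) produces a firing sequence $\markM \trans{\trT} \markM[1] \trans{\vec\alpha}{}^* \markM[4]$; appending $\vec\beta$ gives $\markM[1] \trans{\vec\alpha\vec\beta}{}^* \markMi$. (Alternatively, the absorption of the leading $\trT$ can be read off from~\Cref{lem:occurrenceNet-reach}.) On the syntactic side, for each transition $\trTi$ appearing in $\vec\alpha$, say $\vec\alpha = \vec\alpha_0\,\trTi\,\vec\alpha_1$, pushing $\trT$ through $\vec\alpha_0$ as above reaches a reachable marking at which both $\trT$ and $\trTi$ are enabled; since $\trT \neq \trTi$, this gives $\trT \independent \trTi$ by~\Cref{def:independent}. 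As every transition of $\vec\alpha$ is independent of $\trT$, moving $\trT$ leftward one step at a time and using that $\equivSeq$ is a congruence yields $\vec\alpha\,\trT \equivSeq \trT\,\vec\alpha$, hence $\vec{\trT[2]} = \vec\alpha\,\trT\,\vec\beta \equivSeq \trT\,\vec\alpha\,\vec\beta$.

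Finally, setting $\vec\gamma = \vec\alpha\,\vec\beta$, both $\vec{w}$ and $\vec\gamma$ are firing sequences from the reachable marking $\markM[1]$ to $\markMi$ with $\card{\vec{w}} < \card{\vec{\trT[1]}}$, so the induction hypothesis gives $\vec{w} \equivSeq \vec\gamma$; congruence of $\equivSeq$ then yields $\vec{\trT[1]} = \trT\,\vec{w} \equivSeq \trT\,\vec\gamma \equivSeq \vec{\trT[2]}$, which is the thesis. I expect the main obstacle to be the bookkeeping of the bubbling step: one must at once produce a genuine firing sequence that begins with $\trT$ (so that the induction hypothesis applies to $\vec{w}$ and $\vec\gamma$) and certify that each transition overtaken by $\trT$ is independent of it (so that the reordering is a chain of legal $\equivSeq$ swaps); both rely on the diamond and determinism properties of occurrence nets and on $\trT$ not recurring, which is exactly what linearity guarantees.
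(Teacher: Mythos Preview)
Your proof is correct but takes a genuinely different route from the paper's. The paper inducts on the length of the \emph{longer} of the two sequences, peels off the heads $\trT[1]$ and $\trT[2]$ of \emph{both} sequences, and in the interesting case $\trT[1] \neq \trT[2]$ applies the diamond property once at $\markM$ to obtain a common intermediate marking $\markMii$; it then invokes \Cref{lem:occurrenceNet-reach} to extract a shorter path $\vec{\trT}$ from $\markMii$ to $\markMi$ and applies the induction hypothesis \emph{twice}, comparing $\vec{\trTi[1]}$ with $\trT[2]\vec{\trT}$ and $\vec{\trTi[2]}$ with $\trT[1]\vec{\trT}$. Your argument is asymmetric: you induct only on $\card{\vec{\trT[1]}}$, locate its head $\trT$ somewhere inside $\vec{\trT[2]}$, and bubble that occurrence to the front via an inner diamond/determinism induction along $\vec{\alpha}$ that simultaneously certifies $\trT \independent \trTi$ for every overtaken $\trTi$; this lets you apply the induction hypothesis just once, on the tails from $\markM[1]$. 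The paper's route is cleaner once \Cref{lem:occurrenceNet-reach} is available (a single diamond step instead of a chain), while yours is more self-contained and avoids the symmetric length bookkeeping; both rest on exactly the same occurrence-net facts (determinism, diamond, linearity, acyclicity) from \Cref{lem:occurrenceNet-LTS}.
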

\begin{proof}
  We proceed by induction on the length of the longest reduction among
  $\markM \trans{\vec{\trT[1]}}{\!}^* \markMi$ and 
  $\markM \trans{\vec{\trT[2]}}{\!}^* \markMi$.
  For the base case, the thesis is trivial as both $\vec{\trT[1]}$ and 
  $\vec{\trT[2]}$ are empty.
  For the inductive case, assume that $\vec{\trT[1]}$ 
  is longer or equal to $\vec{\trT[2]}$ (the other case is symmetric). 
  Let $\vec{\trT[1]} = \trT[1]\vec{\trTi[1]}$.
  We first show that $\vec{\trT[2]}$ is not empty. 
  By contradiction, if $\vec{\trT[2]}$ is empty, then $\markM = \markMi$. 
  But then, by \cref{lem:occurrenceNet-LTS:item-acyclic} of
  \Cref{lem:occurrenceNet-LTS} (acyclicity) it follows that $\vec{\trT[1]}$ is
  empty as well: contradiction. 
  Therefore, $\vec{\trT[2]} = \trT[2]\vec{\trTi[2]}$ for some $\trT[2]$ and 
  $\vec{\trTi[2]}$. 
  Clearly, $\vec{\trT[1]}$ is longer
  than $\vec{\trTi[1]}$ and $\vec{\trTi[2]}$. 
  Let $\markM \trans{\trT[1]} \markM[1]$ and $\markM \trans{\trT[2]} \markM[2]$.
  We have two subcases.
  \begin{itemize}
  \item If $\trT[1] = \trT[2]$, 
    by determinism (\Cref{lem:occurrenceNet-LTS}) 
    it follows that $\markM[1] = \markM[2]$.
    Let $\markMii = \markM[1]$.
    By the hypothesis of the~\namecref{lem:PN-trace-equiv}, we have
    $\markMii \trans{\vec{\trTi[1]}}{\!\!\!}^* \; \markMi$ and 
    $\markMii \trans{\vec{\trTi[2]}}{\!\!\!}^* \; \markMi$. 
    Then, by the induction hypothesis we have
    $\vec{\trTi[1]} \equivSeq \vec{\trTi[2]}$,
    and so the thesis $\trT[1] \vec{\trTi[1]} \equivSeq \trT[2] \vec{\trTi[2]}$
    follows since $\equivSeq$ is a congruence.
  \item If $\trT[1] \neq \trT[2]$, then by~\Cref{def:independent} it must be
    $\trT[1] \independent \trT[2]$. 
    By the diamond property (\Cref{lem:occurrenceNet-LTS}), 
    there exists $\markMii$
    such that $\markM[1] \trans{\trT[2]} \markMii$ and 
    $\markM[2] \trans{\trT[1]} \markMii$. 
    By linearity
    (item~\ref{lem:occurrenceNet-LTS:item-distinct} of \Cref{lem:occurrenceNet-LTS}), 
    $\markMi \nottrans{\;\;\trT[1]}$ and
    $\markMi \nottrans{\;\;\trT[2]}$. 
    By \Cref{lem:occurrenceNet-reach}, applied on $\markM[2]$, 
    there exists $\vec{\trT}$ such that
    $\markMii \trans{\vec{\trT}}{\!\!}^* \; \markMi$ 
    and $\card{\vec{\trT}} + 1 = \card{\vec{\trTi[2]}}$. 
    So, we are in the following situation:
    \begin{center}
      \begin{tikzcd}[column sep=normal, row sep=normal, negated/.style={
          decoration={markings,
            mark= at position 0.5 with {
              \node[transform shape] (tempnode) {$\slash$};
            }
          },
          postaction={decorate}
        }]
        & & \markMi 
        \ar[r, sloped, negated, "{\quad\trT[2]}"] & {} \\
        & \markM[1] 
        \ar[ru, sloped, "\vec{\trTi[1]}"] 
        \ar[rd, sloped, "{\trT[2]}"] & & \\
        \markM
        \ar[ru, sloped, "{\trT[1]}"]
        \ar[rd, sloped, "{\trT[2]}"]
        & 
        & \markMii
        \ar[dd, dashed, "\vec{\trT}"] 
        \\
        & \;\; \markM[2]
        \ar[ru, "{\trT[1]}"]
        \ar[rd, "\vec{\trTi[2]}"]
        \\
        & & \markMi 
        \ar[r, sloped, negated, "{\quad\trT[1]}"]
        & {}
      \end{tikzcd}
    \end{center}

    Therefore, we have that:
    \begin{align*}
      & \markM[1] \trans{\trT[2]} \trans{\vec{\trT}}{\!\!}^* \; \markMi
        \quad\text{and}\quad
        \markM[1] \trans{\vec{\trTi[1]}}{\!\!}^* \; \markMi
      \\
      & \markM[2] \trans{\trT[1]} \trans{\vec{\trT}}{\!\!}^* \; \markMi
        \quad\text{and}\quad
        \markM[2] \trans{\vec{\trTi[2]}}{\!\!}^* \; \markMi
    \end{align*}
    Note that  
    $\card{\trT[2]\vec{\trT}} = \card{\trT[1]\vec{\trT}} = \card{\vec{\trT}} + 1 
    = \card{\vec{\trTi[2]}} \leq  \card{\vec{\trTi[1]}} < \card{\vec{\trT[1]}}$.
    Hence, by applying the induction hypothesis twice:
    \[
    \trT[2]\vec{\trT} \equivSeq \vec{\trTi[1]} 
    \quad \text{and} \quad
    \trT[1]\vec{\trT} \equivSeq \vec{\trTi[2]}
    \]
    Then, since $\equivSeq$ is a congruence:
    \[
    \trT[1]\trT[2]\vec{\trT} \equivSeq \trT[1]\vec{\trTi[1]} \;\;\text{and}\;\;
    \trT[2]\trT[1]\vec{\trT} \equivSeq \trT[2]\vec{\trTi[2]}
    \]
    Since $\trT[1] \independent \trT[2]$, 
    then $\trT[1]\trT[2]\vec{\trT} \equivSeq \trT[2]\trT[1]\vec{\trT}$. 
    By transitivity of $\equivSeq$:
    \[
    \vec{\trT[1]} 
    \; = \;
    \trT[1]\vec{\trTi[1]} 
    \; \equivSeq \;
    \trT[1] \trT[2] \vec{\trT} 
    \; \equivSeq \;
    \trT[2] \trT[1] \vec{\trT} 
    \; \equivSeq \;
    \trT[2]\vec{\trTi[2]} 
    \; = \;
    \vec{\trT[2]}
    \tag*{\qed}
    \]
  \end{itemize}
\end{proof}

\begin{definition}
  \label{def:petri:trOfTrSU}
  For all sequences of transitions $\vec{\trT}$, we define 
  the set $\trOfTrSU{\vec{\trT}}$ of the transitions in $\vec{\trT}$ as:
  \[
  \trOfTrSU{\vec{\trT}} = 
  \setcomp{\trT}{\exists \vec{\trT[1]},\vec{\trT[2]}: 
    \vec{\trT} = \vec{\trT[1]}\trT\vec{\trT[2]}}
  \]
  and we extend $\trOfTrSU{}$ to step firing sequences $\vec{\trSU}$ as follows:
  \[
  \trOfTrSU{\vec{\trSU}} = 
  \bigcup \setcomp{\trSU}{\exists \vec{\trSU[1]},\vec{\trSU[2]}: 
    \vec{\trSU} = \vec{\trSU[1]}\trSU\vec{\trSU[2]}}
  \]
\end{definition}

\begin{lemma}
  \label{lem:equivSeq-implies-equal-sets}
  If $\vec{\trT} \equivSeq \vec{\trTi}$ then
  $\trOfTrSU{\vec{\trT}} = \trOfTrSU{\vec{\trTi}}$.
\end{lemma}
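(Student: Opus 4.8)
The plan is to show that the function sending a sequence of transitions $\vec{\trT}$ to the set $\trOfTrSU{\vec{\trT}}$ is constant on $\equivSeq$-classes, by exhibiting the relation ``having the same transition set'' as a monoid congruence on $\Transitions^*$ that identifies all the generators of $\equivSeq$, and then appealing to the minimality of $\equivSeq$ from \Cref{def:independent}.

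First I would record the elementary fact, immediate from \Cref{def:petri:trOfTrSU}, that $\trOfTrSU{}$ distributes over concatenation: $\trOfTrSU{\vec{u}\,\vec{v}} = \trOfTrSU{\vec{u}} \cup \trOfTrSU{\vec{v}}$ for all $\vec{u},\vec{v} \in \Transitions^*$, and $\trOfTrSU{\trT} = \setenum{\trT}$ for a single transition $\trT$. Next, define the relation $\approx$ on $\Transitions^*$ by $\vec{u} \approx \vec{v}$ iff $\trOfTrSU{\vec{u}} = \trOfTrSU{\vec{v}}$. Then $\approx$ is trivially an equivalence relation, and it is compatible with concatenation: if $\vec{u} \approx \vec{u}'$ and $\vec{v} \approx \vec{v}'$ then $\trOfTrSU{\vec{u}\,\vec{v}} = \trOfTrSU{\vec{u}} \cup \trOfTrSU{\vec{v}} = \trOfTrSU{\vec{u}'} \cup \trOfTrSU{\vec{v}'} = \trOfTrSU{\vec{u}'\,\vec{v}'}$, so $\approx$ is a congruence in the monoid $\Transitions^*$. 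Moreover, for any $\trT,\trTi \in \Transitions$ we have
\[
\trOfTrSU{\trT\trTi} = \setenum{\trT,\trTi} = \trOfTrSU{\trTi\trT},
\]
so in particular $\trT \independent \trTi$ implies $\trT\trTi \approx \trTi\trT$. Hence $\approx$ is a congruence satisfying the defining closure property of $\equivSeq$.

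Since $\equivSeq$ is the \emph{least} such congruence, we conclude $\equivSeq \,\subseteq\, \approx$, which is exactly the thesis: $\vec{\trT} \equivSeq \vec{\trTi}$ implies $\trOfTrSU{\vec{\trT}} = \trOfTrSU{\vec{\trTi}}$. There is essentially no obstacle here; the only mild care needed is the congruence check for $\approx$, which reduces entirely to the decomposition $\trOfTrSU{\vec{u}\,\vec{v}} = \trOfTrSU{\vec{u}} \cup \trOfTrSU{\vec{v}}$. An equally routine alternative is a direct induction on a derivation of $\vec{\trT} \equivSeq \vec{\trTi}$ (base case: a single transposition of independent transitions; inductive cases: reflexivity, symmetry, transitivity, and closure under left/right concatenation), invoking the same decomposition at each step; this avoids naming $\approx$ but is otherwise identical.
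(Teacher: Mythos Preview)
Your proof is correct and is essentially a spelled-out version of the paper's own argument, which reads in its entirety ``Trivial by~\Cref{def:independent}.'' You have simply unpacked what that triviality amounts to: since $\equivSeq$ is defined as the least congruence containing the transpositions of independent pairs, it suffices to note that ``same transition set'' is a congruence closed under such transpositions.
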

\begin{proof}
  Trivial by~\Cref{def:independent}.
  \qed
\end{proof}

\begin{lemma}
  \label{lem:equal-Tr-implies-equal-marking}
  Let $\netN$ be an occurrence net,
  and let $\markM$ be a reachable marking.
  If $\markM \trans{\vec{\trSU[1]}} \markM[1]$, 
  $\markM \trans{\vec{\trSU[2]}} \markM[2]$ and 
  $\trOfTrSU{\vec{\trSU[1]}} = \trOfTrSU{\vec{\trSU[2]}}$, then
  $\markM[1] = \markM[2]$.
\end{lemma}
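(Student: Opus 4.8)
The plan is to serialize each step firing sequence into an ordinary (single-transition) firing sequence with the same underlying set of transitions, and then to observe that in an occurrence net the marking reached from a fixed reachable marking is completely determined by the \emph{set} of transitions fired; hence the two serializations, having the same transition set, reach the same marking.

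First I would prove a serialization claim: if $\markM \trans{\trSU} \markMi$ with $\trSU = \setenum{\trT[1],\ldots,\trT[k]}$, then $\trT[1]\cdots\trT[k]$ is a firing sequence from $\markM$ reaching $\markMi$. Indeed, since $\pre{\trSU} = \sum_{j}\pre{\trT[j]} \leq \markM$ and presets are non-negative, a short induction shows that after firing $\trT[1],\ldots,\trT[i]$ the current marking dominates $\markM - \sum_{j\leq i}\pre{\trT[j]}$, which in turn dominates $\pre{\trT[i+1]}$, so $\trT[i+1]$ is enabled; and, multiset addition being commutative and associative, the marking reached after firing all of $\trSU$ is $\markM - \pre{\trSU} + \post{\trSU} = \markMi$. (Alternatively one may permute the transitions of $\trSU$ using~\Cref{lem:PN-mstep-indepenent} and the diamond property of~\Cref{lem:occurrenceNet-LTS}.) Iterating this over the steps of $\vec{\trSU[1]}$ and of $\vec{\trSU[2]}$ yields ordinary firing sequences $\vec{\trT[1]},\vec{\trT[2]}$ with $\markM \trans{\vec{\trT[1]}}{}^{*} \markM[1]$, $\markM \trans{\vec{\trT[2]}}{}^{*} \markM[2]$, $\trOfTrSU{\vec{\trT[1]}} = \trOfTrSU{\vec{\trSU[1]}}$ and $\trOfTrSU{\vec{\trT[2]}} = \trOfTrSU{\vec{\trSU[2]}}$; write $T$ for this set, which by hypothesis equals both.

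Next I would compute the marking reached by an arbitrary firing sequence $\trTi[1]\cdots\trTi[m]$ from $\markM$: each single firing maps a marking $\markMi$ to $\markMi - \pre{\trTi[i]} + \post{\trTi[i]}$, so, again by commutativity and associativity of multiset addition along a valid firing sequence, the final marking is $\markM + \sum_{i=1}^{m}\bigl(\post{\trTi[i]} - \pre{\trTi[i]}\bigr)$. Since $\markM$ is reachable, linearity in occurrence nets (item~\ref{lem:occurrenceNet-LTS:item-distinct} of~\Cref{lem:occurrenceNet-LTS}) gives that the $\trTi[i]$ are pairwise distinct, so this sum equals $\sum_{\trT \in \trOfTrSU{\trTi[1]\cdots\trTi[m]}}\bigl(\post{\trT} - \pre{\trT}\bigr)$, which depends only on $\markM$ and on the transition set. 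Applying this to $\vec{\trT[1]}$ and to $\vec{\trT[2]}$ and using $\trOfTrSU{\vec{\trT[1]}} = T = \trOfTrSU{\vec{\trT[2]}}$ gives
\[
\markM[1] \;=\; \markM + \sum_{\trT \in T}\bigl(\post{\trT} - \pre{\trT}\bigr) \;=\; \markM[2],
\]
which is the claim. The only delicate point is the bookkeeping of the serialization step — checking that all intermediate configurations stay genuine markings — and this is routine given the enabledness hypothesis $\pre{\trSU} \leq \markM$ at each step; everything else is commutativity of multiset addition together with the linearity of occurrence nets.
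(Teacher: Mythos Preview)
Your argument is correct and takes a genuinely different, more elementary route than the paper's. The paper also begins by serializing the two step firing sequences, but then proceeds by induction on the length of $\vec{\trT[1]}$: in the inductive step it compares the first transitions, and when they differ it invokes the diamond property of~\Cref{lem:occurrenceNet-LTS}, then~\Cref{lem:occurrenceNet-reach} to realign the two computations, then~\Cref{lem:PN-trace-equiv} and~\Cref{lem:equivSeq-implies-equal-sets} to match up transition sets before applying the induction hypothesis. Your proof bypasses all of this machinery: you compute the reached marking directly as $\markM + \sum_i(\post{\trTi[i]} - \pre{\trTi[i]})$ (valid as a computation in $\mathbb{Z}^{\Places}$ whose result is known to be a marking), observe via linearity that the transitions in each serialized sequence are pairwise distinct, and conclude that the sum depends only on the common transition set. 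This uses only item~\ref{lem:occurrenceNet-LTS:item-distinct} of~\Cref{lem:occurrenceNet-LTS} and basic multiset arithmetic, whereas the paper's route exercises the diamond property, \Cref{lem:occurrenceNet-reach}, and the trace-equivalence lemma~\Cref{lem:PN-trace-equiv}; the paper's approach has the side benefit of developing those reusable tools, but for the present lemma your direct counting argument is shorter and just as rigorous.
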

\begin{proof}
  Since $\markM \trans{\vec{\trSU[1]}} \markM[1]$ and 
  $\markM \trans{\vec{\trSU[2]}} \markM[2]$, there exist
  sequentialisation $\vec{\trT[1]}$ of $\vec{\trSU[1]}$ 
  and $\vec{\trT[2]}$ of $\vec{\trSU[2]}$ such that
  $\markM \trans{\vec{\trT[1]}} \markM[1]$ and
  $\markM \trans{\vec{\trT[2]}} \markM[2]$.
  Since by hypothesis
  $\trOfTrSU{\vec{\trSU[1]}} = \trOfTrSU{\vec{\trSU[2]}}$, then
  $\trOfTrSU{\vec{\trT[1]}} = \trOfTrSU{\vec{\trT[2]}}$.
  We proceed by induction on the lenght of $\vec{\trT[1]}$.
  The base case is trivial, as $\markM = \markM[1] = \markM[2]$.
  For the inductive case, suppose $\vec{\trT[1]} = \trT[1]\vec{\trTi[1]}$,
  with $\card{\vec{\trTi[1]}} = n$. 
  By determinism, there exists a unique marking $\markMi[1]$
  such that $\markM \trans{\trT[1]} \markMi[1]$ (a single step).
  Since $\trOfTrSU{\vec{\trSU[1]}} = 
  \trOfTrSU{\vec{\trSU[2]}}$, it must be $\vec{\trT[2]} = \trT[2]\vec{\trTi[2]}$,
  with $\card{\vec{\trTi[2]}} = n$. 
  Let $\markMi[2]$ be the unique
  marking such that $\markM \trans{\trT[2]} \markMi[2]$ (a single step).

  \smallskip\noindent
  There are two subcases.
  \begin{itemize}
  \item If $\trT[1] = \trT[2]$, it must be $\markMi[1] = \markMi[2]$, and hence
    the thesis follows immediately by the induction hypothesis. 
  \item If $\trT[1] \neq \trT[2]$,
    by the diamond property 
    (\cref{lem:occurrenceNet-LTS:item-determinism} of~\Cref{lem:occurrenceNet-LTS}),
    there exists $\markMi$ such that $\markMi[1] \trans{\trT[2]} \markMi$ and
    $\markMi[2] \trans{\trT[1]} \markMi$. Since $\trT[2] \in
    \trOfTrSU{\vec{\trTi[1]}}$, by linearity
    (\cref{lem:occurrenceNet-LTS:item-distinct} of~\Cref{lem:occurrenceNet-LTS})
    it follows that
    $\markM[1] \nottrans{\;\;\trT[2]}$, and hence, 
    by applying \Cref{lem:occurrenceNet-reach} on $\markMi[1]$
    we obtain $\markMi \trans{\vec{\trTii[1]}} \markM[1]$ 
    for some $\vec{\trTii[1]}$. Summing up, we have that:
    \[\markMi[1] \trans{\vec{\trTi[1]}} \markM[1] \qquad \text{and} \qquad
    \markMi[1] \trans{\trT[2]\vec{\trTii[1]}} \markM[1]\]
    Then, by \Cref{lem:PN-trace-equiv}, 
    $\vec{\trTi[1]} \equivSeq \trT[2]\vec{\trTii[1]}$,
    and hence:
    \[
    \vec{\trT[1]} = \trT[1]\vec{\trTi[1]} \equivSeq 
    \trT[1]\trT[2]\vec{\trTii[1]}
    \]
    By \Cref{lem:equivSeq-implies-equal-sets}:
    \[
    \trOfTrSU{\vec{\trT[1]}} = \trOfTrSU{\trT[1]\trT[2]\vec{\trTii[1]}}
    \]
    Similarly, we can conclude that $\markMi \trans{\vec{\trTii[2]}} \markM[2]$ 
    for some $\vec{\trTii[2]}$ and that:
    \[
    \trOfTrSU{\vec{\trT[2]}} = \trOfTrSU{\trT[2]\trT[1]\vec{\trTii[2]}}
    \]
    Since $\trOfTrSU{\vec{\trT[1]}} = \trOfTrSU{\vec{\trT[2]}}$,
    we can conclude:
    \[\trOfTrSU{\vec{\trTii[1]}} = \trOfTrSU{\vec{\trTii[2]}}\]
    Since $\card{\vec{\trTii[1]}} = n - 1 < n + 1 = \card{\vec{\trT[1]}}$,
    the thesis follows by the induction hypothesis.
    \qed
  \end{itemize}
\end{proof}

\begin{lemma}
  \label{lem:PN-independent-implies-swap}
  Let $(\txT,i)$ and $(\txTi,j)$ be transitions of $\PNet{\pswapWR}{\bcB}$.
  Then:
  \[
  (\txT,i) \independent (\txTi,j)
  \quad \implies \quad
  \txT \pswapWR \txTi
  \]
\end{lemma}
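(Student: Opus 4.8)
The plan is to derive $\txT \pswapWR \txTi$ directly from the combinatorial structure of the net, using the contrapositive of \Cref{lem:PNet-leq-implies-not-independent} together with the explicit definition of the ordering $<$ in \Cref{def:bc-to-pnet}. Since $\pswapWR$ is a symmetric relation (immediate from \Cref{def:pswapWR}), there is no real asymmetry to worry about, so the argument is short.

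First I would unfold the hypothesis: by \Cref{def:independent}, $(\txT,i) \independent (\txTi,j)$ means that $(\txT,i) \neq (\txTi,j)$ and there is a reachable marking $\markM$ of $\PNet{\pswapWR}{\bcB}$ with $\markM \trans{(\txT,i)}$ and $\markM \trans{(\txTi,j)}$. Because the transitions carry their blockchain index, $(\txT,i)\neq(\txTi,j)$ forces $i\neq j$; without loss of generality assume $i<j$.

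Next I would invoke \Cref{lem:PNet-leq-implies-not-independent} in contrapositive form. Since $\markM$ is reachable, $\markM \trans{(\txT,i)}$, and $\markM \trans{(\txTi,j)}$ (in particular $\markM$ is \emph{not} such that $\markM \nottrans{(\txTi,j)}$), the lemma tells us that $(\txT,i) < (\txTi,j)$ cannot hold. Now recall from \Cref{def:bc-to-pnet} that, for $\relR = \pswapWR$, we have $(\txT,i) < (\txTi,j) \eqdef (i<j) \land \neg(\txT \pswapWR \txTi)$. Since $i<j$ holds but the conjunction fails, it must be $\neg\neg(\txT \pswapWR \txTi)$, i.e.\ $\txT \pswapWR \txTi$, which is the thesis.

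The only delicate points are purely bookkeeping: checking that the marking supplied by independence is the same reachable marking required by \Cref{lem:PNet-leq-implies-not-independent} (it is, by \Cref{def:independent}), and that taking $i<j$ loses no generality because swapping the roles of the two transitions only swaps which direction of the inequality we use, and $\pswapWR$ is symmetric. I do not expect any genuine obstacle here; the substance of the argument is already carried by \Cref{lem:PNet-leq-implies-not-independent}, and this lemma is essentially its converse read through the definition of $<$.
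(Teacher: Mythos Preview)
Your proof is correct and essentially the same as the paper's: both unfold the definition of independence, observe $i\neq j$, and then use \Cref{lem:PNet-leq-implies-not-independent} together with the definition of $<$ in \Cref{def:bc-to-pnet} to conclude $\txT \pswapWR \txTi$. The only cosmetic difference is that you take $i<j$ without loss of generality and argue by contrapositive, whereas the paper considers both orderings and argues by contradiction.
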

\begin{proof}
  By~\Cref{def:independent},
  $(\txT,i) \independent (\txTi,j)$ implies that
  $(\txT,i) \neq (\txTi,j)$ and there exists some reachable marking
  $\markM$ such that $\markM \trans{(\txT,i)}$ and $\markM \trans{(\txTi,j)}$.
  By contradiction, assume that $\neg (\txT \pswapWR \txTi)$.
  Then, since $i < j$ or $j > i$, by \Cref{def:bc-to-pnet}
  we would have that $(\txT,i) < (\txTi,j)$ or $(\txTi,j) < (\txT,i)$.
  Then, by \Cref{lem:PNet-leq-implies-not-independent} 
  we obtain a contradiction.
  \qed
\end{proof}


\begin{definition}
  Let $\PNet{\relR}{\bcB} = (\Places, \Transitions, \Arcs, \markM[0])$.
  We define the function $\txOfTr{} : \Transitions \rightarrow \Tx$ as 
  \(
  \txOfTr{\txT,i} = \txT
  \).
  We then extend $\txOfTr{}$ to a function from steps to 
  multisets of transactions as follows:
  \[
  \txOfTr{\emptyset} = \emptymset 
  \qquad 
  \txOfTr{\trSU \cup \setenum{\trT}} = 
  \msetenum{\txOfTr{\trT}} + \txOfTr{\trSU}\;\;
  \]
  Finally, we extend $\txOfTr{}$ to finite sequences of steps as follows:
  \[
  \txOfTr{\bcEmpty} = \bcEmpty \qquad 
  \txOfTr{\trSU\vec{\trSU}} = \txOfTr{\trSU}\txOfTr{\vec{\trSU}}
  \]
\end{definition}

\begin{lemma}
  \label{lem:petri-equiv-par-block}
  Let $\PNet{\relR}{\bcB} = (\Places, \Transitions, \Arcs, \markM[0])$,
  and let $\vec{\trSU}$ be a step firing sequence. Then, for all $\WR{},\bcSt$:
  \[
  \msem{\WR{}}{\bcSt}{\vec{\trSU}} = \msem{\WR{}}{\bcSt}{\txOfTr{\vec{\trSU}}}
  \]
\end{lemma}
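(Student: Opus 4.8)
The plan is to prove this by a straightforward structural induction on the length of the step firing sequence $\vec{\trSU}$: the statement is essentially a compatibility lemma relating the two previously defined semantics (step firing sequences on one side, parallelized blockchains on the other) through the translation $\txOfTr{}$. The base case $\vec{\trSU} = \bcEmpty$ is immediate, since $\txOfTr{\bcEmpty} = \bcEmpty$ and both semantics return $\bcSt$ unchanged.

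The heart of the argument is a single-step identity that I would isolate first: for every step $\trSU$,
\[
\msem{\WR{}}{\bcSt}{\trSU} \; = \; \msem{\WR{}}{\bcSt}{\txOfTr{\trSU}},
\]
i.e.\ the one-step semantics of a set of transitions coincides with the multiset semantics (\Cref{def:multiset-sem}) of the associated multiset of transactions. Unfolding both definitions, each side is $\bcSt$ updated by an iterated merge $\bigoplus$; on the left the family of updates is indexed by the pairs $(\txT,i) \in \trSU$, and on the right by the multiset $\txOfTr{\trSU}$, which by the recursive clauses defining $\txOfTr{}$ on steps is exactly the image multiset of $\trSU$ along $\trT \mapsto \txOfTr{\trT}$. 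By \Cref{lem:merge-comm-monoid} ($\mrg$ is commutative and associative), the value of $\bigoplus$ depends only on the underlying multiset of its arguments, so the two iterated merges agree — and in particular one is defined precisely when the other is.

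Given this, the inductive case is routine chaining: writing $\vec{\trSU} = \trSU \vec{\trSUi}$ and $\bcSti = \msem{\WR{}}{\bcSt}{\trSU}$, I would unfold the step-sequence semantics to $\msem{\WR{}}{\bcSti}{\vec{\trSUi}}$, apply the induction hypothesis to the shorter sequence $\vec{\trSUi}$ to get $\msem{\WR{}}{\bcSti}{\txOfTr{\vec{\trSUi}}}$, rewrite $\bcSti = \msem{\WR{}}{\bcSt}{\txOfTr{\trSU}}$ using the single-step identity, and fold back using \Cref{def:txpar:pbc-semantics} together with the clause $\txOfTr{\trSU \vec{\trSUi}} = \txOfTr{\trSU}\txOfTr{\vec{\trSUi}}$.

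There is no substantial obstacle here; it is a bookkeeping lemma. The only points needing a little care are (i) the reindexing in the single-step identity — making precise that merging over the indexed family $\setcomp{\WR{\bcSt,\txT}}{(\txT,i) \in \trSU}$ equals merging over the multiset $\txOfTr{\trSU}$, which is where commutativity and associativity of $\mrg$ enter — and (ii) tracking the partiality of $\mrg$ throughout, so that every equality is read as ``both sides defined and equal, or both undefined''. Note also that the hypothesis that $\vec{\trSU}$ is an actual step firing sequence of the net plays no role in this statement: it holds for any finite sequence of sets of transitions.
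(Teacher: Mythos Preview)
Your proposal is correct. The paper actually states this lemma without proof, treating it as immediate from the definitions; your induction on the length of $\vec{\trSU}$ together with the single-step identity (reindexing the $\bigoplus$ from the set $\trSU$ to the multiset $\txOfTr{\trSU}$ via commutativity and associativity of $\mrg$) is precisely the natural way to spell it out, and your remarks on partiality and on the irrelevance of the ``step firing sequence'' hypothesis are both accurate.
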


\begin{lemma}
  \label{lem:equivSeq-implies-equivStSeq}
  If $\vec{\trT} \equivSeq \vec{\trTi}$ in $\PNet{\pswapWR}{\bcB}$, then 
  $\txOfTr{\vec{\trT}} \equivStSeq[\pswapWR] \txOfTr{\vec{\trTi}}$.
\end{lemma}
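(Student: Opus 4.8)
The plan is to derive the statement directly from the defining property of the Mazurkiewicz equivalence, after noting that $\txOfTr{}$ acts as a monoid homomorphism from $\Transitions^*$ to $\Tx^*$: it sends a transition $(\txT,i)$ to the transaction $\txT$, and by the last clause of its definition it satisfies $\txOfTr{\bcEmpty} = \bcEmpty$ and $\txOfTr{\vec{\trT[1]}\vec{\trT[2]}} = \txOfTr{\vec{\trT[1]}}\,\txOfTr{\vec{\trT[2]}}$. I would consider the pullback relation $\mathcal{S}$ on $\Transitions^*$ defined by $\vec{\trT} \mathrel{\mathcal{S}} \vec{\trTi}$ iff $\txOfTr{\vec{\trT}} \equivStSeq[\pswapWR] \txOfTr{\vec{\trTi}}$. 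Since, by \Cref{def:independent}, $\equivSeq$ is the \emph{least} congruence on $\Transitions^*$ closed under $\trT \independent \trTi \implies \trT\trTi \equivSeq \trTi\trT$, it is enough to show that $\mathcal{S}$ is a congruence on $\Transitions^*$ containing all those generating pairs; this gives $\equivSeq \,\subseteq\, \mathcal{S}$, which is exactly the thesis.

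First I would check that $\mathcal{S}$ is a congruence. It is an equivalence because $\equivStSeq[\pswapWR]$ is one. It is compatible with concatenation because $\equivStSeq[\pswapWR]$ is a congruence on $\Tx^*$ and $\txOfTr{}$ preserves concatenation: from $\vec{\trT[1]} \mathrel{\mathcal{S}} \vec{\trTi[1]}$ and $\vec{\trT[2]} \mathrel{\mathcal{S}} \vec{\trTi[2]}$ one gets $\txOfTr{\vec{\trT[1]}\vec{\trT[2]}} = \txOfTr{\vec{\trT[1]}}\,\txOfTr{\vec{\trT[2]}} \equivStSeq[\pswapWR] \txOfTr{\vec{\trTi[1]}}\,\txOfTr{\vec{\trTi[2]}} = \txOfTr{\vec{\trTi[1]}\vec{\trTi[2]}}$. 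Then I would verify that $\mathcal{S}$ contains the generators: given $\trT = (\txT,i)$ and $\trTi = (\txTi,j)$ with $\trT \independent \trTi$, \Cref{lem:PN-independent-implies-swap} yields $\txT \pswapWR \txTi$, so $\txT\,\txTi \equivStSeq[\pswapWR] \txTi\,\txT$ by \Cref{def:mazurkiewicz}; since $\txOfTr{\trT\trTi} = \txT\,\txTi$ and $\txOfTr{\trTi\trT} = \txTi\,\txT$, this is exactly $\trT\trTi \mathrel{\mathcal{S}} \trTi\trT$.

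Equivalently, the same argument can be written as a rule induction on the derivation of $\vec{\trT} \equivSeq \vec{\trTi}$, with the base case being the one above and the cases for reflexivity, symmetry, transitivity, and closure under concatenation handled verbatim via the congruence properties of $\equivStSeq[\pswapWR]$ and the homomorphism property of $\txOfTr{}$. I do not expect any real obstacle here: the only mathematical content is \Cref{lem:PN-independent-implies-swap}, which is already established, while the rest is the routine fact that a monoid homomorphism carrying the generators of one congruence into a second congruence carries the whole first congruence into the second.
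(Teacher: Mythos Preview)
Your proposal is correct and follows essentially the same approach as the paper: the paper defines the very same pullback relation (calling it $\equivSeq'$ instead of $\mathcal{S}$), observes it is a congruence, uses \Cref{lem:PN-independent-implies-swap} to show it contains the generating swaps, and concludes $\equivSeq \subseteq \equivSeq'$ by minimality. Your write-up is slightly more explicit about why $\txOfTr{}$ is a monoid homomorphism and why the pullback is a congruence, but the argument is the same.
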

\begin{proof}
  Define: 
  \[
  \equivSeq' = \setcomp{(\vec{\trT},\vec{\trTi})}{\txOfTr{\vec{\trT}} 
    \equivStSeq[\pswapWR] \txOfTr{\vec{\trTi}}}
  \] 
  It suffice to show that $\equivSeq\; \subseteq\; \equivSeq'$.
  Note that $\equivSeq'$ is a congruence satisfying:
  \[
  \txT \pswapWR \txTi \;\;\implies\;\; 
  (\txT,i)(\txTi,j) \equivSeq' (\txTi,j)(\txT,i)
  \]
  But then, by \Cref{lem:PN-independent-implies-swap}, 
  it follows that $\equivSeq'$
  also satisfies:
  \begin{equation}
    \label{lem:equivSeq-implies-equivStSeq:eq-1}
    (\txT,i) \independent (\txTi,j) \;\;\implies\;\; 
    (\txT,i)(\txTi,j) \equivSeq' (\txTi,j)(\txT,i)
  \end{equation}
  Since $\equivSeq$ is the smallest congruence satisfying
  \cref{lem:equivSeq-implies-equivStSeq:eq-1}, we have that
  $\equivSeq\; \subseteq\; \equivSeq'$, as required.
  \qed
\end{proof}

\thbctopnet*
\begin{proof}
  For item~\ref{th:bc-to-pnet:confluence},
  assume that $\markM[0] \trans{\vec{\trSU}} \markM$ and 
  $\markM[0] \trans{\vec{\trSUi}} \markM$. 
  A standard result from Petri nets theory ensures that
  there exists sequentializations $\vec{\trT}$ of $\vec{\trSU}$
  and $\vec{\trTi}$ of $\vec{\trSUi}$ such that:
  \[
  \markM[0] \trans{\vec{\trT}} \markM
  \quad \text{and} \quad
  \markM[0] \trans{\vec{\trTi}} \markM
  \]
  By \Cref{lem:PN-trace-equiv},
  it must be $\vec{\trT} \equivSeq \vec{\trTi}$.
  Then, by \Cref{lem:equivSeq-implies-equivStSeq}: 
  \[\txOfTr{\vec{\trT}} \equivStSeq[\pswap{}{}] \txOfTr{\vec{\trTi}}\]
  By \Cref{lem:equivRel-implies-equiv}:
  \[
  \semBc{\txOfTr{\vec{\trT}}}{\bcSt} = \semBc{\txOfTr{\vec{\trTi}}}{\bcSt}
  \]
  By \Cref{lem:PN-mstep-indepenent,lem:PN-independent-implies-swap},
  it follows that all multisets of transactions in $\txOfTr{\vec{\trSU}}$,
  as well as those in $\txOfTr{\vec{\trSUi}}$,
  are strongly swappable.
  %
  %
  Therefore, by \Cref{th:seq-union}:
  \[
  \msem{\WRmin{}}{\bcSt}{\txOfTr{\vec{\trSU}}} = 
  \msem{\WRmin{}}{\bcSt}{\txOfTr{\vec{\trSUi}}}
  \]
  Then, by \Cref{lem:petri-equiv-par-block}:
  \[
  \msem{\WRmin{}}{\bcSt}{\vec{\trSU}} = 
  \msem{\WRmin{}}{\bcSt}{\vec{\trSUi}}
  \]
  \medskip
  For item~\ref{th:bc-to-pnet:maximal},
  let $\vec{\trSUi} = \msetenum{\trT[1]}\hdots\msetenum{\trT[n]}$,
  where $n = \card{\bcB}$. It is trivial to see that $\vec{\trSUi}$ is maximal 
  and that $\bcB \seqn \txOfTr{\vec{\trSUi}}$. 
  By \Cref{th:seq-union}: 
  \[\msem{\WRmin{}}{\bcSt}{\txOfTr{\vec{\trSUi}}} = \semBc{\bcB}{\bcSt}\]
  Since $\vec{\trSU}$ and $\vec{\trSUi}$ are both maximal, by 
  \Cref{lem:equal-Tr-implies-equal-marking} and 
  by item~\ref{th:bc-to-pnet:confluence}, it follows that:  
  \[
  \msem{\WRmin{}}{\bcSt}{\vec{\trSUi}} = \msem{\WRmin{}}{\bcSt}{\vec{\trSU}}
  \]
  Since $\msem{\WRmin{}}{\bcSt}{\vec{\trSUi}} = 
  \msem{\WRmin{}}{\bcSt}{\txOfTr{\vec{\trSUi}}}$ 
  (by \Cref{lem:petri-equiv-par-block}), and so:
  \[
  \msem{\WRmin{}}{\bcSt}{\vec{\trSU}} = \semBc{\bcB}{\bcSt}
  \tag*{$\qed$}
  \]
\end{proof}

  \newpage
  \section{Full implementation of the ERC721 Token}
\label{sec:full-erc721}

\lstinputlisting[language=solidity,numbersep=10pt]{erc721.sol}

We have the following safe approximations of read/written keys of the \code{Token} contract
(assuming the straightforward generalization of the semantic domain of~\Cref{sec:transactions} 
to deal with multiple key-value stores:
\[
\begin{array}{rcll}
  \QmvA[r]^1
  & = & \setenum{\codeVal{exists[1]}, \codeVal{owner[1]}, \codeVal{balance[\pmvA]}, \codeVal{balance[\pmv{P}]}}
  & \models^{r} \txT[1]
  \\
  \QmvA[w]^1
  & = & \setenum{\codeVal{owner[1]}, \codeVal{balance[\pmvA]}, \codeVal{balance[\pmv{P}]}} 
  & \models^{w} \txT[1]
  \\
  \QmvA[r]^2
  & = & \emptyset
  & \models^{r} \txT[2]
  \\
  \QmvA[w]^2
  & = & \setenum{\codeVal{operatorApprovals[A][B]}} 
  & \models^{w} \txT[2]
  \\
  \QmvA[r]^3
  & = & \setenum{\codeVal{exists[2]}, \codeVal{owner[2]}, \codeVal{balance[\pmvA]}, \codeVal{balance[\pmv{Q}]}}
  & \models^{r} \txT[3]
  \\
  \QmvA[w]^3
  & = & \setenum{\codeVal{owner[2]}, \codeVal{balance[\pmvA]}, \codeVal{balance[\pmv{Q}]}}
  & \models^{w} \txT[3] 
  \\
  \QmvA[r]^4
  & = & \setenum{\codeVal{exists[1]}, \codeVal{owner[1]}, \codeVal{balance[\pmv{P}]}, \codeVal{balance[\pmvB]}}
  & \models^{r} \txT[4]
  \\
  \QmvA[w]^4
  & = & \setenum{\codeVal{owner[1]}, \codeVal{balance[\pmv{P}]}, \codeVal{balance[\pmv{B}]}} 
  & \models^{w} \txT[4]
  \\
\end{array}
\]

}
{}

\end{document}
